\newcounter{showLineNumbersLeft}
\newcounter{showIndexEntries}
\newcounter{showLabels}
\colorlet{DMnormalbackcolor}{gray!25}
\colorlet{DMlightbackcolor}{gray!10}
\colorlet{DMmediumbackcolor}{gray!20}
\colorlet{DMdarkbackcolor}{gray!30}
\colorlet{DMmediumforecolor}{black!57}
\colorlet{DMlightforecolor}{black!46}
\newlength{\ksize} \setlength{\ksize}{1.5mm}
\definecolor{darkgreen}{rgb}{0.0,0.7,0.0}
\newenvironment{pss}{\color{red}} {}
\newenvironment{vd}{\color{blue}} {}
\newenvironment{ipo}{\color{magenta}} {}
\newcommand{\set}[2]{\{{#1}\,\mid\,{#2}\}}
\newcommand{\os}[1]{\{{#1}\}}
\newcommand{\dedfin}{De\-de\-kind-finite\xspace}
\newcommand{\enu}{enumerable representation\xspace}\newcommand{\Wlog}{w.l.o.g.\@\xspace}
\newcommand{\WLOG}{W.l.o.g.\@\xspace}
\newcommand{\SNF}{Smith normal form\xspace}\newcommand{\fg}{f.g.\@\xspace}
\newcommand\lcm{\mathop\text{lcm}}
\newcommand{\tra}{transition\xspace}
\newcommand{\tras}{transitions\xspace}
\newcommand{\sztra}{rank-$1$ transition\xspace}
\newcommand{\sztras}{rank-$1$ transitions\xspace}
\newcommand{\invol}{involution\xspace}
\newcommand{\wrt}{with respect to{}\xspace}
\newcommand{\arc}[1]{\overset{#1}\ra}
\newcommand{\larc}[1]{\overset{#1}\longrightarrow}
\newcommand{\Arc}[1]{\overset{#1}\longrightarrow}
\newcommand\lds{,\ldots ,} 
\newcommand{\sse}{\subseteq}
\newcommand{\es}{\emptyset}
\newcommand{\sm}{\setminus}
\newcommand{\wt}[1]{\widetilde{#1}}
\newcommand{\homo}{homo\-mor\-phism\xspace}
\newcommand{\iso}{iso\-mor\-phism\xspace}
\newcommand{\epi}{epi\-mor\-phism\xspace}
\newcommand{\BS}{\ensuremath{\mathop\mathrm{BS}}}  
\newcommand\SL{\mathop\mathrm{SL}}
\newcommand\UT{\mathop\mathrm{UT}}
\newcommand\GL{\mathop\mathrm{GL}}
\newcommand\PSL{\mathop\mathrm{PSL}}
\newcommand\SLZ{\SL(2,\Z)}
\newcommand\GLZ{\GL(2,\Z)}
\newcommand\GLQ{\GL(2,\Q)}
\newcommand\MZ{\mathbb{Z}^{2\times 2}}
\newcommand\MQ{\mathbb{Q}^{2\times 2}}
\newcommand\MnR{R^{n\times n}}
\newcommand\MnZ{\Z^{n\times n}}
\newcommand\MnQ{\Q^{n\times n}}
\newcommand{\VDmatrix}[4]{\begin{pmatrix}{#1}&{#2}\\{#3}&{#4}\end{pmatrix}}
\newcommand{\vdmatrix}[4]{\left(\begin{smallmatrix}#1 & #2\\ #3 & #4\end{smallmatrix}\right)}
\newcommand{\N}{\mathbb{N}}
\newcommand{\Z}{\mathbb{Z}}
\newcommand{\Q}{\mathbb{Q}}
\newcommand{\R}{\mathbb{R}}
\newcommand{\DTIME}{\mathrm{DTIME}}
\newcommand{\NTIME}{\mathrm{NTIME}}
\newcommand{\PTIME}{\mathbf{P}}
\renewcommand{\P}{\PTIME}
\newcommand{\NP}{\mathbf{NP}}
\newcommand{\DEXPTIME}{\mathbf{EXPTIME}}\newcommand{\NEXPTIME}{\mathbf{NEXPTIME}}
\newcommand{\Oh}{\mathop{\mathcal{O}}}
 \newcommand{\cA}{\mathcal{A}}
\newcommand{\cB}{\mathcal{B}}
\newcommand{\cC}{\mathcal{C}}
\newcommand{\cL}{\mathcal{L}}
\newcommand{\cP}{\mathcal{P}}
\newcommand{\cQ}{\mathcal{Q}}
\newcommand{\cR}{\mathcal{R}}
\newcommand{\cT}{\mathcal{T}}
\newcommand{\Sig}{\Sigma}
\newcommand{\Del}{\Delta}
\newcommand{\Gam}{\Gamma}
\newcommand{\del}{\delta}
\newcommand{\eps}{\varepsilon}
\newcommand{\alp}{\alpha}
\newcommand{\bet}{\beta}
\newcommand{\ovv}{\overline{\phantom a}}
\newcommand{\ov}[1]{\overline{#1}}
\newcommand{\oi}[1]{{#1}^{-1}}
\newcommand{\id}{\mathrm{id}}
\newcommand{\abs}[1]{\left|\mathinner{#1}\right|}
\newcommand{\Abs}[1]{\left\Vert\mathinner{#1}\right\Vert}
\newcommand{\Abm}[1]{{\Vert\mathinner{#1}\Vert}_{\text{max}}}
\newcommand{\Abbin}[1]{{\Vert\mathinner{#1}\Vert}_{\text{bin}}}
\newcommand{\gen}[1]{\left< \mathinner{#1} \right>}
\newcommand{\Rat}{\operatorname{Rat}}
\newcommand{\Rec}{\operatorname{Rec}}
\newcommand{\RAT}{\operatorname{Rat}}
\newcommand{\FRAT}[2]{{\operatorname{FRat}}(#1,#2)}
\newcommand{\REC}{\operatorname{Rec}}
\newcommand{\REG}{\operatorname{Reg}}
\newcommand\ra{\longrightarrow}
\def\math#1{\ifmmode #1\else \mbox{$#1$} \fi}
\newcommand{\IFF }{if and only if\xspace}
\newcommand{\Schuetz}{Sch{\"u}tz\-en\-ber\-ger\xspace}
\newcommand*{\TDFTA}{\@ifnextchar{.}{$\downarrow$FTA}{$\downarrow$FTA\@\xspace}}
\newcommand*{\BUFTA}{\@ifnextchar{.}{$\uparrow$FTA}{$\uparrow$FTA\@\xspace}}
\newcommand*{\TDDFTA}{\@ifnextchar{.}{$\downarrow$DFTA}{$\downarrow$DFTA\@\xspace}}
\newcommand*{\BUDFTA}{\@ifnextchar{.}{$\uparrow$DFTA}{$\uparrow$DFTA\@\xspace}}
\newcommand*{\FTA}{\@ifnextchar{.}{FTA}{FTA\@\xspace}}
\newcommand*{\eg}{\@ifnextchar{.}{e.\,g}{e.\,g.\@\xspace}}
\newcommand*{\ie}{\@ifnextchar{.}{i.\,e}{i.\,e.\@\xspace}}
\newcommand*{\OBdA}{\@ifnextchar{.}{O.\,B.\,d.\,A}{O.\,B.\,d.\,A.\@\xspace}}
\newcommand*{\oBdA}{\@ifnextchar{.}{o.\,B.\,d.\,A}{o.\,B.\,d.\,A.\@\xspace}}
\newcommand*{\usw}{\@ifnextchar{.}{usw}{usw.\@\xspace}}
\renewcommand*{\dh}{\@ifnextchar{.}{d.\,h}{d.\,h.\@\xspace}}
\newcommand*{\zB}{\@ifnextchar{.}{z.\,B}{z.\,B.\@\xspace}}
\newcommand*{\idR}{\@ifnextchar{.}{i.\,d.\,R}{i.\,d.\,R.\@\xspace}}
\newcommand*{\bzw}{\@ifnextchar{.}{bzw}{bzw.\@\xspace}}
\newcommand*{\s}{\@ifnextchar{.}{s}{s.\@\xspace}}
\newcommand*{\su}{\@ifnextchar{.}{s.\,u}{s.\,u.\@\xspace}}
\newcommand*{\iZ}{\@ifnextchar{.}{i.\,Z}{i.\,Z.\@\xspace}}
\newcommand*{\iW}{\@ifnextchar{.}{i.\,W}{i.\,W.\@\xspace}}
\newcommand*{\ua}{\@ifnextchar{.}{u.\,a}{u.\,a.\@\xspace}}
\newcommand*{\iA}{\@ifnextchar{.}{i.\,A}{i.\,A.\@\xspace}}
\newcommand{\nd}{non\-deterministic\xspace}
\newcommand{\Ip}{In particular,\xspace}
\newcommand{\ip}{in particular,\xspace}
\renewcommand{\hom}{homomorphism\xspace}
\renewcommand{\phi}{\varphi}
\newcommand{\prref}[1]{\prettyref{#1}}
\author{Volker Diekert\thanks{Formale Methoden der Informatik, Universit\"at Stuttgart, Germany\\ (\email{diekert@fmi.uni-stuttgart.de}).}
\and Igor Potapov\thanks{Department of Computer Science, Ashton Building, Ashton Street, University of Liverpool, UK (\email{potapov@liverpool.ac.uk}).}
\and Pavel Semukhin\thanks{Department of Computer Science, James Parsons Building, Liverpool John Moores University, UK (\email{p.semukhin@ljmu.ac.uk}).}}
\title{Decidability of membership problems for flat rational subsets of $\GL(2, \Q)$ and singular matrices\thanks{This manuscript substantially extends the following three conference papers~\cite{DiekertPS20,PS_SODA,PS_MFCS2017}.}}
\begin{document}
\maketitle

\begin{abstract}
We consider membership problems for rational subsets of the semigroup of $2\times 2$ matrices over~$\mathbb{Q}$. For a semigroup $M$, the rational subsets $\Rat(M)$ are defined as the sets accepted by NFAs whose transitions are labeled by elements of $M$. In general, it is undecidable on inputs $m\in M$ and $R\in \Rat(M)$ whether $m$ belongs to $R$.
    Therefore, we restrict our attention to the family $\FRAT{M}{S}$ of flat rational subsets of $M$ over $S$, where $S$ is a subsemigroup of $M$. It consists of finite unions of the form $g_0L_1g_1 \cdots  L_tg_t$, where $L_i\in \Rat(S)$ and $g_i\in M$.
    Assuming that the membership for $\RAT(S)$ is decidable, we prove various results when the membership for $\FRAT{M}{S}$ is decidable.

    If $H$ is a subgroup of a group $G$, then we provide a rather general condition when $\FRAT{G}{H}$ is an (effective) relative Boolean algebra. This leads to one of our main results that 
    the emptiness problem for Boolean combinations of sets in 
    $\FRAT{\mathrm{GL}(2,\mathbb{Q})}{\mathrm{GL}(2,\mathbb{Z})}$ is decidable.
    It is possible that such a strong decidability result cannot be pushed any further for groups sitting between 
    $\mathrm{GL}(2,\mathbb{Z})$ and $\mathrm{GL}(2,\mathbb{Q})$. To support this possibility, we prove the following dichotomy: if $G$ is a finitely generated group such that $\mathrm{GL}(2,\mathbb{Z}) < G < \mathrm{GL}(2,\mathbb{Q})$, then either $G\cong \mathrm{GL}(2,\mathbb{Z})\times \mathbb{Z}^k$ or $G$ contains an extension of the Baumslag-Solitar group $\mathop\mathrm{BS}(1,q)$ of infinite index. It is open whether the membership for rational subsets is decidable in the latter case.
For singular matrices, we will show that 
the membership problem for $\FRAT{\MQ}{S}$ is decidable in doubly exponential time,
where $S$ is the monoid generated by $\mathrm{GL}(2,\mathbb{Z})\cup \set{r\in \Q}{r>1} \cup \os{0,\vdmatrix{1}000}$.
\end{abstract} 

\begin{keywords}
membership problem, finite automata, (flat) rational sets, general linear group, special linear group
\end{keywords}

\begin{MSCcodes}
	68Q45, 68W30
\end{MSCcodes}

\section{Introduction}\label{sec:intro}
Many computational problems in matrix theory are inherently difficult to solve
even for $2\times 2$ matrices, and most them 
are  undecidable in a higher dimension. One of these problems is the \emph{semigroup membership problem} over some fixed commutative ring $R$:
given a sequence $A, A_{1} \lds  A_{m}$ in $R^{n\times n}$, determine whether $A$  belongs to the semigroup generated by the $A_i$'s. In other words, determine whether there exist an integer $k \ge 1$ and $i_1,\ldots ,i_k \in \{1,\ldots ,m\}$ such that
$A=A_{i_1}  \cdots  A_{i_k}$. Here and in the following 
$R^{n\times n}$ denotes the multiplicative monoid of $n \times n$ matrices with coefficients in $R$, and $\GL(n,R)$ denotes its 
group of units which consists of the matrices that are invertible in $R^{n\times n}$. We also use $\SL(n,R)$ to denote the subgroup of $\GL(n,R)$ of matrices with determinant~one.
The semigroup membership problem has been intensively studied since 1947 when Markov showed in~\cite{Markov47} that this problem is undecidable for matrices in~$\mathbb{Z}^{6 \times 6}$.
A special case is the \emph{mortality problem} where the target matrix $A$ is the 
the zero matrix. The mortality problem is undecidable for $\Z^{3 \times 3}$ by Paterson~\cite{Paterson70}. For  $\Z^{2 \times 2}$
it is unknown whether the mortality problem is decidable. 
If $A$ and the $A_i$'s are in $\GL(n,R)$, then the \emph{subgroup membership problem} asks whether $A$  belongs to the matrix semigroup which is generated by the $A_i$'s and $\oi{A}_i$'s. The subgroup membership problem is undecidable for $\GL(4,\Z)$ by Mihailova~\cite{Mihailova58}. It is unknown whether the subgroup membership is decidable for $\GL(3,\Z)$.
Even significantly restricted cases of these membership problems turn out to be undecidable for high dimensional matrices over the integers~\cite{BHHKP08,KoeLoZe16},
and very few cases are known to be decidable, see~\cite{BabaiBCIL96,BPS_MFCS2019,Charlier_Honkala_2014_Inf_and_Comp_237}. The decidability of many of these problems remains open even for $2 \times 2$ matrices over integers~\cite{CHHN14,ColcombetOSW19,Harju09,KNPicalp2018,PotapovDagstuhl19}.

A natural and important generalization of the semigroup membership problem is the \emph{membership problem for rational subsets} of a semigroup $M$: given an element $a\in M$ and a rational subset $L \sse M$, decide whether $a$ belongs to $L$. The family of \emph{rational subsets} of $M$ is denoted by $\Rat(M)$, and it has various equivalent definitions: 
homomorphic images of regular subsets of \fg free semigroups, 
regular expressions over $M$, or acceptance by $M$-NFAs. An 
$M$-NFA is a  non-deterministic finite automaton $\cA$ whose \tras are labeled by elements in $M$. The label of a directed path is the directed product over its labels, and the accepted language is the set $L(\cA)\sse M$ of 
labels of directed paths from initial to final states. 
Using $M$-NFAs allows for a graphical representation and is typically a more concise notation than using regular expressions. Thus, 
$M$-NFAs are our preferred way of defining sets in $\Rat(M)$.

It is well-known that the group $\mathrm{SL}(2,\Z)$ has a free subgroup of rank $2$ of index $12$ by~\cite{Newman62}. Hence, both
$\mathrm{GL}(2,\Z)$ and $\mathrm{SL}(2,\Z)$ are finitely generated virtually free groups, and
the families of their rational subsets form  effective Boolean algebras~\cite{sen96actainf,Silva02}.
\Ip the membership problem for rational subsets in 
$\mathrm{GL}(2,\mathbb{Z})$ and in $\mathrm{SL}(2,\mathbb{Z})$ is decidable. This is no longer the case in higher dimensions. For example, 
in dimension four, $\RAT(\mathrm{SL}(4,\Z))$ is not even closed under finite intersections, and therefore it is not a Boolean algebra. However, 
this is still open for $\mathrm{SL}(3,\Z)$, see \prref{rem:sl34z}.
 
Two previous results that extended the decidability of the semigroup membership problem beyond $\mathrm{GL}(2,\Z)$ are~\cite{PS_SODA,PS_MFCS2017}. 
The present paper pushes the frontier of decidability even further. 
First of all, we consider membership problems for $2\times 2$ matrices over the rationals, whereas~\cite{PS_SODA,PS_MFCS2017} only dealt with integer matrices. Since the rational subset membership problem is known to be decidable for $\mathrm{GL}(2,\Z)$, we focus on finitely generated  sub\-groups~$G$ of $\mathrm{GL}(2,\Q)$ which contain $\mathrm{GL}(2,\Z)$. Also, in contrast to~\cite{PS_SODA,PS_MFCS2017}, we give concrete complexity bounds: all complexities are in deterministic doubly exponential time (or better) for a natural binary encoding of the inputs.

In order to provide an essentially self-contained exposition of the main results, we combine a number of auxiliary results in Sections~\ref{sec:pre}, \ref{sec:tfl}, and~\ref{sec:SNFC}. In \prref{sec:pre}, we characterize recognizable and rational sets in semigroups and highlight essential properties of (relative) Boolean algebras. In \prref{sec:tfl}, we show so-called \emph{Fatou property} for groups. It states that if $G$ is a group and $H$ is its subgroup, then $L\sse H$ and $L\in \RAT(G)$ implies that $L\in \RAT(H)$ (see \prref{thm:silva} and \prref{cor:silvana}).
We also provide techniques for transferring results for rational subsets in group extensions of finite index (\prref{cor:finext}). In \prref{sec:SNFC}, we describe a cubic procedure for computing a \emph{\SNF} of a non-zero matrix in $\MQ$, and we discuss properties of \emph{commensurators}, a notion borrowed from geometric group theory.

In \prref{sec:algglq}, we prove our first main result which is \prref{thm:nofinext}.
It states a dichotomy for a finitely generated (\emph{\fg} for short) subgroup $G$ sitting strictly between $\mathrm{GL}(2,\Z)$ and $\mathrm{GL}(2,\Q)$.
In the first case of the dichotomy, $G$ is generated by 
$\mathrm{GL}(2,\Z)$ and \emph{finitely many} nonsingular central matrices $\vdmatrix r 00 {r}$. In this case, $G$ 
is isomorphic to $\mathrm{GL}(2,\Z)\times \Z^k$ for $k\geq 1$, in which the membership problem for rational subsets is known to be decidable.

This is the best we can hope for groups sitting strictly between $\GL(2,\Z)$ and $\GL(2,\Q)$ in the general case. 
Indeed, our dichotomy states that if such a \fg~group $G$ is not isomorphic to $\GL(2,\Z)\times \Z^k$, then $G$ contains an extension of infinite index of a Baumslag-Solitar group $\mathop\mathrm{BS}(1,q)$ for some $q\geq 2$.
The Baumslag-Solitar groups $\mathop\mathrm{BS}(p,q)$ are defined by two generators $a$ and $t$ with the  defining relation
$t a^p \oi t = a^q$.  They were introduced in~\cite{baumslag62some} and have been widely studied since then.
As we see in the proof of \prref{thm:nofinext}, $\mathop\mathrm{BS}(1,q)$ cannot appear as a subgroup in $\GL(2,\Z)\times \Z^k$, which implies that the two cases of the dichotomy are mutually exclusive.
The group $\mathop\mathrm{BS}(1,q)$ is metabelian, and subgroup membership is decidable for \fg~metabelian groups by~\cite{Romanovskii74}.
Actually, a stronger result is known for Baumslag-Solitar groups: 
the membership problem for rational subsets of $\mathop\mathrm{BS}(1,q)$ is decidable for all $q\geq 2$
by Cadilhac, Chistikov, and Zetzsche~\cite{CadilhacCZ2020ICALP}.
However, it is not clear how to generalize this result to extensions of $\BS(1,q)$ of infinite index.

Motivated by the above results and observations, we introduce in \prref{sec:FRAT} the notion of \emph{flat rational sets} of a semigroup $M$ over its subsemigroup $S$. We denote them by $\FRAT M S$. 
In the terminology of \Schuetz \cite{sch76}, $\FRAT M S$ is the \emph{polynomial closure} of $\RAT(S)$ in~$M$. More precisely, 
a subset $L\sse M$ is a flat rational set \IFF it can be written as a finite union of languages $L_0m_1L_1\cdots m_kL_k$, where 
the $m_i$'s belong to $M$ and $L_i\in \Rat(S)$ for $1\leq i \leq k$. 

We are mainly interested in the study of $\FRAT{\mathrm{GL}(2,\mathbb{Q})}{S}$. Since $\mathrm{GL}(2,\mathbb{Q})$ is not finitely generated, the family $\FRAT{\mathrm{GL}(2,\mathbb{Q})}{S}$ is never a Boolean algebra because $\mathrm{GL}(2,\mathbb{Q})\notin \FRAT{\mathrm{GL}(2,\mathbb{Q})}{S}$. 
One of our main results about flat rational sets (\prref{thm:lea}) shows that under some natural assumptions on
a group $G$ and its subgroup $H$, the family $\FRAT G H$ forms an effective relative Boolean algebra (see \prref{def:releba}). As an application of this result, we will show that we can decide the emptiness of finite Boolean combinations of flat rational sets of $\GL(2,\Q)$ over $\GL(2,\Z)$ (\prref{cor:leas}).
In \prref{thm:genfratmon}, we provide an alternative intrinsic description of flat rational sets. 
In the rest of \prref{sec:FRAT}, we show a reduction of the membership problem for $\FRAT{M}{G}$ to that of $\FRAT{M}{H}$, where $M$ is a monoid, $G$ is a subgroup of its group of units, and $H$ is a finite index subgroup of $G$ (\prref{thm:cAm} and \prref{cor:Pred}). 

In the remaining three sections, we prove new decidability results for flat rational sets that contain matrices from $\MQ$. In \prref{sec:membfrat}, we show that the membership problem for flat rational sets of $\mathrm{GL}(2,\mathbb{Q})$ over $\mathrm{GL}(2,\mathbb{Z})$ is decidable in exponential time (\prref{thm:red}). We then prove various generalizations of this result, although with a worse complexity bound. 
For example, we show that the membership problem for $\FRAT{\mathrm{GL}(2,\mathbb{Q})}{S}$ is decidable in doubly exponential time, where $S=\mathrm{GL}(2,\mathbb{Z})\cup \set{g \in \GL(2,\Q)}{\,|\det (g)|> 1}$ (\prref{thm:nsmflat}).

If the target is a non-zero singular matrix, then 
we show in \prref{sec:ginger} that the membership problem for $\FRAT{\MQ}{S}$ is decidable in doubly exponential time
for the monoid~$S$ which is generated by
$\mathrm{GL}(2,\mathbb{Z})\cup \set{r \in \Q}{r> 1} \cup \os{\vdmatrix 1000}$ (\prref{thm:sing}).
However, we  prove a better complexity bound for the \emph{mortality problem}.
Namely, we show that mortality for $\FRAT{\MQ}{S}$ is decidable in exponential time
for the monoid~$S$ which is generated by $\mathrm{GL}(2,\mathbb{Z})\cup \Q \cup \os{\vdmatrix 1000}$ (\prref{thm:mort}).
In \prref{sec:conclusion}, we discuss potential directions for future research and list several open problems in this field.

\section{Notation and preliminaries}\label{sec:pre}
An \emph{\invol} of a set $S$ is a mapping $x\mapsto \ov x$ such that 
$\ov{\ov x}=x$ for all $x\in S$. 
An {\invol} of a semigroup $(S,\cdot)$ is an \invol~$\ovv$ of $S$ such that $\ov {x\cdot y} =  \ov {y}\cdot \ov {x}$. 
A monoid $M$ is a semigroup $(M,\cdot)$ with a neutral element~$1$. 
Typically, we write commutative monoids like $\N$, $\Z$, $\Z/n\Z$ or~$\Q$  with an additive notation. If we use a multiplicative notation, then~$1$ denotes the neutral element of a monoid. \Ip the empty word in free monoids is denoted by $1$. It is also custom to write $xy$ instead of $x\cdot y$. 
A \emph{zero} in a semigroup $(M,\cdot)$ is an element $0$ such that $x\cdot 0 =0\cdot x = 0$ for all $x\in M$. 
If $(M,\cdot)$ is a semigroup with \invol and with a zero $0$, then $\ov 0=0$. If $(M,\cdot)$ is a monoid with \invol, then  $\ov 1=1$.
If $\Gam$ is a set with an \invol~$\ovv$, then $\Gam^+$ 
(resp.,~$\Gam^*$) is a free semigroup (resp., free monoid) with \invol, where the \invol is defined on by $\Gam^*$ by extending it from~$\Gam$ by using
the law $\ov{uv} = \ov{v}\; \ov u$.

If $G$ is a group, then it is a monoid with an \invol~$\ovv$ defined by $\ov g=\oi g$ for all $g\in G$.  The identity mapping is an \invol for commutative semigroups.

In commutative monoids without a zero-element, we might use an additive operation~$+$, and then the neutral element is denoted as~$0$. There will be no risk of confusion.

For a subset $L\sse M$ of a semigroup $M$, the set $L^+$ denotes the subsemigroup of $M$ generated by $L$. If $M$ is a monoid, then 
the submonoid generated by $L$ is $L^*=L^+\cup \os 1$. It is called the \emph{Kleene-star} of $L$. We also use ``\fg''~as an abbreviation for ``finitely generated''. Hence,  
a semigroup (resp., monoid) is \fg if it is a homomorphic image of a \fg
free semigroup (resp., monoid).

The group of \emph{units} of $M$ is the submonoid of invertible elements, denoted henceforth by $U(M)$. It is the set consisting of all $x\in M$ such that there is some $\ov x\in M$ with $\ov x x = x\ov x=1$. If $x\in U(M)$, then we also write 
$\oi x$ instead of~$\ov x$. 
If $x$ is a unit of $M$, then 
$x^\Z$ denotes the set $\set{x^n}{n\in \Z}$, which is the subgroup generated by~$x$. By $Z(M)$ we denote the \emph{center} of $M$, that is, the set of elements which commute with all elements in $M$. 
 We write
 $S\leq M$ if~$S$ is a subsemigroup  of $M$, and $S< M$ if $S\leq M$ but $S \neq M$.

A subsemigroup $I$ of a monoid $M$ is an \emph{ideal} if $M\, I \, M\sse I$. The empty set~$\es$ is an ideal. If $M$ contains a zero $0$, then $\os{0}$ is the least nonempty ideal. If an ideal $I$ contains an element of $U(M)$, then $I=M$. Thus, if $I\neq M$, then $I$ is contained in $M\sm U(M)$. In general, $M\sm U(M)$ is not an ideal (see an example in \prref{rem:erata}).

A group $G$ is called \emph{virtually free} if it contains
a free group of finite index. 
A group is finitely generated as a group \IFF it is finitely generated as a semigroup.

By  $\MnR$ we denote the ring of $n\times n$ matrices over a commutative ring~$R$,  
and we let \mbox{$\det: \MnR \to R$} be the determinant function. The units of $R$ are denoted by $R^*$. 
We view $R$ as a subring of $\MnR$ by identifying $r\in R$ with the matrix
$r = rI_n$, where $I_n$ is the $n$-dimensional identity matrix. Hence, we may write $1= I_n$ and~$-1=-I_n$.
By $\GL(n,R)$ we mean the group of invertible
matrices, that is, the matrices $g\in \MnR$ such that~$\det(g)\in R^*$ is a unit. For $n\geq 2$, the center of $\GL(n,R)$ is $R^*=\set{rI_n}{r\in R^*}$.

When we consider a matrix ring $\MnR$, a semigroup in $\MnR$ refers to a subsemigroup in the multiplicative 
monoid $(\MnR,\cdot)$. 

By $\SL(n,R)$ we denote the \emph{special linear group} $\oi{\det}(1) = \set{g\in \GL(n,R)}{\det(g)=1}$. It is a normal subgroup of $\GL(n,R)$. 
The structure of $\SL(2,\Z)$ is well-understood.\footnote{A discussion  about $\SL(2,\Z)$ including the computation of normal forms is, for example, in~\cite[Sec.~8.12]{edam16}.}
The groups $\SL(2,\Z)$ and  $\GL(2,\Z)$
are \fg~virtually free groups.
\begin{remark}\label{rem:glzvf}
It is shown in 
\cite{Newman62} that the \emph{projective linear group} 
$\PSL(2,\Z)=\SL(2,\Z)/\os{\pm 1}$
has a free subgroup of rank~$2$ and of index~$6$. Hence, $\SL(2,\Z)$ has a free subgroup of rank~$2$ of index~$12$. Therefore, 
$\GL(2,\Z)$ has a free subgroup of rank~$2$ which has index~$24$. Actually, 
the free subgroup of index~$24$ and rank~$2$ can be chosen to be the commutator subgroup of~$\SL(2,\Z)$.  
Possible generators for $\SL(2,\Z)$ are the matrices $R=\vdmatrix{0}{-1}{1}{1}$ of order~$6$ and $S=\vdmatrix{0}{1}{-1}{0}$ of order~$4$. Possible generators for $\GL(2,\Z)$ are $S$, $R$, and $\vdmatrix{1}{0}{0}{-1}$.
\Ip since $\SL(2,\Z)$ and $\GL(2,\Z)$ are generated by elements of finite order, none of the virtually free groups $\PSL(2,\Z)$, $\SL(2,\Z)$, or $\GL(2,\Z)$ is free. \hspace*{\fill}$\diamond$\end{remark}
\subsection{Recognizable and rational sets in semigroups}\label{sec:ratrec}
Throughout this subsection $M=(M,\cdot)$ denotes a semigroup.\footnote{We call it $M$ because in most of our cases the semigroup $M$ is a monoid.}
We recall some classical facts as they can be found with their proofs in the classical textbook of Eilenberg~\cite{eil74} or in the \emph{$q$-Book} of Rhodes and Steinberg~\cite{rs09qtheory} as well as in \cite{edam16}.   
  
\begin{definition}\label{def:recM}
A subset $L\sse M$ belongs to the family of \emph{recognizable sets} $\REC(M)$ if
there exists a \hom $\phi:M\to N$ of $M$ to a finite semigroup $N$ such that $L=\oi \phi(\phi(L))$. We also say that $\phi$ (resp.,~$N$) 
\emph{recognizes} $L$. 
\end{definition}
Note that the canonical \hom of $M$ to the trivial monoid 
$\os 1$ recognizes~$\es$ and $M$. \begin{proposition}\label{prop:recbool}
If $\phi_i:M\to N_i$ recognizes subsets $L_i\sse M$ for $i=1,2$, then 
the \hom $M\to N_1\times N_2,\; m\mapsto \big(\phi_1(m), \phi_2(m)\big)$ recognizes $L_1\cap L_2$ and $M\sm L_i$ for $i=1,2$. \Ip 
$\Rec(M)$ is a Boolean algebra (in the sense of \prref{def:releba} below).
\end{proposition}
\begin{definition}\label{def:ratM}
The family of \emph{rational sets} $\RAT(M)$ has the following definition
using \emph{regular}  (aka \emph{rational}) expressions.
It is the least family such that:
\begin{enumerate}
\item $|L| < \infty, \,  L \sse M \implies L \in \RAT(M)$.
\item $L_1,\, L_2 \in \RAT(M) \implies L_1 \cup L_2,\; L_1 \cdot L_2,\, \text{and } L_1^+ \in \RAT(M).$
\end{enumerate}
\end{definition}
Note that the definition of $\RAT(M)$ is intrinsic without reference to any generating set. Moreover, let $M$ be a monoid and $L\in \RAT(M)$, then $L^*\in \RAT(M)\iff L^+\in \RAT(M)$ because $L^*=L^+\cup \os 1$ and $L^+=L\cdot L^*$.
\begin{remark}\label{rem:recinG}
Let $G$ be a  group. Then $L\sse G$ is recognizable \IFF
there is normal subgroup $N$ of finite index and a finite subset 
$\os{g_1\lds g_k}\sse G$ such that $L=\bigcup\set{g_iN}{1\leq i \leq k}$. \Ip if $G$ is infinite, then no finite subset of $G$ is recognizable. A subgroup $H$ belongs to $\RAT(G)$  \IFF $H$ is \fg by~\cite{AnisimovS75}. 
This does not hold for submonoids:
the standard example is the additive group $\Z\times \Z$. It contains the submonoid 
$\set{(m,n)\in \N\times \N}{m=0\vee n\geq 1} = \{(0,0)\} \cup \big((0,1) + \N\times \N\big)$ which is rational but not finitely generated, see \cite[Sec.~8.9]{edam16}. \hspace*{\fill}$\diamond$\end{remark}
\begin{proposition}\label{prop:freekle}
Let $h:M \to M'$ be any  \hom of semigroups. Then the following assertions hold. 
\begin{itemize}
\item If $L'\in \REC(M')$, then $\oi h(L')\in \REC(M)$. 
\item If $L\in \RAT(M)$, then $h(L)\in \RAT(M')$. 
\item If $L \in \REC(M)$ and $K \in \RAT(M)$, then $L \cap K \in \RAT(M)$.
\item \emph{Kleene's Theorem,~\cite{kle56}:} If $M$ is either a \fg~free monoid or a \fg~free semigroup, then $\REC(M) = \RAT(M)$.
\item Let $L\in \Rat(M)$, then $L$ is contained in a \fg subsemigroup of $M$. \Ip $M\in \Rat(M)$ implies that $M$ is finitely generated.\footnote{McKnight's Theorem~\cite{mck64} is slightly more general. It  states that $M$ is finitely generated \IFF $M\in \RAT(M)$ \IFF  $\REC(M) \sse \RAT(M)$. 
}
\item Let $H$ be a subgroup of a group $G$. Then $H\in \Rec(G)$ \IFF 
the index $[G:H]$ is finite, see~\cite{AnisimovS75}.
\end{itemize}
\end{proposition}
For a \fg~free semigroup (or a \fg~free monoid) $M$, the family of \emph{regular languages} $\REG(M)$ is defined as $\REG(M) = \RAT(M) = \REC(M)$, where the last equality holds thanks to Kleene's Theorem stated in \prref{prop:freekle}
above. Henceforth, if we use the term ``regular language'', then we always refer to a rational subset in some finitely generated free semigroup or monoid.  For other monoids, we frequently have $\REC(M) \neq \RAT(M)$. 
This happens, for example, if $M$ is an infinite group.
Moreover, if $M$
contains a direct product $\os{a,c}^* \times \os{b}^*$, then, in contrast to 
$\REC(M)$, the family   
$\RAT(M)$ is not closed under finite intersection, see for example~\cite[Ex.~7.21]{edam16}.
\begin{definition}\label{def:NFAmon}
Let $M$ be a semigroup and $T\sse M$.
A \emph{nondeterministic finite automaton over $T$} (or a \emph{$T$-NFA} for short) is a tuple $\cA=(Q,\del,I,F)$, where~$Q$ is a set of \emph{states} with subsets $I,F\sse Q$ and 
$\del\sse Q\times T  \times Q$ is a finite set of \emph{\tra{s}}. The set $I$ (resp.,~$F$) is called the set of \emph{initial} (resp.,~\emph{final}) states. 
A \tra $(p,s,q)\in \del$ is also written as $p \arc s q$; and we say that 
$s\in T$ is its \emph{label}.
If $M$ has a neutral element $1$, then an \emph{$\eps$-\tra}\footnote{The notation $\eps$ is used because it frequently appears in the literature on NFAs, where $\eps$ denotes the empty word.} is a \tra with label $1\in M$.

A path (of \tras) of length $n\in \N$ is a sequence 
$q_0,a_1,q_1, \ldots,{a_n},q_n$ such that
$(q_{i-1},a_i,q_i)\in \del$ for all $1\leq 1 \leq n$.
Paths may also be depicted as 
\begin{align}\label{eq:trapath}
q_0 \arc {a_1} q_1 \quad \cdots  \phantom{q_2}\arc {a_{n-1}} q_{n-1}   \arc {a_n} q_n.
\end{align}
If $q_0\in I$ and $q_n\in F$, then we say that the path
is \emph{accepting} for the element $a_1 \cdots a_n\in M$.
The \emph{accepted language $L(\cA)$} is the set of all
$m\in M$ for which there is a factorization $m=a_1 \cdots a_n$ that has an accepting path of length $n$.

A \emph{subautomaton} of $\cA$ is an NFA $\cA'=(Q',\del',I',F')$ such that $Q'\sse Q$ and $\del'\sse \del$, but there are no restrictions how to choose $I'$ or $F'$. 
\end{definition}
We follow the convention that if a path of length zero accepts $m\in M$, then $M$ is a monoid and $m$ is its neutral element.
An NFA $\cA$ is called 
\emph{trim}, if every
state belongs to some accepting path.
Whenever convenient, we assume that $\cA$ is trim. Note that $L(\cA)$ is contained in the subsemigroup of $M$ which is generated by the finite set of labels of the \tras of $\cA$. This holds whether or not $\cA$ is trim. \begin{proposition}\label{prop:ratnfA}
Let $L\sse M$ be any subset. Then the following assertions are equivalent. 
\begin{itemize}
\item The set $L$ belongs to $\RAT(M)$. \item There is some $M$-NFA $\cA$ such that 
$L= L(\cA)$.
\item The set $L$ is the image $\phi(K)$ of a regular set $K\sse\Sig^+$
under some \hom $\phi: \Sig^+ \to M$.
\end{itemize}
\end{proposition}

The following lemma is used in the proof of \prref{thm:silva} below. Its proof is straightforward. 
\begin{lemma}\label{lem:silva}
Let $\cA$ be an $M$-NFA and $q_0 \arc {a_1} q_1 \quad \cdots  \phantom{q_2}\arc {a_{n-1}} q_{n-1} \arc {a_n} q_n$ denote a path as in (\ref{eq:trapath}) with $n\geq 2$. Then adding (or removing) a \tra $q_{0} \arc {m} q_n$ with label
$m=a_1\cdots a_n$ does not change the accepted language.
\end{lemma}
Note that adding \tras possibly makes accepting paths shorter, whereas removing \tras makes the size of the NFA smaller.

\subsection{The input size of matrices and NFAs over matrices}
\label{sec:inputsize}
We use the following notation.
We let $\log (x)= \max\os{1, \log_2(x)}$. Let $f,g:\N \to \R_{\geq 0}$ be two functions with values in non-negative real numbers.
As usual, we let $f\in {\Oh}(g)$ if there is some $k\in \N$ such that $f(n)\leq k g(n) +k$ for all $n\in \N$.
Sometimes we measure complexities in \emph{soft $\Oh$-notation} $\wt {\Oh}$.
 We write 
$f\in \wt {\Oh}(g)$ if $f\in \Oh(g\cdot \log^k (g))$ for some $k\in \N$. Thus, in soft $\Oh$-notation poly-logarithmic factors are neglected. 

The (bit-)complexity of an algorithm depends on the bit encoding 
of the input. When talking about complexity,
we usually work with NFAs where the labels of \tra{s} are $n\times n$ matrices over~$\Q$, and therefore we define their size.
There are two natural encodings: unary and binary. We will use both of them. For a matrix $A=(a_{ij})$ with integer entries $a_{ij}\in \Z$, we let 
$\Abm {A}= \max\set{\abs{a_{ij}}}{1\leq 1,j\leq n}$. Given ${m}\in \MnQ$, we assume that ${m}$ is written as ${m}= \oi pA$ where~$p$ is the least positive integer such that $pm =A \in \MnZ$. That is,~$p$ is $1$ for the zero matrix, and otherwise~$p$ is the $\lcm$ of the denominators of non-zero entries in $m$. For such a representation ${m}= \oi pA$ as above
we define its \emph{unary size} as 
$\Abm {m}= p \Abm {A}$. It  
does not yield a matrix norm, however, for $n=2$, we have:
\begin{align}\label{eq:Abm}
\Abm {{m}_1\cdots {m}_\ell}\leq 2^{\ell-1} \prod_{i=1}^\ell \Abm {{m}_i}.
\end{align}

Since we are (mainly) interested in the bit complexity, we define 
\emph{binary size} of $m$ as
$\Abbin {{m}} = \log(\Abm {m})$.
Hence, for $a,b,c,d\in \Z$ we have
$\Abbin {\vdmatrix abcd}=\log_2(\max\os{2,\abs a,\abs b,\abs c,\abs d})$. \Ip $\Abbin {\vdmatrix 0000} =\Abbin {\vdmatrix 1001}= \Abbin {\vdmatrix 2222}= 1$.
\begin{lemma}\label{lem:bitp}
Let ${m}={m}_1\cdots {m}_\ell$ be a product of $\ell$ matrices in $\MQ$
such that $\Abm{{m}_i} \leq 2^{k}$ for all $1\leq i\leq \ell$. 
Then we have $\Abbin {{m}} \in \Oh(k\ell)$.
\end{lemma}
\begin{proof}
 This is a direct consequence of the inequality in (\ref{eq:Abm}).
\end{proof}
\begin{definition}\label{def:weightmatrix}
Let $\cA=(Q,\del,I,F)$ be an $\MQ$-NFA, that is, all labels of \tras of $\cA$ are matrices 
in $\MQ$. The \emph{binary and unary sizes}  $\Abbin{\cA}$ and
$\Abm{\cA}$ of the NFA~$\cA$
are  defined as follows: \begin{align}
\label{eq:wnfab}
\Abbin{\cA}&=1 + \abs Q + \abs{\del}+\sum_{(p,m,q)\in \del}\Abbin m,\\
\label{eq:wnmax}
\Abm{\cA}&=1 + \abs Q + \abs{\del}+\sum_{(p,m,q)\in \del}\Abm m.
\end{align}
\end{definition}

\subsection{Reductions and complexity classes}\label{sec:NFAcc}
We follow standard notation in complexity theory as it can be found for example in~\cite{pap94}. \Ip we assume the reader is familiar
with the classes $\P$ and $\NP$ which denote the families of decision problems decidable on a Turing machine in deterministic (resp.,~\nd) polynomial time.

Decision problems are encoded as subsets of $\Del^*$ where $\Del$ is a finite alphabet, for example, $\Del=\os{0,1}$.
We define complexity classes via the notion of reductions,
which are realized by (nondeterministic) Turing machines in the following sense.
Let $\Gam$ and $\Sig$ be finite alphabets, and $f:\N \to \N$ be any function.
Let $T_f$ be a Turing machine with input alphabet $\Gam$
and a separate write-only output-tape, which is initially empty.
We assume that $T_f$ also satisfies the following property: on any input $u\in  \Gam^*$ of length $n$,
every computation of $T_f$ stops after at most $f(n)+1$
 steps
and produces some $v\in \Sig^*$ on the output-tape. We write $v\in T_f(u)$ in this case.
Note that by assumption we have $\abs v\in \Oh(f(n))$.

Let $\cP\sse \Gam^*$ and~$\cQ\sse \Sig^*$ be subsets.
We say that $\cP$ is \emph{$\DTIME(f)$ (resp., $\NTIME(f)$) reducible} to $\cQ$
if there exists a deterministic (resp.,~nondeterministic) Turing machine $T_f$
with the above-mentioned properties such that
\(\forall u\in \Gam^*: (u\in \cP \iff \exists v\in \cQ: v\in T_f(u)).\)
As a consequence, if $\cP$ is $\DTIME(f)$ (resp.,~$\NTIME(f)$) reducible to $\cQ$ and if $\cQ$ is $\DTIME(g)$ (resp.,~$\NTIME(g)$) reducible to $\cR$, then 
$\cP$ is $\DTIME(g\circ f)$ (resp.,~$\NTIME(g\circ f)$) reducible to $\cR$.

A \emph{$\PTIME$-reduction} (resp.,~\emph{$\NP$-reduction}) is a  
 $\DTIME(f)$ (resp.,~$\NTIME(f)$) reduction, where~$f$ is some polynomial. 
A \emph{$\DEXPTIME$-reduction}
 (resp.,~\emph{$\NEXPTIME$-reduction}) is a  
 $\DTIME(f)$ (resp.,~$\NTIME(f)$) reduction, where~$f$ is a function of type $2^p$ where~$p$ is some polynomial. 
 
If a problem $\cP$ is $\DTIME(f)$ (resp.,~$\NTIME(f)$) reducible to a singleton like 
$\os{1}$, then we say that  $\cP$ belongs to the complexity class $\DTIME(f)$ (resp.,~$\NTIME(f)$).
The classes $\P=\DTIME(n^{\Oh(1)})$ and $\NP=\NTIME(n^{\Oh(1)})$ are closed under  $\P$ (resp.,~$\NP$)-reductions.

\subsection{Boolean algebras and relative Boolean algebras}\label{sec:rba}
\begin{definition}\label{def:releba}
Let $U$ be any set and~$\cB$ be a family of subsets of $U$.
\begin{itemize}
\item We say that~$\cB$ is a \emph{Boolean algebra}, 
if~$\cB$ is closed under finite union and complement. 
\item We say that~$\cB$ is a 
\emph{relative Boolean algebra} if~$\cB$ is closed under finite union and relative complement: $L, \, K \in \cB \implies 
L\sm K \in \cB$. 
\item We say that~$\cB$ is an 
\emph{effective relative Boolean algebra} if each $K\in \cB$ has an effective finite description, and there is an algorithm that, given descriptions of  
$K, L \in \cB$, computes descriptions of $L\cup K$ and $L\sm K$ and decides the emptiness of~$K$.\end{itemize}
\end{definition}
Every Boolean algebra is a relative Boolean algebra, and every 
relative Boolean algebra contains the empty set~$\es$. 
A relative Boolean algebra is closed under \emph{nonempty} finite intersection. Indeed, $L\cap K= S\sm ((S\sm L) \cup (S\sm K))$ where 
$S=L\cup K$. A relative Boolean algebra~$\cB\sse 2^{U}$ is a Boolean algebra \IFF $U\in \cB$. 
\begin{examples}\label{exs:class}
Let us list some classical examples of (relative) Boolean algebras.
\begin{enumerate}
\item If $M$ is any \fg~semigroup, then the family of recognizable sets $\REC(M)$ is an effective Boolean algebra. \Ip
$\REG(\Sig^*)$ is an effective Boolean algebra if $\Sig$ is finite.
\item Let $M_1$ and $M_2$ be \fg semigroups and let $M=M_1\ast M_2$ denote their free product. 
If $\RAT(M_i)$ is an (effective) Boolean algebra for $i=1,2$,
then $\RAT(M)$ is an (effective) Boolean algebra, see
\cite{Sak92} and also \cite{LohSen06} for a generalization.
\item\label{ratinfgen} Let $M$ be a commutative semigroup. Rational sets in $M$ are also called \emph{semi-linear}: a semi-linear set is a finite union of \emph{linear} sets, and a linear set (in additive notation)
is a set of the form $c + \N d_1 + \cdots + \N d_t$, where $c,d_1,\ldots,d_t\in M$.
The family of semi-linear sets
forms a relative Boolean algebra by
\cite{es69}. Using Presburger arithmetic, it can be shown that
	$\RAT(\Z^k)$  and $\RAT(\N^k)$ are actually effective Boolean algebras for all $k\in \N$. The decidability of Presburger arithmetic is a classical result due to Moj{\.{z}}esz Presburger~\cite{pres29}.
\item Let~$\Q$ be the additive group of the rational numbers. 
Then~$\Q$ is not \fg~and every \fg~subgroup is isomorphic to 
$\Z$. As a consequence, $\RAT(\Q)$ is an effective relative Boolean algebra, but not a Boolean algebra.
\item If $G$ is a \fg~virtually free group, then the family of rational sets 
$\RAT(G)$ is an effective Boolean algebra. If $G$ is an infinitely generated free group, then $\RAT(G)$ is a relative Boolean algebra, but not a 
Boolean algebra. The special case of \fg~free groups is due to 
Benois~\cite{ben69}. The extension to \fg virtually free groups is in~\cite{Grunschlag1999,sen96actainf,Silva02}.
\end{enumerate}
\end{examples}

\section{The Fatou property and transfer results for rational subsets in groups}\label{sec:tfl}
Let $G$ be a group and $H$ be a subgroup. The aim of \prref{sec:tfl} is to prove \prref{thm:silva}. It states that $L\sse H$ and $L\in \RAT(G)$ implies $L\in \RAT(H)$. This is called the Fatou property (see \prref{rem:sl34z} below for further discussion). This property does not hold for groups with respect to submonoids in general. Indeed, according to \prref{rem:recinG} the \fg group $\Z\times \Z$ contains a rational submonoid~$M$ which is not \fg Hence, we have $M\notin \Rat(M)$.

\prref{thm:silva} holds without any assumption about $G$ and its subgroup $H$: for example, the cardinalities the groups~$G$,~$H$, and the 
set of cosets $G/H$ can be arbitrarily high. However, for effectiveness we need some restrictions. Therefore, we introduce the notion of \emph{\enu} in \prref{def:enu}. It is similar to the notion of ``computably \enu'' as used, for example, in  
\cite[Def.~1.1]{tran2018} for Boolean algebras, but differs from it in the sense that we do not require the equality relation to be computably enumerable. \Ip there are groups with an \enu in which it is undecidable whether a given group element represents the neutral element.

We begin with \prref{prop:GLSL}. It gives a quasi-linear time complexity for deciding whether $L(\cA)\sse \SLZ$ when $\cA$ is a $\mathrm{GL}(2,\mathbb{Z})$-NFA. Its proof also serves as a warm-up example for the general proof strategy used later.
\begin{proposition}\label{prop:GLSL}
Let $\cA=(Q,\del,I,F)$ be a $\mathrm{GL}(2,\mathbb{Z})$-NFA of size $\Abbin{\cA}=n$. Then we can construct in soft linear time with respect to $n$ a $\mathrm{GL}(2,\mathbb{Z})$-NFA $\cA'$ such that:
\begin{itemize}
	\item $L(\cA')=L(\cA)$.
	\item $\Abbin{\cA'} \leq \Abbin{\cA}$ and $\cA'$ has at most~$|Q|$ states.
    \item Moreover, $L(\cA')\sse \SLZ$ \IFF all labels 
of \tras in $\cA'$ have determinant~$1$. \Ip we can decide in time
$\wt\Oh(n)$ whether $L(\cA)\sse \SLZ$.
\end{itemize}
\end{proposition}
\begin{proof}
In the first phase we trim  $\cA$, which can be done by standard algorithms in time $\wt\Oh(n)$ because $|Q|+|\del|<n$. Henceforth, we assume without restriction that $\cA$ is trim. 
In the second phase we mark all states in~$Q$ 
either by $+1$ or by~$-1$. The corresponding states are called positive and  negative respectively. All initial states are marked 
with $+1$, hence they are positive.
As long as there is a \tra $p\arc m q$, where~$p$ is marked
and~$q$ is not marked, we mark~$q$ the same way as~$p$ if  
$\det(m)=1$ and with the opposite marking of~$p$ if~$\det(m)=-1$.
After at most $|\del|$ steps all states are marked.
The marking procedure can also be implemented in time $\wt\Oh(n)$ using the fact that binary integers with $n$ bits can be added and multiplied in $\wt\Oh(n)$ by the classical Sch\"onhage-Strassen algorithm \cite{SchonhageS71c}.
If we find a final state which is negative, then we have detected 
an accepted matrix with determinant~$-1$. Hence $L(\cA)$ is not included in $\SLZ$, and
we let $\cA'=\cA$ since $\cA$ must have a \tra whose label has determinant $-1$. We are done in this case.

Therefore, we may assume without restriction that all initial and final states are positive.
In the third phase we relabel \tras using the matrix
$s_{-1}=\vdmatrix {1}{0}{0}{-1}$ of order two and with determinant $-1$. For that we consider all
\tras $p\arc {m} q \in \del$, one after another in some order. 
We either transform $p\arc {m} q$ into a \tra $p\arc {m'} q$ such that $m'\in \SLZ$ or we detect that $L(\cA)$ is not included in $\SLZ$. Recall that $m=\vdmatrix abcd$ with $a,b,c,d\in \Z$ and $\det(m)= \pm 1$. 
We make the following case distinction. 
\begin{enumerate}
\item If both $p$ and~$q$ are positive, then either we have~$\det(m)=1$ and we let 
$m'=m\in \SLZ$ or, if $\det(m)\neq1$, we exit with an error message. 
\item If~$p$ is positive and~$q$ is negative, then we have~$\det(m)=-1$,  and we let 
$m'=m s_{-1}= \vdmatrix {a}{-b}{c}{-d}\in \SLZ$ or, if $\det(m)\neq-1$, we exit with an error message.
\item If~$p$ is negative and~$q$ is positive, then either we have~$\det(m)=-1$  and we let 
$m'=s_{-1}m = \vdmatrix {-a}{-b}{c}{d}\in \SLZ$ or, if $\det(m)\neq-1$,
we exit with an error message.
\item If~$p$ and~$q$ are negative, then either we have~$\det(m)=1$, and we let 
$m'=s_{-1}m s_{-1}= \vdmatrix {-a}{b}{c}{-d}\in \SLZ$ or, if $\det(m)\neq1$, we exit with an error message.
\end{enumerate}

Since $\cA$ is trim, an error message tells us that $\cA$ accepts 
a matrix with determinant $-1$, which implies that $L(\cA)\sm \SLZ\neq \es $. To see this, recall the marking procedure. 
As noted above, we can assume without restriction that the initial and final states are positive. Since the NFA is trim, for every state 
$p\in Q$, there are matrices $f_p$ and $g_p$ such that $f_p$ labels the path defined by the marking procedure from an initial state 
to $p$ and $g_p$ labels \emph{any} path from~$p$ and to some final state. The marking procedure tells us that $\det(f_p)$ is positive \IFF the state $p$ is positive. 
Assume that $\det(f_p)\neq \det(g_p)$. Then $\cA$ accepts $f_pg_p$ 
with $\det(f_p)\det(g_p)=-1$, and we know that $L(\cA)\sm L(\cA)\neq \es$.
So, if $L(\cA)\sse \SLZ$, then $\det(f_p) = \det(g_p)$ for every $p\in Q$.
Now, consider any \tra $p\arc h q$ in $\cA$, then $hg_q$ labels a path from $p$ to some final state, and we conclude $f_phg_q\in L(\cA)$. Therefore, $L(\cA)\sse \SLZ$ implies $\det(h)=\det(f_p) \det(g_q)= \det(f_p) \det(f_q)$. Now, $\det(f_p) \det(f_q)\neq \det(h)$ is exactly the situation when an error message occurs. Thus, an error message implies that $L(\cA)$ is not contained in $\SLZ$.
In this case we stop and let $\cA'=\cA$.

Finally, assume there was no error message. In this case, the construction is finished, and it produces an $\SLZ$-NFA $\cA'=(Q,\del',I,F)$.
It remains to verify $L(\cA')=L(\cA)$. To see this, we first show that $L(\cA) \sse L(\cA')\sse \SLZ$. Consider a path~$\pi$ in $\cA$ which begins in some initial state $\iota\in I$ and which ends in some final state $t\in F$ and which accepts $m\in \mathrm{GL}(2,\mathbb{Z})$. After the 
transformation we obtain a path~$\pi'$ having all labels in $\SLZ$. 

We claim that the path~$\pi'$ accepts the same matrix $m$ as before the transformation. Thus, the claim implies that $m\in \SLZ$ and $L(\cA) \sse L(\cA')\sse \SLZ$. We prove the claim by induction on the number of negative states on that path. 
Both states $\iota\in I$ and $t \in F$ are positive. If all states are positive, then the claim holds. Otherwise, the path~$\pi$ contains a subpath $p_0\arc {m_1'} p_1 \arc {m_2'} \cdots p_{k-1} \arc {m_k'}p_k$ with $k\geq 2$ where 
$p_0$ and $p_k$ are positive, but $p_1\lds  p_{k-1}$ are negative. 
This subpath corresponds to a path  $p_0\arc {m_1} p_1 \arc {m_2} \cdots p_{k-1} \arc {m_k}p_k$ in $\cA$ such that 
$\det(m_1) = \det(m_k)=-1$ and $\det(m_i) =1$ for 
$2\leq i\leq k-1$. By definition of $\cA'$ this yields
$m_1'=m_1s_{-1}$, $m_k'=s_{-1}m_k$, and $m_i'=s_{-1}m_is_{-1}$ for 
$2\leq i\leq k-1$. Hence, $m_{1,k}= m_1\cdots m_k= m'_1\cdots m'_k\in \SLZ$. 
By \prref{lem:silva}, we can temporally add in $\cA$ and  $\cA'$ the same \tra
$p_0\arc {m_{1,k}} p_k$ between positive states without changing the accepted language.
Note that adding these \tras does not affect the above procedure because both $p_0$ and $p_k$ are positive and $\det(m_{1,k})=1$.
However with the new \tras we obtain shorter accepting paths which visit less negative states. We are done by induction on the number of negative states on the accepting path~$\pi$. Removing the newly added \tra brings us back to the NFAs $\cA$ and $\cA'$.  This shows the claim, which implies $L(\cA)\sse L(\cA')$. 

To see that we also have the inclusion $L(\cA')\sse L(\cA)$, consider an accepting path $\pi'$ in $\cA'$ that accepts a matrix $m'\in \SLZ$. Since there is a one-to-one correspondence between the \tras of $\cA$ and $\cA'$, we can construct and accepting path $\pi$ in $\cA$, which accepts some matrix $m$, such that $\pi$ is transformed into $\pi'$ by the above procedure. It follows by the previous argument that $m=m'$ and hence $L(\cA')\sse L(\cA)$.
\prref{fig:SLZGLZ} illustrates this transformation:
the states $p,v,w,t$ are positive and denoted as $p_+,v_+,w_+,t_+$
and $q,u$ are negative and denoted as $q_-, u_-$. The labels before the transformation are in the upper line. The new labels are in the lower line. 
\end{proof}
\begin{figure}[h]
\begin{center}
\begin{tikzpicture}[xscale=1.95,yscale=1.25]
\draw (-1,0) node (p0) {$I\ni \iota_+$};
\draw (0,0) node (p1) {$p_+$};
\draw (1,0) node (p2) {$q_-$};
\draw (2,0) node (p3) {$u_-$};
\draw (3,0) node (p4) {$v_+$};
\draw (4,0) node (p5) {$w_+$};
\draw (5,0) node (p6) {$t_+\in F$};

\draw (-1,1) node (op0) {$I\ni \iota_+$};
\draw (0,1) node (op1) {$p_+$};
\draw (1,1) node (op2) {$q_-$};
\draw (2,1) node (op3) {$u_-$};
\draw (3,1) node (op4) {$v_+$};
\draw (4,1) node (op5) {$w_+$};
\draw (5,1) node (op6) {$t_+\in F$};

\draw[dotted,->,>=latex] (op0) to node [above] {$\ast$} (op1);
\draw[->,>=latex] (op1) to node [above] {$m_1$} (op2);
\draw[->,>=latex] (op2) to node [above] {$m_2$} (op3);
\draw[->,>=latex] (op3) to node [above] {$m_3$} (op4);
\draw[->,>=latex] (op4) to node [above] {$m_4$} (op5);
\draw[dotted,->,>=latex] (op5) to node [above] {$\ast$} (op6);

\draw[dotted,->,>=latex] (p0) to node [above] {$\ast$} (p1);
\draw[->,>=latex] (p1) to node [above] {$m_1s_{-1}$} (p2);
\draw[->,>=latex] (p2) to node [above] {$s_{-1}m_2 s_{-1}$} (p3);
\draw[->,>=latex] (p3) to node [above] {$s_{-1}m_3$} (p4);
\draw[->,>=latex] (p4) to node [above] {$m_4$} (p5);
\draw[dotted,->,>=latex] (p5) to node [above] {$\ast$} (p6);
\end{tikzpicture}
\end{center}
\caption{The upper line is a subpath of an accepting path~$\pi$ in $\cA$. 
The lower line is the same subpath of~$\pi$ in $\cA'$. Since~$s_{-1}^2 =1$ we have $m_1\cdots m_4= m_1'\cdots m_4'\in \SLZ$.
}\label{fig:SLZGLZ}
\end{figure}
\begin{definition}\label{def:enu}
We say that a semigroup $S$ has an \emph{\enu} if 
	there exist a finite alphabet $\Sig$ and a surjective mapping $\eta: W_S \to S$ such that the following holds:
\begin{enumerate}
\item The subset $W_S\sse \Sig^+$ is decidable (as defined in formal language theory, \cite{HU}). 
\item On input $u,v\in W_S$ we can compute a word $w \in W_S$ such that $\eta(u)\cdot \eta(v)=\eta(w)$.
\end{enumerate}
If $S$ has an \enu and $L\sse S$ is a subset, then we say that the \emph{membership problem for $L$} is decidable, if 
$\oi{\eta(L)} \sse \Sig^+$ is a decidable language.
We say that $S$ has a \emph{decidable word problem}, if on input $u,v\in W_S$ it is decidable whether $\eta(v)=\eta(w)$ in~$S$.

A pair $(S,G)$, where $S$ a semigroup and $G$ is a subgroup of $S$,
has an \emph{\enu} if, in addition to the above, 
$\oi{\eta(G)} \sse W_S$ is decidable; and on input 
$w\in \oi{\eta(G)}$ we can compute some word $\wt w \in W_S$ such that $\eta(\wt w)=\oi{\eta(w)}\in G$. Moreover, if $S=M$ is a monoid, then we view $W_M\sse \Sig^*$ by defining $\eta(1)=1\in M$.

Finally, if $S=G$ is explicitly specified as a group, then we assume that for all 
$w\in W_G$ we can compute some word $\wt w \in W_G$ such that $\eta(\wt w)=\oi{\eta(w)}\in G$.
\end{definition}
The Greek letter $\eta$ used in the definition above stands for \emph{evaluation}. In next proposition, there is also a letter~$\rho$ standing for \emph{representation}. \begin{proposition}\label{prop:alpinvol}
Let $G$ be a group having an \enu in the notation of \prref{def:enu}.
If $\Gam_+\sse W_G$ is a finite subset, and $K\leq G$ is the subgroup generated by $\eta(\Gam_+)$, then we can compute a finite 
set $\Gam\sse W_G$ with an \invol~$\ovv$ and a mapping $\rho:\Gam_+\to \Gam$ such that $\eta(\Gam)$ 
generates~$K$, $\eta(\rho(u)) = \eta(u)$ for all $u\in \Gam_+$, and $\eta(\ov u)=\oi{\eta(u)}$ for all $u\in \Gam$.
\end{proposition}

\begin{proof}
Choosing a linear order on $\Sig$, defines a shortlex-ordering $\leq$ on $\Sig^*$. Therefore, $\leq$ is also a linear order on $W_G\sse \Sig^*$.
Given a word $w\in W_G$, we denote by $\wt w$ the word which is computed on input $w$ and which satisfies $\eta(\wt w)=\oi{\eta(w)}\in G$. (The existence of such an algorithm is part of \prref{def:enu}.)
In the beginning, we let $\rho(u) = u$ and $\ov u=\wt u$, for all $u\in \Gam_+$, and let $\Gam= \rho(\Gam_+) \cup \set{\ov u}{u\in \rho(\Gam_+)}$, but $\Gam$,~$\rho$, and~$\ovv$ will change dynamically. Note that initially $\rho = \id_{\Gam_+}$, $\Gam= \Gam_+ \cup \ov{\Gam_+}$, and $\eta(\Gam)$ generates the subgroup~$K$. Later we change~$\rho$, and we extend~$\ovv$ to an \invol on $\Gam$. During the construction, we will preserve the following invariants: $\eta(\Gam)$ generates~$K$, $\eta(\ov u) = \oi{\eta(u)}$ for all $u \in\rho(\Gam_+)$, and $\eta(\rho(u)) = \eta(u)$ for all $u \in\Gam_+$.

Next, we make~$\ovv$ injective on $\rho(\Gam_+)$. Namely, if it happens that there are $w_1,w_2\in \rho(\Gam_+)$ with $\ov {w_1} = \ov {w_2}$ and $w_1< w_2$, then we remove $w_2$ from $\rho(\Gam_+)$ and replace $w_2$ everywhere by~$w_1$. For example, if we had $\ov u=w_2$ for some $u$, then now we have $\ov u=w_1$.
If we had $\rho(u)=w_2$ for some $u$, then now we have $\rho(u)=w_1$.
\Ip both $\rho(\Gam_+)$ and $\Gam$ become smaller.
Note that this modification preserves the above invariants because $\ov {w_1} = \ov {w_2}$ implies that $\eta(w_1) = \eta(w_2)$.

Since this procedure stops in a finite number of steps, the mappings $u\mapsto \ov u$ and~$\rho$ are computable.
Finally, we extend~$\ovv$ from $\rho(\Gam_+)$ to an \invol of $\Gam$ in a natural way: namely, if $v = \ov u$ for some $u\in \rho(\Gam_+)$, then
we define $\ov v = u$. Clearly, $\eta(\ov u)=\oi{\eta(u)}$ for all $u\in \Gam$.
\end{proof}

\begin{remark}\label{rem:recenu}
Every finitely generated semigroup $G$ has an \enu by choosing a surjective \hom $\eta: \Sig^+ \to G$ where $\Sig$ is finite and letting $W_G=\Sig^+$. For \fg monoids, the decidability of the word problem does neither depend on $\Sig$ nor on the \hom $\eta$. 
In this case, decidability of the word problem 
as defined in \prref{def:enu} coincides verbatim with the standard definition for \fg monoids as used for example in \cite{bo93springer}.
   
Note that one can construct 
a finitely presented semigroup with an undecidable word problem, see Markov \cite{Markov}. It is considered to be the first undecidability result in algebra. The corresponding result for groups is more difficult. It was shown first in independent papers of Novikov and Boone~\cite{nov55,boone59}.

The group $\GL(n,\mathbb{Q})$ has an \enu; and its word problem is decidable. It is also clear that $\GL(n,\mathbb{Q})$ is not finitely generated for $n\geq 1$.
~\hspace*{\fill}$\diamond$\end{remark}
\begin{lemma}\label{lem:finind}
Let $H$ be a finite index subgroup of $G$.  
Then 
\begin{equation}\label{eq:finexHow}
\set{L\sse H}{L\in \RAT(G)} =\set{L\cap H}{L\in \RAT(G)}.
\end{equation}

\end{lemma}

\begin{proof}
 The inclusion $\sse$ is trivial. 
 The other inclusion is clear by \prref{prop:freekle} since $[G:H]<\infty$ implies  $H\in \Rec(G)$.
\end{proof}

\prref{lem:finind} cannot be extended to the case where $H$ has infinite index. For example, the extension fails as soon  $G$ does not have the so-called 
Howson property. The \emph{Howson property} states  that the intersection of two \fg~subgroups is  finitely generated.\footnote{If $G$ is not Howson, consider \fg subgroups 
 $L$ and $H$ such that  $K=L\cap H$ is not \fg\ Hence 
 $K\sse H$ but $K\notin \RAT(G)$; thus \prref{eq:finexHow} fails.}
 
The free groups satisfy the Howson property~\cite{Howson54}.
The following lemma shows that this is not the case for a direct product of nontrivial free groups. It is well-known and follows easily from~\cite{ben69} and standard results in trace theory~\cite{dr95}. For convenience, we provide a proof below. 
\begin{lemma}\label{lem:dr95How}
The direct product $G= F(a,b)\times F(c)$ does not satisfy the Howson property. Here
$F(a,b)$ and $F(c)$  denote  free groups of rank~$2$ and rank~$1$, respectively. 

More precisely, let $H$ be the subgroup of in $G$ which is generated by $(a,c)$ and $(b,1)$, and let $L$ be the subgroup of $G$ generated by $(a,1)$ and $(b,c)$. 
Then $K=H\cap L$ is not rational. (\Ip it is not finitely generated by   \prref{rem:recinG}.)
\end{lemma} 
\begin{proof}
By contradiction assume $K\in \RAT(G)$.
Choose any set of monoid generators of $F(a,b)$ which includes the letters $a$ and $b$. Let $h$ be the canonical inclusion of the free monoid $\os{a,b}^*$ into $F(a,b)$. The family $\RAT(F(a,b))$ is closed under intersection by~\cite{ben69}. Hence, $R= \pi(K) \cap a^*b^*\in \RAT(F(a,b))$. By the second item of \prref{prop:freekle} there is a regular set $K\in \REG(\os{a,b}^*)$ such that $h(K)=R$. A direct calculation shows $\set{a^nb^n}{n\in \N} \sse K\sse \set{w\in \os{a,b}^*}{|w|_a=|w|_b}$. But there is no such regular set~$K$ because otherwise $\set{a^nb^n}{n\in \N}= K \cap a^*b^* \in \REG(\os{a,b}^*)$, which is not regular, see \cite{HU}. 
A contradiction. \end{proof}

\begin{remark}\label{rem:sl34z}
\prref{lem:dr95How} also implies that $\RAT(\SL(4,\Z))$ is not closed under finite intersection
because $\SL(4,\Z)$ contains the product $\SLZ\times \Z$ and
$\SLZ$ contains a free group of rank~$2$.
On the other hand, it is still open whether $\SL(3,\Z)$ satisfies 
the Howson property, see \cite{LongR2011}; and we also do not know whether $\RAT(\SL(3,\Z))$ is closed under finite intersection.
\hspace*{\fill}$\diamond$\end{remark}

Let $M$ be a monoid and $N\leq M$ be a submonoid. 
Following the French school around \Schuetz, we say that $(M,N)$ satisfies the \emph{Fatou property}\footnote{The notation was coined for groups in \cite{BerstelS86MFCS} as an analogue of a result of Fatou who published in 1904 that a rational series of $\Q[x]$ whose coefficients are all integers is a rational series of $\Z[x]$.}
if 
\begin{align}\label{eq:fatou}
\RAT(N)= \set{L\in \RAT(M)}{L\sse N}
\end{align}
Even for \fg commutative monoids the Fatou property 
 does not hold in general. To see this let 
 $M=\N \times \N$. Then $N= (0,0) \cup \set{(m,n)\in M}{m\geq 1}$ 
 is easily seen to be submonoid of $M$ which is not finitely generated (see also \prref{rem:recinG}). 
 Hence $N\notin \Rat(N)$. On the other hand, we have $N\in \Rat(M)$ because
in the additive notation
$\set{(m,n)\in M}{m\geq 1}$ is the linear set $(1,0) +\N(1,0) + \N(0,1)$,
and hence $N$ is a semi-linear subset\footnote{The definition of semi-linear set is in the third item of \prref{exs:class}.} of $M$.

Thus, we need some restrictions either on $M$ or $N$, or both.
In \cite{BerstelS86MFCS,FrougnySS89} 
it is stated that the Fatou property holds for groups by similar arguments as given in \cite{AnisimovS75}. However, the authors do not give any proofs. The first published proof (we are aware of) 
was given by Herbst using the notion of \emph{star height}, see \cite{Herbst91}. An immediate corollary of \prref{thm:silva} is that the Fatou property holds for groups. (In order to have a reference, we state this explicitly in \prref{cor:silvana}.)
Our proof of \prref{thm:silva} uses 
NFAs which is important for our complexity results. Under the assumption that $G$ is \fg and that the index~$[G:H]$ is 
finite the Fatou property for groups has been shown in \cite{Grunschlag1999,Silva02} and, for \fg~virtually free~groups, in \cite{sen96actainf}. To the best of our knowledge, our proof that works directly with NFA's without increasing their sizes was first published  in
the conference paper \cite{DiekertPS20}. We apply it to 
$\SLZ$ and $\mathrm{GL}(2,\mathbb{Q})$. Here, $\mathrm{GL}(2,\mathbb{Q})$ is not finitely generated, and 
the index~$[\mathrm{GL}(2,\mathbb{Q}):\SLZ]$ is infinite.

\begin{theorem}\label{thm:silva}
Let $\cA$ be a $G$-NFA and~$K$ be the subgroup of a group $G$ which is generated by $L(\cA)\sse G$. Then there is a trim~$K$-NFA $\cA'$ which accepts $L(\cA)$  such that the number of states and transitions is bounded by that of $\cA$.  
Moreover, if $G$ has an \enu and if the labels of $\cA$ are given by words in the decidable set $W_G$ as in \prref{def:enu}, then the construction of $\cA'$ is effective.
\end{theorem} 
\begin{proof}
First, we trim the automaton $\cA$.
Therefore, from now on, we assume that every state~$p$ 
(and hence every \tra) is on some accepting path. There is a finite set $\Gam\sse G$ such that for every \tra $p\arc g q$ we have both $g$ and $\oi g$ in $\Gam$. For $g\in \Gam$ we define $\ov g$ by $\ov g= \oi g$. Thus, $\Gam$ is finite set with \invol. The inclusion
$\eta:\Gam\sse G$ induces a \hom $\psi: \Gam^*\to G$ from the free monoid $\Gam^*$ with \invol onto~$K$. Recall that the \invol 
on a word $a_1\cdots a_k$ with $a_i\in \Gam$ is defined by 
$\ov{a_k}\cdots \ov{a_1}$. Thus, $\psi$ respects the \invol.

In case when $G$ has an \enu, we know by assumption that all labels of $\cA$ belong to a decidable set $W_G \sse \Sig^*$ as in \prref{def:enu}. We let $\Gam_+\sse W_G$ be the finite set of labels $u\in W_G$ which 
appear on some \tra $p\arc u q$. By \prref{prop:alpinvol},
there is a computable mapping~$\rho$ from $\Gam_+$ to some finite 
subset $\Gam\sse W_G$ with \invol~$\ovv$ 
such that $\eta(\Gam)$ generates the the same subgroup as $\eta(\Gam_+)$,
$\eta(\ov u)=\oi{\eta(u)}$ for all $u\in \Gam$, and $\eta(\rho(u)) = \eta(u)$ for all $u\in \Gam_+$. Thus, as in the case when $\Gam\sse G$ above, $\eta$ can be extended to a  \hom $\psi: \Gam^*\to G$ from the free monoid $\Gam^*$ to $G$ which respects the \invol. Using~$\rho$, we relabel all \tras in $\cA$ by letters in $\Gam$. 

Therefore we can use a unified notation for both cases $\Gam\sse G$ and~$\Gam\sse W_G$. 
\Ip even for $\Gam\sse G$ we write $\psi(L(\cA))$ rather than $L(\cA)$. That is, we consider $\cA$ as an automaton over the free monoid $\Gam^*$ rather than $G$ since every 
sequence $g_1\lds g_k$ of~$k$ elements in $G$ has a natural evaluation $g_1\cdots g_k$ in $G$ which coincides with $\psi(g_1\cdots g_k)$.
So, in our notation,~$K$ is the subgroup generated by $\psi(L(\cA))$.

Since $\cA$ is trim, for every state~$p$ of $\cA$ there are shortest words~$u_p,v_p\in \Gam^*$ such that~$u_p$ is the label of a path from an initial state to~$p$ and~$v_p$ is the label of a path from~$p$ to a final state. 
Since $K=\langle\psi(L(\cA))\rangle$ we have $\psi(u_p  v_p) \in K$ for all $p\in Q$. 
We also have $\psi(\ov{u_p})= \oi{\psi(u_p)}$ and 
$\psi(v_p)\in  \psi(\ov {u_p})K$. Therefore the left-coset of $\psi(v_p)$ in $G/K$ is unique: it depends on~$p$ and not on the choice of $v_p$. Thus, we can write $\psi(v_p)\in  \psi(r_p)K$ with $r_p=\ov {u_p}$ for $p\notin I\cup F$.  For $p\in I\cup F$, we can choose $r_p=\ov{r_p}=1$, where $1$ denotes the empty word in $\Gam^*$. This choice is possible since for $p\in I$ (resp.,~$p\in F$) we have 
$u_p=1$ (resp.,~$v_p=1$), and hence $\psi(v_p)\in K$. 

Next, we make $\Gam$ possibly larger such that $\Gam$ 
contains two letters $r_p$ and $\ov{r_p}$ for all $p\notin I\cup F$. We define (respectively redefine if necessary) $\eta$ for $r_p$ and $\ov{r_p}$ by 
$\eta(r_p)= \oi{\psi(u_p)}$ and $\eta(\ov{r_p})= \psi(u_p)$.
As above, $\eta$ induces a \hom $\psi: \Gam^*\to G$ respecting the \invol.  

Having defined the coset representatives $r_p$, we transform the NFA $\cA$ into an NFA~$\cB$ as follows. The state 
space of~$\cB$ is defined as the union $Q\cup \ov{Q}$ where $\ov{Q}$ is a disjoint copy of~$Q$. We denote the copy of $p\in Q$ by $\ov p\in \ov{Q}$.

The \tras in~$\cB$ are defined in two steps. In the first step, we 
introduce for each $p\in Q$ an additional outgoing \tra
$p\arc  {r_p} \ov p$ and an additional incoming \tra
$\ov p\arc {\ov{r_p}} p$. Since $\psi(r_p)\psi(\ov{r_p})=1\in G$ for all $p\in Q$, this does not change the accepted language by \prref{lem:silva}. Recall that $r_p=\ov{r_p}=1$ for all $p\in I\cup F$. Thus, an $\eps$-\tra (that is, a \tra with label~$1$) leads from~$p$ to~$\ov p$ and from~$\ov p$ to~$p$ for all $p \in I\cup F$. Therefore we do not change the accepted language by enlarging the sets of initial and final states by $I\cup \ov I$ and $F\cup \ov F$, respectively. 

In the second step, we consider every \tra $p\arc a q\in \del$  with $p,q\in Q$ in some order. 
Since $\phi(u_pv_p)\in K$, $\phi(u_pav_q)\in K$, and $\psi(v_p)K= \psi(r_p)K$, we have
$\psi(av_q)K= \psi(v_p)K = \psi(r_p)K$. We also have $\psi(v_q)K= \psi(r_q)K$, and therefore $K=\psi(\ov{r_p})\psi(a) \psi(v_q)K= \psi(\ov{r_p}ar_q)K$, which is equivalent to $\psi(\ov{r_p}\, a\, {r_q})\in K$.

Hence, defining $h=\ov{r_p}\, a\, {r_q}\in \Gam^*$, we obtain $\psi(h)\in K$. 
Having this, we introduce for~$\cB$
a new \tra $\ov p\arc h \ov q$. See \prref{fig:cB} for a visualization of the NFA~$\cB$.
\begin{figure}[h]
\begin{center}
\begin{tikzpicture}[xscale=2.95,yscale=1.5]
\draw (-1,1) node (lt) {$\ov I \ni \ov \iota$};
\draw (-1,0) node (lb) {$I \ni \iota$};
\draw (3,1) node (rt) {$\ov \tau \in \ov F$};
\draw (3,0) node (rb) {$\tau \in F$};

\draw (0,1) node (pr) {$\ov p$};
\draw (1,1) node (qt) {$\ov q$};
\draw (0,0) node (p) {$p$};
\draw (1,0) node (q) {$q$};
\draw (2,1) node (tr) {$\ov t$};
\draw (2,0) node (t) {$t$};
 
\draw[dotted,->,>=latex] (lt) -- (pr);
\draw[dotted,->,>=latex] (lb) -- (p);
\draw[dotted,->,>=latex] (tr) -- (rt);
\draw[dotted,->,>=latex] (t) -- (rb);

\draw[->,>=latex] (pr) to node [above] {$h= \ov{r_p}\, a\,{r_q}$} (qt);
\draw[->,>=latex] (p) to node [above] {$a$} (q);
\draw[->,>=latex,bend left=10] (lt) to node [right] {$1$} (lb);
\draw[->,>=latex,bend left=10] (lb) to node [left] {$1$} (lt);
\draw[->,>=latex,bend left=10] (rt) to node [right] {$1$} (rb);
\draw[->,>=latex,bend left=10] (rb) to node [left] {$1$} (rt);

\draw[->,>=latex,bend left=10] (pr) to node [right] {$\ov{r_p}$} (p);
\draw[->,>=latex,bend left=10] (p) to node [left] {${r_p}$} (pr);
\draw[->,>=latex,bend left=10] (qt) to node [right] {$\ov{r_q}$} (q);
\draw[->,>=latex,bend left=10] (q) to node [left] {${r_q}$} (qt);
\draw[->,>=latex] (q) to node [above] {$b$} (t);

\draw[->,>=latex] (qt) to node [above] {$h'= \ov{r_q}\, b\,{r_t}$} (tr);
\draw[->,>=latex,bend left=10] (tr) to node [right] {$\ov{r_t}$} (t);
\draw[->,>=latex,bend left=10] (t) to node [left] {${r_t}$} (tr);
\end{tikzpicture}
\end{center}
\caption{The construction of the NFA~$\cB$ yields 
$h=\ov{r_p}\, a {r_q}\in \oi{\psi}(H)$ and $h'=\ov{r_q}\, b {r_t}\in \oi{\psi}(H)$.
}\label{fig:cB}
\end{figure}\\
We claim that $\psi(L(\cA)) = \psi(L(\cB))$. The inclusion $\psi(L(\cA))\sse \psi(L(\cB))$
is trivial. For the other direction we use \prref{lem:silva}: since 
$\psi(h)=\psi(\ov{r_p}\, a {r_q})$ in $G$, we did not change $\psi(L(\cA))$. 

Finally, we define the NFA $\cA'$ by removing from~$\cB$ all states in~$Q$ (together with the incident \tras). \Ip all the remaining \tras are of the 
form $\ov p\arc {h} \ov q$ with $\psi(h)\in K$, the set of initial states is $\ov I$, and the set of final states is $\ov F$. 
We can think of $\cA'$ as a disjoint copy of $\cA$ where 
a \tra $p\arc a q$ with $a\in \Gam$ has been replaced in its copy by the \tra
$\ov p\arc h \ov q$ with label $h=\ov{r_p}\, a\, {r_q}$ such that $\psi(h)\in K$. 
Note that the construction of $\cA'$ is effective if $G$ has an \enu and if the labels of $\cA$ are in the decidable set $W_G$. 

Since we already know that $\psi(L(\cA))=\psi(L(\cB))$, it remains to show
$\psi(L(\cA'))=\psi(L(\cB))$. For this we use a dual construction. Note that if we define $\ov{\ov p}=p$ for all $\ov p \in \ov Q$, then $Q \cup \ov Q$ becomes a set with \invol. Now we perform the same construction as above starting with $\cA'$ (which is the upper line in \prref{fig:cB}) but replacing $p$ with $\ov p$, $r_p$ with $\ov r_p$, etc. \Ip we will have a \tra $p \larc {{r_p}\, h\, \ov{r_q}} q$ between $p$ and $q$ in \prref{fig:cB} instead of $p \larc {a} q$. Let $\cB'$ be the resulting automaton. Since $\psi({r_p}h\ov{r_q}) = \psi(a)$, we conclude that $\psi(L(\cB'))=\psi(L(\cB))$. On the other hand, by \prref{lem:silva}, we have $\psi(L(\cA'))=\psi(L(\cB'))$. This completes the proof of the theorem. 
\end{proof}

\begin{corollary}\label{cor:silvana}
Let $G$ be a group with a subgroup $H$ and $\cA$ be a $G$-NFA with~$n$ states and~$m$ \tras such that $L(\cA)\sse H$. Then there is 
a (trim) $H$-NFA $\cA'$ with at most~$n$ states and at most~$m$ \tras such that $L(\cA')=L(\cA)$. \Ip the groups satisfy  
the \emph{Fatou property} of \prref{eq:fatou}: that is, we have
$\set{L\sse H}{L\in \RAT(G)}=\RAT(H)$.
\end{corollary}
\begin{remark}\label{rem:NyBrodda}
Nyberg-Brodda has recently shown in \cite{NybergBrodda2023} that there
is finitely generated (and \emph{context-free}) monoid $M$ such that its group of units is a rational but not finitely generated. 
Thus, \fg monoids fail to satisfy the Fatou property with respect to subgroups. In his example there is a set of three generators 
$\os{a,b,c}$. The defining relations are $\set{(ab^ic)^2=1}{i \in \N}$. The resulting semi-Thue system is easily seen to be confluent and Noetherian. It follows that  
$M$ is not Dedekind-finite (since
$acac=1$ but $1\neq caca$) and its group of units is the 
rational submonoid $F = (ab^* c)^*$. Thus, $U(M)$ is the free product $F=\ast_{n\in \N} \Z/2\Z$, which is not \fg
\hspace*{\fill}$\diamond$\end{remark}

\begin{corollary}\label{cor:silvaH}
Let $G$ have an \enu and~$H$ be a subgroup such that the membership problem for~$H$ is decidable.
Then, we can decide for a $G$-NFA $\cA$, whose labels are given by words in the set $W_G$ in the notation of \prref{def:enu}, whether $L(\cA)\sse H$. 
\end{corollary}

\begin{proof}
Let~$K$ be the subgroup of $G$ generated by $L(\cA)$. 
We apply \prref{thm:silva} to effectively construct a~$K$-NFA $\cA'$ such that  $L(\cA')=L(\cA)$ and where the \tras in $\cA'$ have labels in $W_G$ such that their image in $G$ generates the subgroup~$K$.
Therefore, $L(\cA)\sse H$ is decidable because 
the membership problem for~$H$ is decidable, and hence we can check whether the labels of \tras of $\cA'$ belong to~$H$.  
\end{proof}

\begin{corollary}\label{cor:finext}
Let $G$ be a \fg group and~$H$ a subgroup of finite index. Then $\Rat(H)$ is a Boolean algebra \IFF  $\Rat(G)$ is a Boolean algebra. Moreover, the membership problem for rational subsets of~$H$ is decidable \IFF it is decidable for~$\Rat(G)$. \end{corollary}
\begin{proof}
It is well-known and easy to see that $G$ is \fg~\IFF~$H$ is \fg Therefore both groups $G$ and~$H$ are \fg \Ip they have \enu{s}, which allows us to apply the effectiveness condition in \prref{thm:silva}.

Assume  that $\Rat(G)$ is a Boolean algebra. Let us show that $\Rat(H)$ is a Boolean algebra, too. Note that 
$\Rat(H)\sse \Rat(G)$; and we have $H\in \Rat(G)$ since~$H$ is finitely generated.
Thus, for every $R\in \Rat(H)$, we have $H\sm R\in \Rat(G)$, and hence $H\sm R\in \Rat(H)$ by \prref{cor:silvana}.  This shows that $\Rat(H)$ is a Boolean algebra. If the
membership for rational sets of $G$ is decidable, then the membership for rational sets of~$H$ is decidable because $\Rat(H)\sse \Rat(G)$.

For the other direction, assume $\Rat(H)$ is a Boolean algebra. In order to show that $\Rat(G)$ is a Boolean algebra let $R\in \RAT(G)$. We have to show that $G\sm R\in \RAT(G)$. 
Since the index $[G:H]$ is finite, there is subgroup $N\leq H$ which is normal in $G$. (Actually,  $N=\bigcap\set{gH\oi g}{g\in G}$ and the intersection is finite since $[G:H]<\infty$.) Let $\phi:G\to G/N$ be the canonical \hom. Then $\phi$ recognizes $H$. 

Let $\os{r_1,\ldots,r_k}\sse G$ be
representatives of left cosets of~$H$, where $k=[G:H]$, such that for each $g\in G$ there is exactly one $r_g$ with 
$g\in r_gH$. 
Thus, $g\notin R$ 
\IFF $r_g^{-1}g\in H\sm r_g^{-1}R$. In other words, $G\sm R = \bigcup_{i=1}^k r_i(H\sm r_i^{-1}R) = \bigcup_{i=1}^k r_i(H\sm (r_i^{-1}N\cap N))$.

By \prref{prop:freekle} we have $r_i^{-1}R \cap H\in \Rat(G)$ because~$H$ is recognizable. 
By \prref{cor:silvana} we have $r_i^{-1}R \cap H\in \Rat(H)$. Since $\Rat(H)$ is a Boolean algebra, $H\sm (r_i^{-1}R \cap H)\in \Rat(H)$, and
we conclude that $G\sm R\in \RAT(H)$.

It remains to show that the membership for $\RAT(G)$ is decidable if the membership for $\RAT(H)$ is decidable. Since $G$ is \fg there 
is some finite generating subset $\Gam\sse G\sm \os 1$ such that $\Gam=\oi{\Gam}$. Thus, every word in $w\in \Gam^*$ has a natural interpretation in the group $G$. 
The \emph{Schreier graph}, also called the \emph{coset graph}, has been defined in \cite{Schreier1927} for~$H$ \wrt~$\Gam$. It is a directed graph
where the set of vertices $V$ is the finite set of all left 
cosets: $V=\set{gH}{g\in G}$. The directed edges are labeled by generators and defined as 
$gH \arc a agH$ for all $a\in \Gam$ and $gH\in V$. Thus, the out-degree of 
each vertex is $|\Gam|$.
We construct the  Schreier graph of $H$ by exhaustive search. 
The construction yields rooted tree $T$ where the nodes $V(T)$ are words in $\Gam^*$. We begin with $T=\os 1$ where $1$ the empty word representing the coset $H$. 
 During the process some 
nodes without  \emph{children} will become a leaf in the final tree. For that we define a subset $L(T)\sse V(T)$ which initially  is empty. The invariant is that all nodes in $L(T)$ are leaves. 

Next, while $V(T)\sm L(T)\neq \es$ we repeat the following loop.
\begin{enumerate}
\item Choose any node $g\in V(T)\sm L(T)$.
\item For each $a\in \Gam$ (in some order) consider
the word $ag \in \Gam^*$, and decide whether $agH=hH$ for some 
$h\in V(T)$. (This is possible because the membership in
$\oi h ag \in H$ is decidable.) 
 If for all $h\in V(T)$ we have $\oi hag \notin H$, then 
the word $ag$ represents the coset $agH$ (which was not represented in $V(T)$ so far) and we add $ag$ to $V(T)$ as a child of $g$.
\item If $g$ is still without any child by the previous step, then $g$ becomes a leaf in the tree $T$. That is, we update $L(T)$ redefining it as $L(T)\cup \os g$. 
\end{enumerate}
Let us show that the algorithm terminates. The first observation is that 
$V(T)$ grows as long as $|V(T)|$ is less than the index of $H$ in $G$.
Thus, the algorithm reaches a point where $|V(T)|=[G:H]$. Having this, all nodes without children become leaves because $\Gam$ is finite. 
At this point the algorithm stops with $V(T)=\os{r_1,\ldots,r_k}\sse G$. The representatives $r_i\in V(T)$ are written as words in $\Gam^*$.

After computing the coset representatives $\os{r_1,\ldots,r_k}\sse G$ using the above procedure, we can decide membership to $R\in \Rat(G)$ using the following equivalence: $g\notin R$ 
	\IFF for some $i\in \{1,\ldots,k\}$, we have $r_i^{-1}g\in H\sm (r_i^{-1}R\cap H)$.
\end{proof}

\begin{remark}\label{rem:ToddCox}
The algorithm in the proof  \prref{cor:finext} yields a coset enumeration, and the algorithm is typically called the \emph{Todd-Coxeter coset-enumeration}. Its original version in~\cite{ToddCoxeter1936} was designed for finding a finite presentation for finite groups, only. For finitely presented groups the coset-enumeration yields an effective construction of the Schreier graph if $[G:H]$ is finite, 
see~\cite{LS01}. 
However, even for finitely presented groups there is no computable upper time bound for the Todd-Coxeter coset-enumeration in general. 
\end{remark}

\section{Smith normal forms and commensurators}\label{sec:SNFC}

It is a classical fact from linear algebra that every matrix 
$m\in \MnQ$ admits a \emph{Smith normal form}. 
For $n=2$, the \SNF of a non-zero $m\in \Q^{2 \times 2}$ is a factorization 
\begin{equation}\label{eq:snf2}
m=r\, e\, \vdmatrix 100q\,f
\end{equation}
such that $r\in \Q$ is a positive rational number, $e,f\in  \SL(2,\Z)$, and $q\in \Z$. Note that we may assume that $r$ is positive because $m\neq 0$ and $\vdmatrix {-1}00{-1}\in \SLZ$. Since $r^2q= \det(m)$, the sign of~$\det(m)$ is determined by the sign of~$q$.
For $q\in \Z$ we fix the notation
\begin{equation}\label{eq:sqref}
s_q=\vdmatrix 100q.
\end{equation}
If we write $m=r\, e\, s_q\,f$ for $m\in \MQ$, then we refer 
to it as the \SNF of $m$ according to (\ref{eq:snf2}) and  (\ref{eq:sqref}).
We use \SNF{s} only when $n=2$.
The computation of \SNF is closely related to Gaussian elimination and relies on $\gcd$-computations. More details are given in \prref{sec:comSNF}.

\subsection{Computation of the \SNF}\label{sec:comSNF}
As mentioned above, a \emph{\SNF} of a non-zero matrix $m$ in $\MQ$ is defined by a factorization $m= r \cdot e\vdmatrix100q f$ where 
$0<r\in \Q$, $e,f\in \SL(2,\Z)$, and $q\in \Z$. Moreover, $r$ and~$q$ are uniquely determined by the matrix~$m$ (but $e$ and $f$ are not unique).
The uniqueness of $r$ and~$q$ can be seen as follows. Let 
$m= r_1 \cdot e_1\vdmatrix100p f_1= r_2 \cdot e_2\vdmatrix100q f_2$ with $0<r_i\in \Q$, $e_i,f_i\in \SL(2,\Z)$ for $i=1,2$, and $p,q\in \Z$.
Multiplying $m$ on the left by $\oi {r_2} \cdot \oi{e_2}$ and on the right by $\oi{f_1}$ yields $\frac{r_1}{r_2} \cdot e\vdmatrix100p = \vdmatrix100 q f$ with $e,f\in\SLZ$. 
Since $0<\frac{r_1}{r_2}$ we can write $\frac{r_1}{r_2}=\frac{s}{t}$, where $s,t$ are positive natural numbers 
such that $\gcd(s,t)=1$. 
Therefore, it is enough to show that 
$\frac st \cdot e\vdmatrix100p = \vdmatrix100 q f$ implies 
$s/t= 1$ and $p=q$. 
Let $e=(e_{ij})$ and $f=(f_{ij})$, then 
\[
\VDmatrix{se_{11}}{spe_{12}}{se_{21}}{spe_{22}}= \VDmatrix{tf_{11}}{tf_{12}}{tqf_{21}}{tqf_{22}}.
\]
Since $\gcd(s,t)=1$, the positive integer $t$ divides $e_{11}$ and
 $e_{21}$. Hence, $t$ divides~$\det(e)=1$. Thus, $t=1$, and by symmetry we also have $s=1$. 
 Therefore, $e\vdmatrix100p= \vdmatrix100 q f$, and hence 
 $\det(\vdmatrix100p)=\det(\vdmatrix100q)$. Clearly, this implies $p=q$.
 
 The following lemma is a special case of a polynomial-time result by Kannan and Bachem~\cite{KannanBachem79}. We include a proof 
 because the result for $2\times 2$ matrices is rather easy to show. Moreover, for $2\times 2$ matrices we obtain a soft cubic time bound whereas~\cite{KannanBachem79} just states polynomial time.\footnote{We did not check whether ``soft cubic time'' is an upper bound for computing the \SNF in higher dimensions, too.}

\begin{lemma}\label{lem:KB}
On input $0\neq m\in \MQ$ with $n=\Abbin {m}$
we can compute $0<r\in \Q$, matrices $e,f\in \SLZ$, and $q\in \Z$ in soft-cubic time $\wt \Oh(n^3)$ such that 
$m = r\cdot e\vdmatrix100q f$.
\end{lemma}
Our proof follows~\cite{KannanBachem79}. 
It relies on the fact that $\gcd$'s can be computed in cubic time. This fact is straightforward, but it is not optimal. For example, Sch\"{o}nhage \cite{Schonhage1971ActaInf} gives a $\Oh(n(\log n)^2(\log\log n))$ algorithm. M\"{o}ller \cite{Moeller2008} gives another 
quasi-linear time algorithm which (according to M\"{o}ller) runs slightly faster than earlier quasi-linear time algorithms.
\begin{proof} 
On input $m$ we calculate some positive integer~$p$ such that $p\cdot m=\vdmatrix abcd$ where $A=\vdmatrix abcd\in \MZ$. For example, we may choose the product over the denominators of all entries in $m$.
Knowing the \SNF of $A$, we obtain the \SNF of $m$ by multiplication with
$\oi p$. Hence, \Wlog, we assume that $m = A$, and let $D=\det(A)$.
 
With the help of matrices $e,f\in \SLZ$ we may assume $a = \Abm {A}\geq 1$.
If $a=1$, then we are done: we have $r=1$ and $q=D$ because
\(
\vdmatrix 10{-c}1\cdot
\vdmatrix 1bcd \cdot \vdmatrix 1{-b}01= \vdmatrix 10{0}{d-bc}
\).
Hence, from now on we assume $a\geq 2$.  
In the first phase we reduce the problem to the case where $A$ is a diagonal matrix. This is true if $b=c=0$. By symmetry, we may assume in the first phase that $a\geq 2$ and $b\neq 0$.\\
\textbf{First phase.}
Let $1\leq g=\gcd(a,b) = pa +qb$ with $0\leq q < a$. This is possible since $(p+b)a +(q-a)b = pa +qb$. 
 Then 
$\vdmatrix p  {-b/g}q{a/g}\in \SL(2,\Z)$, and hence
\[\vdmatrix abcd \cdot \vdmatrix p {-b/g}q{a/g} = 
\vdmatrix{g}{0}{pc+qd}{D/g}.\]
If $\gcd(a,b) = a$, then we choose $p=1$ and $q=0$. 
Otherwise $a/g\geq 2$ and, since $b\neq 0$, we have
$1\leq \abs b< a= \Abm A$. Hence: 
\[\abs{p}= \abs{\frac{g-qb}{a}}\leq \frac{g}{a} + \frac{(a-1)\abs b}{a}
\leq 1/2 + a-1 <a.\]
Thus, after the first step and by left-right symmetry due to transposition of matrices, 
we may assume without restriction
that we actually start with a matrix 
\[A'\in \left\{\vdmatrix g 0 {D'} {D/g},\, \vdmatrix g {D'} 0 {D/g}\right\},\]
where $g | a$ and $0\leq \abs{D'} < 2 {\Abs A}_{\text{max}}^2$.
If $D'=0$, then we stop because the matrix is diagonal which is the aim for this phase.

We now assume that $D'\neq 0$ and $A'= \vdmatrix g 0 {D'} {D/g}$. Let $g'=\gcd(g,D') = pg +qD'$ with $0\leq q<g$. We have $\vdmatrix p q {-D'/g'}{g/g'}\in \SLZ$ and 
\[\vdmatrix p q {-D'/g'}{g/g'}\cdot \vdmatrix g 0 {D'} {D/g}= 
\vdmatrix{g'}{q D/g}{0}{D/g'}.\]
If $g\mid D'$, then $q=0$ and the above matrix is diagonal. So, we stop the first phase. 
Thus, without restriction $0< q<g$, and, \ip$g\neq 1$.
Clearly:
$g'\mid  g \mid a$. Let $D''= q D/g$. Then
\[0\leq \abs{D''} < \abs{D} \leq 2 {\Abs A}_{\text{max}}^2.\]
Since $0< q<g$, we have $g'<g$.
Since each time we have either $g/g' \geq 2$ or $g\mid D'$, we finish after at most $\log {\Abs A}_{\text{max}}$ steps. 
This completes the first phase.\\
\textbf{Second phase.} We continue with a matrix $A'' = \vdmatrix g00{D/g}$ for some
$g\mid a$. If $g\mid D/g$ we are done. Thus, 
\Wlog $D\neq 0$ and  letting
 $d=D/g$ we write 
 \[\vdmatrix g00{d}= 
 \vdmatrix {\gcd(g,d)}{0}{0}{\gcd(g,d)} \cdot \vdmatrix {g/\gcd(g,d)}{0}{0}{d/\gcd(g,d)}.\]
 Let $g'=g/\gcd(g,d)$ and $d'=d/\gcd(g,d)$. Note that $g'\geq 2$ because $g\neq \gcd(g,d)$.
 We add the right column of $\vdmatrix{g'}00{d'}$ to the left one by multiplying with the matrix $\vdmatrix 1011$. We obtain the matrix $\vdmatrix {g'}{0}{d'}{d'}$.
We let $pg'+qd'=1$ with ${0\leq}q<g'$ and $p=(1-qd')/g'$. Hence, $\abs p {\leq \abs{d'} + 1/g' -\abs{d'}/g'  \leq \abs{d'}} \leq \abs D{\leq 2 {\Abs A}_{\text{max}}^2}$. 
Then,  \[
\vdmatrix pq{-{d'}}{g'}\cdot\vdmatrix {g'}0{d'}{d'} = \vdmatrix 1{qd'}0{{g'}d'}.
\]
Subtracting  $q{d'}$ times the left column from the right one
by multiplying with the matrix $\vdmatrix 1{-q{d'}}01$, we obtain 
the desired result. 
\end{proof}

\subsection{Commensurators}\label{sec:commens}
The notion of a commensurator is well established in group theory.  
Let $G$ be a group and $H$ be its subgroup. Then the \emph{commensurator of $H$ in $G$} is defined to be the set of all $g\in G$ such that $H \cap H^g$ has finite index in $H$ and in $H^g$, see for example~\cite[Def.~5.17]{drutuK2018}. 
Here, and in the following, we abbreviate  $g H \oi g$ as $H^g$
which is a standard notation in group theory.
It is a known fact that the commensurator is a subgroup of $G$, see~\cite[Ex.~5.18]{drutuK2018}.\footnote{ Note that in geometric group theory there is a more general notion of an \emph{abstract commensurator}, which is different from what we use here, see \cite[Def.~5.13]{drutuK2018}.}

Now, let $H$ be an arbitrary group. For the sake of brevity, we say that a group $G$ containing $H$ is \emph{a commensurator of} $H$ 
if  for all $g\in G$ the subgroup 
$H\cap H^{g}$ has finite index in~$H$.\footnote{In \cite{krieg1990hecke}, a group $G$ and its subgroup $H$ that satisfy such property are called a \emph{Hecke pair} $(H,G)$.} Note that this also implies that $[H^g : H \cap  H^g] = [H : H^{\oi g} \cap  H]$ is finite for all $g\in G$. Hence $G$ is the commensurator of $H$ in $G$.

If~$H$ has finite index in $G$, then $G$ is a commensurator of~$H$ because the intersection $H\cap H^{g}$ has finite index in $G$ (and hence in $H$) for any $g\in G$.
If $K\leq H$ are subgroups of $G$ and $G$ is a commensurator of $K$, then obviously $H$ is a commensurator of $K$, too.
We also use the following lemma in the proof of \prref{prop:commens}.
\begin{lemma}\label{lem:ind}
Let $K\leq H \leq G$ be a chain of subgroups such that the index $[H:K]$ is finite. Then $G$ is a commensurator of $K$ \IFF $G$ is a commensurator of $H$.
\end{lemma}	
\begin{proof}
Suppose that $G$ is a commensurator of $K$. Then for all $g\in G$ we have:
\begin{align*}
[H:H \cap  H^{g}]\leq [H:K \cap  K^{g}]= [H:K][K:K \cap  K^{g}] <\infty.
\end{align*}
Since  $[H:K]$ is finite, $G$ is a commensurator of $H$.
For the other direction, it is enough to show that for all $g\in G$ we have
$[K:K \cap  K^{g}]\leq [H:H \cap  H^{g}][H:K]$ since, by assumption, both $[H:H \cap  H^{g}]$ and $[H:K]$ are finite.
For that, we start with the following equation
\begin{equation}\label{eq:comm}
\begin{split}
[H:K][K:K \cap  K^{g}] &=	[H:K \cap  K^{g}]\\ &= [H:H \cap  H^{g}][H \cap  H^{g}:K \cap  H^{g}][K \cap  H^{g}:K \cap  K^{g}].
\end{split}
\end{equation}
Next, we use the fact that for all subgroups $N$ and $K$ of $H$, the set of left cosets $N/K\cap N$ embeds into 
$H/K$, and hence $[N : N\cap K] \leq [H:K]$.
The fact implies that
\begin{align*}
[H \cap  H^{g}:K \cap  H^{g}] &= [H \cap  H^{g}: (H \cap  H^{g}) \cap  K] \leq [H:K]\quad \text{and}\\
[K \cap  H^{g}:K \cap  K^{g}] &= [K \cap  H^{g}: (K \cap  H^{g}) \cap  K^{g}]
\leq [H^{g}:K^{g}]= [H:K].
\end{align*}
Substituting these inequalities in~\eqref{eq:comm} and dividing every term by $[H:K]$, we obtain that $[K:K \cap  K^{g}]\leq [H:H \cap  H^{g}][H:K]$, which proves the lemma.
\end{proof}

The statement of the following \prref{prop:commens}
holds for all $n\in \N$.
However, the case $n=2$ has a short proof which is also given below. 
\begin{proposition}\label{prop:commens}The group $\GL(n,\Q)$ is a commensurator of $\SL(n,\Z)$ and of any subgroup $G\leq \GL(n,\Q)$ which contain $\SL(n,\Z)$ as a subgroup of finite index.
\Ip $\GL(n,\Q)$ is a commensurator of both $\SL(n,\Z)$ and $\GL(n,\Z)$.
\end{proposition}

\begin{proof}
In~\cite[Ch.~V]{krieg1990hecke}, it is shown that $\GL(n,\Q)$ is a commensurator of $\GL(n,\Z)$ for all $n\in \N$.
Using \prref{lem:ind}, we see that $\GL(n,\Q)$ is a commensurator of $\SL(n,\Z)$, too. This implies the result.
\end{proof}

For $n=2$ we give a short and direct proof of \prref{prop:commens} based on
the \SNF, which we have defined only for $n=2$.
For $n\geq 3$, such a proof becomes more technical (see \cite[Ch.~V]{krieg1990hecke}).
Our applications only concern $2\times 2$ matrices.
\begin{proof}[Proof of \prref{prop:commens} for $\boldsymbol{n=2}$]
It is enough to show that $\GL(2,\Q)$ is a commensurator of $H = \SL(2,\Z)$.
To see this, 
recall that $s_q= \vdmatrix 100q$ (as in \prref{eq:sqref}), where $q\in \Z$. 
Writing a matrix $g\in \GL(2,\Q)$ in its Smith normal form yields $g=r\, e\, s_q\,f$ with $r\in \Q$ and $e,f\in \SLZ$. Then the index of $g H \oi g\cap
H$
in~$H$ is the same as the index of $s_q H s^{-1}_q \cap H$ in~$H$. 
We have $s^{-1}_q \vdmatrix a b c d s_q= \vdmatrix{a} {qb} {c/q} d$. Hence, a matrix $\vdmatrix a b c d\in H$ belongs to the intersection
$s_q H s^{-1}_q \cap H$ \IFF $c\in q\Z$. Thus, 
$\ker({\bmod\, q})\sse s_q H s^{-1}_q \cap H$, where ${\bmod\, q}:\SL(2,\Z)\to \SL(2,\Z/q \Z)$ is the canonical \hom. Thus, the index of 
$s_q H s^{-1}_q\cap H$ in~$H$ is bounded by the size of the finite group $\SL(2,\Z/q \Z)$.
It follows that $\GL(2,\Q)$ is a commensurator of $\SL(2,\Z)$. \end{proof}

The size of $\SL(2,\Z/q \Z)$ is obviously bounded by some polynomial in~$q$. This would be good enough for our purposes, but not good enough in practical applications. As a matter of fact, 
there is a better and more precise estimate for the index of $s_q H s^{-1}_q\cap H$ in~$H$ which is stated next.
\begin{proposition}\label{prop:indexH}
As above, denote $H=\SL(2,\Z)$. Let $g\in \mathrm{GL}(2,\mathbb{Q})$ and $g=r\, e\, s_q\,f$ be its \SNF with $0<r\in \Q$, $e,f\in\mathrm{GL}(2,\mathbb{Z})$, and $s_q=\vdmatrix 100q$. 
Then
\[
[H:(g H \oi g \cap H)] = [H:(s_q H s^{-1}_q\cap H)] \in \Oh(|q|\log |q|).
\]
\end{proposition}
\begin{proof}We just have seen above that $[H:(g H \oi g \cap H)] = [H:(s_q H s^{-1}_q\cap H)]$, and that $s_q H s^{-1}_q\cap H$ consists of those matrices $\vdmatrix a b c d\in H$ for which $c\in q\Z$. The subgroup $s_q H s^{-1}_q\cap H$ is also denoted as $\Gamma_0(q)$ in the literature. The index of $\Gamma_0(q)$ in~$H$ is equal to $|q|\prod_{p|q}(1+1/p)$, where the product is taken over all prime divisors of~$q$, see~\cite[Ex.~1.2.3(e)]{DiamondS2005}.

We now estimate the above product $\prod_{p|q}(1+1/p)$. Note that
\[
\ln \prod_{p|q}\Big(1+\frac{1}{p}\Big) = \sum_{p|q}\ln \Big(1+\frac{1}{p}\Big)\ \leq\ \sum_{p|q}\frac{1}{p}\ \leq\ \sum_{p\leq |q|}\frac{1}{p} \ \leq\ \ln \ln |q| + C
\]
for some constant $C>0$, where the sums and product are taken over all primes $p$ such that $p\,|\,q$ or $p\leq |q|$, respectively. The last inequality follows from Mertens's Second Theorem, see~\cite[p.~466]{HardyW2008}.
Therefore, $\prod_{p|q}(1+1/p) \leq e^C \ln |q|\in \Oh(\log |q|)$ and $[H:(s_q H s^{-1}_q\cap H)] \in \Oh(|q|\log |q|)$.
\end{proof}

\section{Dichotomy in $\GL(2,\Q)$}\label{sec:algglq}
One of the main results of this paper is \prref{thm:nofinext} stated below, which classifies the \fg subgroups $G$ sitting strictly between $\GLZ$ and $\GLQ$ into two mutually exclusive classes. An important consequence of this dichotomy is that, for such subgroups, $\Rat(G)$ is never closed under intersection, and in particular it is not a
relative Boolean algebra. This is a result of independent interest.
In our proof of the dichotomy,
the \emph{Baumslag-Solitar group} $\BS(p,q)$ where $1=p<q$ shows up.\footnote{The group $\BS(p,q)$ is an HNN-extension (named after Higman, Neumann, and Neumann) of $\Z$ over the subgroups $p\Z$ and $q\Z$ with a ``stable letter'' $t$.} Recall that $\BS(p,q)= \langle a,t\ |\ ta^pt^{-1} = a^q\rangle$ is actually defined for all $p,q\in \Z$, but up to \iso it is enough 
to impose $0\leq p\leq |q|$. As we will see, $\BS(|q|,q)$ for $|q|\geq 2$ contains 
a direct product of a free group of rank two and $\Z$. This is a consequence of 
Bass-Serre theory~\cite{serre80}, see for example \cite{FredenK2007}.

The case $p=0$ is not very interesting since 
$\BS(0,q)$ is isomorphic to the free product $\Z\ast (\Z/q\Z)$. It is fairly easy to see that $\BS(p,q)$ has no free subgroup of finite index unless
$pq=0$, see~\cite{gersten89}.
As a consequence, in both cases of the dichotomy in \prref{thm:nofinext}, the group $\GL(2,\Z)$ has infinite index in $G$ when $\GL(2,\Z)< G \leq \GLQ$.

Actually, we prove more: if $G$ contains a matrix of the form $\vdmatrix {r_1} 00
{r_2}$ with $\abs{r_1} \neq \abs{r_2}$ (which is the second case of the dichotomy), then  $G$ contains some
$\mathop\mathrm{BS}(1,q)$ for $q\geq 2$ which has \textbf{infinite} index in $G$. It is wide open whether the membership for rational
subsets of $G$ can be decided in that second case.

For example, let $p\geq 2$ be a prime, and let $G'$ be generated by $\vdmatrix  0{-1}10$,
$\vdmatrix  1{1}01$, and $\vdmatrix  100p$. In this case 
$\vdmatrix  {p}00{\oi p}$ also belongs to $G'$. Let $\Z[1/p]$ denote the ring $\set{p^nr\in \Q}{n,r \in \Z}$. It is known by
\cite{bm68}  that $\vdmatrix  0{-1}10$, $\vdmatrix  1{1}01$, and $\vdmatrix  {p}00{\oi p}$ generate the special linear group  $\SL(2,\Z[1/p])$ of $2\times 2$ matrices over~$\Z[1/p]$. Hence, $G'$ contains $\SL(2,\Z[1/p])$ as a subgroup. The structure of $\SL(2,\Z[1/p])$ is described in 
\cite[Chapter~II, Sect.~1.4, Cor.~2]{serre80}: it
is an amalgam of two copies of $\SL(2,\Z)$ over a common subgroup of finite index.
It is however unknown how to decide subgroup membership for such amalgams. Moreover, $\vdmatrix{1}00p$ acts by conjugation on
$\SL(2,\Z[1/p])$, and since $\vdmatrix{1}00p$ generates an infinite cyclic group, $G'$ is a semi-direct product of the form $G'=\SL(2,\Z[1/p])\rtimes \Z$. Hence, even if the subgroup membership  for $\SL(2,\Z[1/p])$ were decidable, it could still be undecidable in  $G'$. The situation is more friendly for the subgroup generated by the matrices $\vdmatrix  1{1}01$ and $\vdmatrix  1{0}0p$ because it is the group
$\UT(2,\Z[1/p])\rtimes \Z\cong Z[1/p]\rtimes \Z\cong\mathop\mathrm{BS}(1,p)$, where $\UT(2,\Z[1/p])$ is the group of $2\times2$ upper-unitriangular matrices over $\Z[1/p]$.
The membership problem for rational subsets of $\mathop\mathrm{BS}(1,q)$ is decidable for all $q\geq 2$
by~\cite{CadilhacCZ2020ICALP}. However, it is not clear how to generalize this result to extensions of $\BS(1,q)$ of infinite index.

It is also shown in~\cite[Ex.~3.7]{CadilhacCZ2020ICALP} that 
$\RAT(\BS(1,q))$ is not closed under finite intersection for $q\geq 2$. 
Using \prref{thm:silva}, we show next that this non-closure property holds whenever $0\leq p \leq |q|$ and $|q|\geq 2$. \Ip it covers the ``famous'' Baumslag-Solitar group $\BS(2,3)$ and cases where $q$ is negative.
To the best of our knowledge the following dichotomy theorem for Baumslag-Solitar groups has not been stated explicitly or shown elsewhere. 
\begin{theorem}\label{thm:bsinter}
Let $p,q\in \Z$ and $\BS(p,q)$ be the  Baumslag-Solitar group. Then $\RAT(\BS(p,q))$ is a Boolean algebra \IFF it is closed under finite intersection \IFF $|pq|\leq 1$.
\end{theorem}
\begin{proof} We will use some well-known facts about Baumslag-Solitar groups.
What we need for the proof can be found, for example, in \cite[Sect.~8.4.2]{edam16}
and elsewhere in the literature. For example, as we mentioned above, we assume without restriction that
$0\leq p\leq |q|$. 
We  let $\ov a = \oi a$ and $\ov t = \oi t$. 
The group $\BS(0,q)$ is the free product $\Z\ast (\Z/q\Z)$, hence $\Rat(\BS(0,q))$ is a Boolean algebra by \cite{Sak92,LohSen06}.
Therefore, we only need to consider the case when $p\geq 1$. 
Let $p=|q|=1$; then we know that 
$\RAT(\BS(1,1))$ is a Boolean algebra since 
$\BS(1,1)=\Z \times \Z$, and therefore the rational sets are the semi-linear subsets. 
Also, $\RAT(\BS(1,-1))$ is a Boolean algebra by \prref{cor:finext} since 
it contains $\Z \times \Z$ as a subgroup of index two.
It remains to show that $\RAT(\BS(p,q))$ is not closed under intersection for $|q|\geq 2$ and $1\leq p \leq |q|$.

We treat the case $2\leq p =|q|$ first.
Consider the \fg subgroup $F$ of $\BS(|q|,q)$ which is generated by the two commutators 
$\alp=[a,t]$ and $\bet=[a,t^2]$. Then we obtain a natural 
\epi $\psi$ from the free group $F_{\alp,\bet}$ onto $F$. Now, consider 
a non-trivial freely reduced word $w\neq 1$ in $F_{\alp,\bet}$. Then 
a Britton-reduction (\wrt $a$ and $t$) yields a nontrivial element in $\BS(|q|,q)$ since $|q|\geq 2$. 
For example,  $\alp \bet=at\ov a \ov t \; a t^2 \ov a \ov t^2$ is Britton-reduced, and the Britton reduction of 
$\alp \oi \bet$ yields $at\ov a \ov t \;t^2  a \ov t^2 \ov a= at\ov a  t  a \ov t^2 \ov a$.  Based on this observation, standard arguments with an induction on $|w|$ show that $\psi$ is injective. Therefore $F$ is a free group. 
It is easy to check that $a^p$ commutes with $\alp$ and $\bet$ when $p=|q|$. This implies that
the intersection of the infinite cyclic group $\gen{a^p}$ 
and $F$ is trivial. Thus, $\BS(q,|q|)$ for $|q|\geq 2$ 
contains a direct product isomorphic to $F_2\times \Z$, where $F_2$ is the free group of rank two generated by $\alp,\bet$ and $\Z$ is generated by $a^p$. We have seen in \prref{lem:dr95How} that $F_2\times \Z$ does not satisfy the Howson property. Therefore  $\Rat(\BS(q,|q|))$ is not closed under finite intersection.

In order to finish the proof it remains to consider $\BS(p,|q|)$
where $1\leq p<|q|$.
The proof has a different flavor than the one for $\BS(|q|,q)$ with $|q|\geq 2$. We let 
$A_+=a^0\cup \cdots \cup a^{p-1}$, and $A_- =A_+$ if $q$ is positive and 
$A_-=a^0\cup \cdots \cup a^{1-p}$ if $q$ is negative. We consider the set 
$L=a^*\cap T^* a (\ov t\,\ov t\,)^*$  with $T=tA_- tA_+$. Then $L$ is 
the intersection of two rational sets. We claim that $L$ is not rational.
By contradiction, assume that $L\in \RAT(\BS(p,q))$. Then, 
by \prref{thm:silva}, there is an $a^\Z$-NFA $\cA$ which accepts $L$.
The set $L$ is not empty since $a\in L$. If
$a^k\in L$, then $1\leq k \in \N$ and we can write $a^k\in T^s a t^{-2s}$ for some $s\in \N$.
Thanks to the choice of $A_+$ and $A_-$, we can also state that for each 
$s\in \N$ there is a unique $k_{s}\in \N$ such that $a^{k_s}\in T^s a t^{-2s}$ and $k_{s}\geq (q/p)^{2s}$.
This can be shown by induction on $s$.
More precisely, for each $k_{s}$ there a unique $k'_s$ such that 
$1\leq k_s\leq k'_s \leq k_s + p-1$ and $k'_s \in p\N$. 
Let us define $\ell_s=(q/p)k'_s$. Then we have $a^{\ell_s}\in tA_+T^s a t^{-2s+1}$.
Note that $\ell_s<0\iff q <0$. 
Now, consider the unique  $\ell'_s$  
with $\ell_s\leq \ell'_s \leq \ell_s+p-1$ if $q>0$ (or with 
$\ell_s\geq \ell'_s \geq \ell_s+1-p$ if $q<0$) such that $\ell'_s\in p\Z$.
This leads to the next positive $k_{s+1} \in \N$ such that $k_{s+1}= (q/p)\ell'_s
\geq (q/p)^2 k_s$. 

Putting things together, we have shown that
$L=\set{a^{k_s}}{s \in \N}$ with
$k_{s}\geq (q/p)^{2s}$ for all $s\in \N$. The assumption $L\in \RAT(\BS(p,q))$ implies
$L=\psi(K)$ for some \hom $\psi$ and some regular language $K\sse \os{a, t,\ov t}^*$. 
By the pumping lemma for regular languages (also known as $uvw$-Theorem), we know that for  
all $s$ there is some $r\neq s$ with $|k_s-k_r|\in \Oh(1)$. It is a contradiction with the above lower bound on $k_s$.
\end{proof}  

Another ingredient to show the dichotomy is the next proposition and its corollary.
\begin{proposition}\label{prop:ZA}
Let $G$, $H$, and $A$ be groups, where the center $Z(H)$ is trivial and $A$ is Abelian. 
If $\phi:H\to G\times A$ is an injective \homo, then the induced \hom $\phi_1:H\to G$ is injective where $\phi_1(h)=g$ for $\phi(h)=(g,a)$.
\end{proposition}
\begin{proof}
It is enough to show that $\phi_1(h)=1$ implies $h=1$. To see this, take $h\in H$ with $\phi(h)=(1,a)$. Then $(1,a)$ is in the center of $G\times A$ and therefore in the center of $\phi(H) \cong H$. Therefore 
$h\in Z(H)$ which is trivial. Hence, $\phi(h)=(1,a)$ implies $h=1$ and we are done. 
\end{proof}

The following corollary holds for all $\BS(p,q)$ where $1\leq p<|q|$ because the center of those $\BS(p,q)$ is trivial. More generally, the center of an HNN-extension $\text{HNN}(H,t,\phi)$ with an \iso $\phi: A \to B$ is trivial if $A \neq H$ and $\set{a\in Z(A)}{\phi(a)=a}=\os 1$. 
However, we need \prref{cor:ZA} only for $\BS(1,q)$ with $q\geq 2$: in this case the proof is less technical.
\begin{corollary}\label{cor:ZA}
Let $G$ and $A$ be groups where $A$ is Abelian. If $1<|q|$ and the Baumslag-Solitar group $\BS(1,q)$ appears as a subgroup in the $G\times A$, then $\BS(1,q)$ appears in~$G$. 
\end{corollary}

\begin{proof}
The group $\BS(1,q)$ is isomorphic to the  semi-direct product
$ \Z[1/q]\rtimes \Z
$. The elements of $\Z[1/q]\rtimes \Z$ are pairs 
$(r,m)$ where $r=pq^e$ with $p,e,m\in \Z$ and the multiplication
$(r,m)\cdot(s,n)= (r+q^m s, m+n)$. A direct verification shows 
that the center of $\Z[1/q]\rtimes \Z$ is trivial. 
Thus,  \prref{prop:ZA} yields the result.
\end{proof}

\begin{theorem}\label{thm:nofinext}
Let $G$ be a \fg~group such that  $\GL(2,\Z)< G \leq\GL(2,\Q)$. 
Then there are two mutually exclusive cases. 
\begin{enumerate}
  \item $G$ is isomorphic to $\GL(2,\Z)\times \Z^k$ for some $k\geq 1$.
  \item  $G$ contains a subgroup which is an extension of infinite index of $\BS(1,q)$ for some $q\geq 2$.
\end{enumerate}
Furthermore, in both cases of the dichotomy, $\RAT(G)$ is not closed under finite intersection.
\end{theorem}
\begin{proof}
 We distinguish two cases. 
 In the first case, we suppose that  $G$ is generated by $\GLZ$ and finitely many elements from the center 
 $Z(G)$. Since  $\GLZ$ is a subgroup of $G$, we see that 
 $Z(G)\leq \set{\vdmatrix r00r}{r\in \Q^*}$. 
 Moreover, since 
 $-1\in \GLZ<G$, the group $G$ is generated by $\GLZ$ and a nontrivial \fg~subgroup $Z\leq
\set{\vdmatrix r00r}{r\in \Q^* \wedge r>0}$.
 Hence $Z\cong \Z^k$ 
 for some $k\geq 1$ because $\set{r\in \Q^*}{r>0}$ is torsion free and 
 $Z$ is finitely generated and Abelian.
Since $\GLZ$ contains a free group of rank $2$ and $k\geq 1$, \prref{lem:dr95How} tells us that $\Rat(G)$ is not closed under finite intersection. 
  
Assume we are not in the first case. Then consider
any finite generating set of $G$ and write the generators in their Smith normal form
$re\vdmatrix 10 0 {q} f$ with $0<r\in \Q$, $e,f\in \GL(2,\Z)$ and
$q\in \Z$. Since 
 $\vdmatrix 1 0 0 {-1}\in \GL(2,\Z)<G$, the
generators can be chosen from $\GL(2,\Z)$ and matrices of the form  $\vdmatrix r 0 0 {rq}$ with $0<r\in \Q$ and
$0\neq q\in \N$. Note that there is at least one generator $s=\vdmatrix r
0 0 {rq}$
where $r>0$ and $2\leq q\in \N$, because otherwise we are in the first case.

As usual, we define $\BS(1,q)= \langle a,t\ |\ tat^{-1} = a^q\rangle$ in standard group generators $a$ and~$t$.
Let $b=\vdmatrix 1011$ and $\phi:\BS(1,q)\to G$ be a \homo such that $\phi(a)=b$ and $\phi(t)=s$. It is well-defined since $s \vdmatrix 1011  \oi s = \vdmatrix 1011^q$. Let  $\BS=\phi(\BS(1,q))$. 
We claim that $\phi$ is an \iso between $\BS(1,q)$ and $\BS$. To see the claim 
we observe that every element 
$g\in \BS(1,q)$ can be written in the form $t^kb^xt^n$ where 
$k, x, n$ are integers.  Suppose $g=t^kb^xt^n$ and 
$\phi(g)= 1$. Then $\vdmatrix 10x1=\phi(b^x)= \phi(t^{-n-k}) = \vdmatrix r00{rq}^{-n-k}$ is a diagonal matrix and $x=0$. Hence, $g=t^{k+n}$ and 
 $\phi(g)=s^{k+n}=1$. This implies $k+n=0$, and  $\phi$ is injective. Hence, the claim. 
 
Next, we show that BS has infinite index in $G$. Consider any $g\in \BS\cap \SL(2,\Z)$. As above, consider  $f=s^{k}\vdmatrix 1011^xs^m$ with $x,k,m\in \Z$. Since
 by assumption~$\det(f)=1$, we obtain $m=-k$ and hence $f = \vdmatrix 10{q^kx}1 \in {\vdmatrix 1011}^\Z$. Therefore
 $\SL(2,\Z) \cap \BS$ is the infinite cyclic group generated by $\vdmatrix 1011$. It has infinite
 index in $\SL(2,\Z)$. It follows that $G$ contains an extension of
 $\BS(1,q)$ of infinite index.

Finally, let us show that $\GL(2,\Z)\times \Z^k$ cannot contain 
$\BS(1,q)$ for $k\geq 0$. Otherwise, there is no dichotomy. For the sake of contradiction assume the contrary. By \prref{prop:ZA} this implies $\BS(1,q) \leq \GL(2,\Z)$. We have seen in \prref{sec:rba} that 
$\Rat(\GL(2,\Z))$ is a Boolean algebra because $\GL(2,\Z)$ is a \fg~and virtually-free. This implies that for the \fg~subgroup $\BS(1,q)$, the set $\RAT(\BS(1,q))$ is a Boolean algebra. \Ip it is closed under finite intersection. This is a contradiction to \prref{thm:bsinter}. 
\end{proof}
\begin{theorem}\label{thm:undec}
Let $G$ be isomorphic to $\GL(2,\Z)\times \Z^k$ with $k\geq 1$.
Then, on input  $L,R\in \RAT(G)$  it is undecidable whether $L=R$. 
However, on input $g\in G$ and $R\in \RAT(G)$ it is decidable whether $g\in R$.
\end{theorem}
\begin{proof}
By \prref{rem:glzvf}, we know that $\GL(2,\Z)$ has a free subgroup~$F_2$ of rank two and index~$24$.
\Ip $G$ contains 
 the  free partially commutative 
 monoid $M=\os{a,b}^*\times \os{c}^*$ with $a\neq b$. It was proved by Aalbersberg and Hoogeboom in~\cite{ah89} that the equality problem is undecidable for  $\RAT(M)$. 
 
For the decidability, we use a result by Lohrey and Steinberg~\cite{LohSte08}. They showed that 
the membership problem for $\RAT(F_2\times \Z^k)$ is decidable. 
Since  $F_2\times \Z^k$ has finite index in $G$, the membership problem for rational subsets in
 $G$ is decidable by \prref{cor:finext}.
\end{proof}
\section{Flat rational sets}\label{sec:FRAT}
In this section we introduce the notion of \emph{flat rational set}
for a semigroup $M$ and a subset $T$. 
If~$S = \gen T$ is a subsemigroup of~$M$ generated by $T$, then we can extend positive decidability results for $\RAT(S)$ to the larger family $\FRAT M S$. When $\RAT(M)$ is an effective Boolean algebra, then all the decision problems studied here are decidable. However,
for a group $G$ sitting between $\GLZ$ and $\GLQ$, the family
 $\RAT(G)$ is never a Boolean algebra unless
$G=\GLZ$, see \prref{thm:nofinext}. The main result of this section is \prref{thm:lea}. It shows that the membership
problem and (even stronger) the emptiness problem for Boolean combinations of flat rational sets are decidable for $\FRAT{\GL(2,\Q)}{\GL(2,\Z)}$.

The following definition is given for a semigroup $M$ and a 
subset $T\sse M$. The main interest is when $M$ is a monoid
and $T$ generates a submonoid $S=\gen T$. Below we also define when 
an $M$-NFA is flat over $T$. In this case $T$ is a subset of labels of its \tras.
\begin{definition}\label{def:fratmon}
We say that $L\sse M$ is \emph{flat rational} over a subset $T$
if $L$ is a finite union of languages of the form 
$L_0g_1L_1 \cdots g_t L_t$ where all  $L_i\in \RAT(\gen T)$ and $g_i\in M$.
\end{definition}
The family of flat rational subsets over~$T$ is denoted by $\FRAT M T$. If~$S=\gen{T}$, that is $T$ generates the subsemigroup~$S$ of $M$, then \prref{def:fratmon} implies
$\FRAT M T=\FRAT M S$.

In order to specify a set~$L$ in $\FRAT M S$ for a subsemigroup $S$ we can also use an $M$-NFA with a syntactic restriction as in \prref{def:flatNFA} with $T=S$. In this case, as soon as the membership to~$S$ is decidable, we can check whether an  $M$-NFA is flat over~$S$, and if it is, then we know that the accepted language belongs to $\FRAT M S$.
\begin{definition}\label{def:flatNFA}
	Let $T\sse M$. An $M$-NFA $\cA=(Q,\del,I,F)$
	is called \emph{flat over $T$} if no \tra having a label outside $T$ lies on a directed cycle. 
\end{definition}
\begin{remark}\label{rem:fratmon}
As we mentioned in the introduction, the notion of $\FRAT M S$ \emph{is a special case} of a \emph{polynomial closure} 
$\mathop{\text{Pol}}(M,\cL)$ introduced by Sch\"utzenberger in~\cite{sch76}: more precisely, in our special case we have $\cL=\Rat(S)$, where $S$ is a subsemigroup of~$M$.\footnote{The results in~\cite{sch76} characterize star-free (or aperiodic) 
languages as the polynomial closure over a language class by using  prefix codes of bounded-synchronization delay.
More results in this direction are in \cite{Schutzenberger1974b} and \cite{DiekertWalter17}.}
There is also a related notion of flatness in the context of finite control systems, see \cite{FinkelP19} and its references.\footnote{In control theory the definition says that every control-state belongs to at most one loop. 
}
\hspace*{\fill}$\diamond$
\end{remark}
The next theorem is a generalization of \prref{thm:silva}.  
\begin{theorem}\label{thm:genfratmon}
Let $M$ be a monoid such that all right-invertible elements are invertible\footnote{This means that $M$ is \dedfin, see \prref{rem:erata} for a short discussion of this notion.} and $H$  a subgroup of $U(M)$.
Then the family $\FRAT M H $ is the least family $\cR$ of subsets of $M$ satisfying the following conditions: 
\begin{itemize}
  \item $\cR$ contains   all finite subsets of $M$,
  \item $\cR$ is closed under finite union and concatenation,
  \item $\cR$ is closed under taking the Kleene-star over subsets of $H$  which belong to~$\cR$.
 \end{itemize}
\Ip this implies that $\set{L\sse H}{L\in\FRAT M H} =\RAT(H)$.
\end{theorem}
\begin{proof}
Clearly, $\RAT(H)\sse \cR$ and hence, all flat rational sets over $H$ are contained in $\cR$. To prove inclusion in the
 other direction, we need to show that the family of flat rational subsets of $M$ over $H$ (i) contains
 all finite subsets of $M$, (ii) is closed under finite union and concatenation, and (iii) is closed
 under taking the Kleene-star over subsets of $H$. The first two conditions are obvious. We show
 (iii) in two steps.  Let $L$ be a flat rational set over $H$ such that $L\sse H$.  First we show that $L\in \RAT(G)$, where $G=U(M)$ is the group of units of $M$. 
 Since $L\in \RAT(M)$, there is some $M$-NFA $\cA$ accepting $L$. After trimming,  we may assume without restriction that every \tra is used on some accepting path. Let $g$ be any label of a \tra. Then, thanks to trimming, there are $u,v\in M$ with $ugv\in L\sse H$. Hence, there is some $w\in H$ such that   $ugvw =1\in G$. Therefore, 
 $u$ has a right-inverse. Since $M$ is \dedfin, we have  $u\in G$ and $\oi u ugvwu =gvwu  =1$. It follows that $g$ has a right-inverse; and therefore $g\in G$. This shows the first step:
 $L\in \Rat(G)$.  
 
In the second step we apply \prref{thm:silva}. It  shows $L\in \Rat(H)$. 
Hence, $L^*\in \Rat(H)$, which concludes the proof of (iii). So, $\FRAT{M}{H}$ is closed under all three closure properties. It also shows 
$\set{L\sse H}{L\in\FRAT M H}=\RAT(H)$. 
 \end{proof}

\begin{remark}\label{rem:erata}
In the literature a
monoid~$M$ is called \emph{\dedfin} if all right-invertible elements are invertible. That is, $ab=1$ implies  $ba=1$ for all $a,b\in M$.  The notation appears for example in \cite{faith2003} and \cite[Def.~2.3.2]{antoniou2019book}. The class of \dedfin monoids is closed under taking submonoids. It includes all finite monoids, all cancellative monoids and hence, all groups. If~$F$ is a field, then $F^{n\times n}$ is \dedfin
because a matrix in $F^{n\times n}$ is invertible \IFF  its determinant is not zero. More results about \dedfin monoids are 
in the classical textbook~\cite{cp67}.
In our conference paper~\cite{DiekertPS20} the assertion of \prref{thm:genfratmon} was stated without 
the hypothesis that $M$ is \dedfin.
However, in our applications we only considered those monoids.
Further results in~\cite{DiekertPS20} were not affected by the missing hypothesis. The example in \prref{rem:NyBrodda} given by Nyberg-Brodda shows that \prref{thm:genfratmon} does not hold in general if  $M$ is not \dedfin. 
\hspace*{\fill}$\diamond$\end{remark}
\begin{theorem}\label{thm:lea}
Let $G$ be a group with an \enu, and $H\leq G$ be a subgroup such that the following conditions hold:
\begin{itemize}
\item The family $\RAT(H)$ is an effective relative Boolean algebra.
\item The group $G$ is a commensurator of $H$, and  on input $g\in G$, we can compute the index of $H_g= gH\oi {g} \cap H$ in $H$.
\item The membership problem for $H$ is decidable.
\end{itemize} 
Then  $\FRAT G H$ forms an effective relative Boolean algebra. \Ip given a
finite Boolean combination\footnote{Complementation in the Boolean combination is taken with respect to $G$.}  $B$ of flat rational sets of $G$ over $H$, we can decide the emptiness of $B$. 
\end{theorem}
Note that we do not require $\Rat(H)$ to be a Boolean algebra. In fact, it is a Boolean algebra \IFF $H\in \Rat(H)$ \IFF  $H$ is finitely generated.
Before giving the proof of \prref{thm:lea} let us first state one of its  consequences. 

\begin{corollary}\label{cor:leas}
Let $B\sse \GL(2,\Q)$ be a finite Boolean combination of flat rational sets of $\GL(2,\Q)$
over $\GL(2,\Z)$, then we can decide the emptiness of $B$. 
\end{corollary}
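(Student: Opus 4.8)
The plan is to deduce \prref{cor:leas} from \prref{thm:lea} applied with $H=\GL(2,\Z)$. The one thing that blocks a direct application is that $\GL(2,\Q)$ is not finitely generated, so it cannot be the group $G$ in \prref{thm:lea}. This is easily fixed: a finite Boolean combination $B$ mentions only finitely many flat rational sets of the form $L_0g_1L_1\cdots g_tL_t$, and each of these involves only finitely many matrices $g_i\in\GL(2,\Q)$ and finitely many rational sets $L_i\in\RAT(\GL(2,\Z))$. Let $G$ be the subgroup of $\GL(2,\Q)$ generated by $\GL(2,\Z)$ together with all the finitely many $g_i$ occurring in $B$. Then $G$ is a \fg group with $\GL(2,\Z)\leq G\leq\GL(2,\Q)$, and, as all matrices involved lie in $G$, the set $B$ is (equal to) a finite Boolean combination of members of $\FRAT{G}{\GL(2,\Z)}$.

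Next I would check the four hypotheses of \prref{thm:lea} for the pair $\GL(2,\Z)\leq G$. First, $\RAT(\GL(2,\Z))$ is an effective Boolean algebra: $\GL(2,\Z)$ is a \fg virtually free group, as it has a free subgroup of rank $2$ and index $24$, so this follows from the proposition in \prref{sec:ratrec}. Second, $G$ is the commensurator of $\GL(2,\Z)$: it is recalled in \prref{sec:SNFC} that $\GL(2,\Q)$ is the commensurator of $\GL(2,\Z)$, i.e.\ every $g\in\GL(2,\Q)$ satisfies $[\,\GL(2,\Z):g\GL(2,\Z)\oi g\cap\GL(2,\Z)\,]<\infty$; a fortiori this holds for every $g\in G\subseteq\GL(2,\Q)$, so the commensurator of $\GL(2,\Z)$ in $G$ is all of $G$. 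Third, the membership question ``$g\in\GL(2,\Z)$?'' is trivially decidable, by checking that the entries of $g$ are integers and $\det(g)=\pm1$.

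The only hypothesis requiring a short computation is the fourth: given $g\in G$, compute a set of left coset representatives of $H_g=g\GL(2,\Z)\oi g\cap\GL(2,\Z)$ in $\GL(2,\Z)$. Compute the Smith normal form $g=r\,e\,s_q\,f$ with $r\in\Q^*$, $r>0$, $e,f\in\SL(2,\Z)$, and $q\in\Z\sm\{0\}$ (all computable from $g$). Since $r$ is a central scalar matrix and $f\in\GL(2,\Z)$, conjugation by $g$ has the same effect on $\GL(2,\Z)$ as conjugation by $e\,s_q$, whence $H_g=e\,\bigl(s_q\GL(2,\Z)\oi{s_q}\cap\GL(2,\Z)\bigr)\,\oi e$. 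A direct computation with $s_q=\vdmatrix 100q$ shows $s_q\GL(2,\Z)\oi{s_q}\cap\GL(2,\Z)=\set{M\in\GL(2,\Z)}{q\text{ divides the lower-left entry of }M}$, a congruence-type subgroup of index at most $\abs{\GL(2,\Z/q\Z)}=\Oh(q^3)$; its finitely many left coset representatives in $\GL(2,\Z)$ can be enumerated effectively, and conjugating them by $e$ yields representatives of $H_g$ in $\GL(2,\Z)$.

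With all four hypotheses verified, \prref{thm:lea} gives that $\FRAT{G}{\GL(2,\Z)}$ is an effective relative Boolean algebra; \ip the emptiness of the finite Boolean combination $B$ of its members is decidable, which is the claim. The main content beyond bookkeeping is the fourth hypothesis, and even there the work reduces at once, via the Smith normal form, to the elementary description of $s_q\GL(2,\Z)\oi{s_q}\cap\GL(2,\Z)$; everything else is a matter of invoking \prref{thm:lea} together with the virtual freeness of $\GL(2,\Z)$ and the commensurator computation of \prref{sec:SNFC}.
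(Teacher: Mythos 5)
Your proof is correct and follows the same route as the paper: verify the hypotheses of \prref{thm:lea} for $H=\GL(2,\Z)$ and invoke it. In fact you are more careful than the paper's own two-line proof, which silently ignores that $\GL(2,\Q)$ is not finitely generated; your reduction to the \fg subgroup generated by $\GL(2,\Z)$ together with the finitely many $g_i$ occurring in $B$, and your explicit verification of the coset-representative hypothesis via the Smith normal form and the congruence subgroup $\set{M\in\GL(2,\Z)}{q \text{ divides } M_{21}}$, are exactly the details needed to make the application of \prref{thm:lea} legitimate.
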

\begin{proof}
By \prref{rem:glzvf} the group $\GL(2,\Z)$ is a finitely generated  virtually free group. Hence, $\RAT(\GL(2,\Z))$ is an
  effective Boolean algebra by~\cite{Silva02}. The group $\GL(2,\Q)$ is infinitely generated, but obviously the group of matrices with rational entries has an \enu  in which the membership for $\GLZ$ is decidable.
In \prref{sec:SNFC}, we showed that $\GL(2,\Q)$ is a commensurator
  of its subgroup $\GL(2,\Z)$. The index of $\GL(2,\Z)_g= g\GL(2,\Z)\oi {g} \cap \GL(2,\Z)$ in $\GL(2,\Z)$ is bounded by $|\GL(2,\Z/q\Z)|$ if 
$g= r e\vdmatrix 100qf$ is the \SNF of $g$ (see
 \prref{prop:indexH}).
  Thus, all hypotheses of \prref{thm:lea} hold.
\end{proof}

For the proof of \prref{thm:lea} we will need the following lemma.
Recall the notation \(
H_g= gH\oi {g} \cap H= \set{h\in H}{\oi g h g\in H}
\) for $H\leq G$. Since we also defined $H^g$ as $gH\oi {g}$, we have 
$H_g= H^{g} \cap H$.

\begin{lemma}\label{lem:frida}
Let $G$ be a group and $H$ be a subgroup, $L\in \RAT(H)$, and $g\in G$.
Then under the assumptions of \prref{thm:lea} we can compute 
an $H$-NFA accepting $\oi g (L \cap H_g)g$. 
\end{lemma}

\begin{proof}
 Since $H_g=gH\oi g \cap H$ is of finite index in $H$, we can compute 
 an NFA $\cA'$ accepting  $L'= L\cap H_g\in \RAT(H_g)$ by \prref{lem:finind}. The labels 
of \tra{s} are in $H_g$. We have $\oi g H_g g \sse H$. Hence it is enough to change every label $h$ of \tra{s} in $\cA'$ to $\oi g h g$. 
 This gives the NFA $\cA$ for $\oi g (L \cap H_g)g$ over $H$. 
\end{proof}

\begin{proof}[Proof of \prref{thm:lea}]
Let $g\in G$ and $K\in \RAT(H)$. First, we claim that 
we can rewrite $Kg\in \RAT(G)$ as a finite union of languages
$g'K'$ with $g'\in G$ and $K'\in \RAT(H)$. Let us show the claim.

The rewriting process for $Kg$ begins with a computation of a set $U_g\sse H$ of left coset representatives of $H_g$ such that 
 $H=\bigcup\set{uH_g}{u\in U_g}$. This is possible because, by assumption, the membership for $H$ is decidable; 
and hence, the membership for $gH\oi g$ and for $H_g= gH\oi g \cap H$ is decidable, too. Moreover, by the second assumption, we can compute the index $k=|H:H_g|$.
Thus we can enumerate the elements of~$H$ until
	we find~$k$ elements that belong to~$k$ different left cosets of~$H_g$.
	Checking if two elements belong to the same coset is decidable since
	the membership for~$H_g$ can be decided. 
Thus, 
  \begin{align*}
Kg&= \bigcup\set{K\cap uH_g}{u\in U_g}g
=
\bigcup\set{ug\, \oi g (\oi uK \cap H_g)g}{u\in U_g}
\\&=\bigcup\set{g' \oi g (g{g'}^{-1} K \cap H_g)g}{g'\in U_g g}.
\end{align*}
Using \prref{lem:frida} we obtain~$\oi g (g{g'}^{-1} K \cap H_g)g=K'\in \RAT(H)$. 
 This shows the claim.

Note that since membership for~$H$ is decidable,
we can effectively enumerate a set~$S$ of all distinct representatives of the right cosets of~$H$,
and moreover for each~$g\in G$ find a representative~$g'\in S$ such that~$g\in g'H$. 

Let $L\in\FRAT{G}{H}$. Hence $L$ is a finite union of languages
$L_0g_1L_1 \cdots g_tL_t$ where all~$L_i\in \RAT(H)$. Using the claim, we can write~$L$ as a finite
union of languages~$gK$ with~$g\in G$ and~$K\in \RAT(H)$. By the above observation, we have
$g=g'h$ for some~$h\in H$ which can be effectively found. Hence we can write~$gK=g'(hK)$, where~$hK\in \RAT(H)$.
Therefore, every flat rational set~$L$ can be written as a union $L=\bigcup_{i=1}^n g_iK_i$, where
$g_i\in S$ and $K_i\in \RAT(H)$. Since $gK_1\cup gK_2=g(K_1\cup K_2)$, we may assume  that all $g_i$
in the expression $L=\bigcup_{i=1}^n g_iK_i$ are different.

Now let $L$ and $R$ be two flat rational sets. By the above argument we may assume that
$
  L=\bigcup_{i=1}^n a_iL_i$ and $ R=\bigcup_{j=1}^m b_jR_j,
$
where $a_i,b_j\in S$ and $L_i,R_j\in \RAT(H)$. Then we have $L\sm R = \bigcup_{i=1}^n \big(a_iL_i
\sm \bigcup_{j=1}^m b_jR_j \big)$.
Note that if $a_i\notin \{b_1,\dots,b_m\}$, then $a_iL_i \sm \bigcup_{j=1}^m b_jR_j= a_iL_i$, but if
$a_i=b_j$ for some $j$ then $a_iL_i \sm \bigcup_{j=1}^m b_jR_j= a_i(L_i\sm R_j)$. Since $\RAT(H)$ is
an effective relative Boolean algebra, we can compute the rational expression for $L_i\sm R_j$ in $H$. Hence
we can compute the flat rational expression for $L\sm R$.

As a consequence, given any language $B$ as a Boolean combination of flat rational sets, we find a flat rational expression for $B$. Every flat rational expression is a rational expression (over $G$). Deciding emptiness of a rational expression in a monoid with an \enu is trivial.
\end{proof}
\label{sec:GH}

For the remainder of this section we let $M$ be a monoid, 
$G$ be a subgroup of its group of units, and $H$ be a finite index subgroup of $G$. 

Since $\Rat(H)\sse \Rat(G)$, the membership problem 
of $\FRAT{M}{H}$ is a special case 
of the membership problem 
of $\FRAT{M}{G}$. The aim is to prove the converse: the membership problem of $\FRAT{M}{G}$ is reducible to the membership problem of $\FRAT{M}{H}$.

\begin{theorem}\label{thm:cAm}
Let $G$ be a subgroup of the group of units in $M$ and $H\leq G$ be its finite index subgroup. Then we have $\FRAT{M}{G}=\FRAT{M}{H}$
and, for every $M$-NFA $\cA$ which is flat over $G$, there exists an $M$-NFA~$\cB$ which is flat over $H$ such that $\abs \cB$ is polynomial in $\abs \cA$ and such that $L(\cA)=L(\cB)$.

Moreover, suppose that  $[G:H]$ is known and that the monoid $M$ 
has an \enu as in \prref{def:enu}. If 
both the membership problems for $H$ and for $G$ are decidable,
then the construction of the NFA~$\cB$ is effective.
\end{theorem}

The main ingredient of the following proof of \prref{thm:cAm} is the application of \prref{thm:silva}. 
\begin{proof}
Clearly, it is enough to show that $\FRAT{M}{G}\sse\FRAT{M}{H}$.
\WLOG, we assume that the input is specified by 
a trim $M$-NFA $\cA=(Q,\del,q_{\text{in}},q_{\text{fin}})$, which is flat over $G$, such that $q_{\text{in}}$
is the unique initial state 
without any incoming \tra and 
$q_{\text{fin}}$ the unique final state 
without any outgoing \tra. Moreover, $q_{\text{in}}\neq q_{\text{fin}}$. By adding, if necessary, $\eps$-self-loops\footnote{Recall that an $\eps$-\tra in an $M$-NFA is a \tra $p\arc 1 q$, where $1$ is the neutral element of $M$. 
} we may assume that all other states have incoming and outgoing \tra{s}.

For $i=1\lds t$, let $\cA_i=(Q_i,\del_i,I_i,F_i)$ be the set of (disjoint) subautomata of $\cA$ which are induced by the strongly connected components of $\cA$ with a nonempty set of \tra{s}. 
Thus, $q_{\text{in}},q_{\text{fin}}$ are the only states which do not appear in any $\cA_i$.
The initial states $I_i$ (resp., the final states $F_i$) are defined as those states of $\cA_i$ that have incoming (resp., outgoing) \tras in $\cA$ which do not belong to $\cA_i$.
By \prref{def:flatNFA}, each $\cA_i$ is a $G$-NFA. 
Let $1\in R\sse G$ be a finite set of right coset representatives for 
$H$ in $G$. That is, $G$ is the disjoint union 
$G=\bigcup_{f\in R} Hf$ with $1\in R$. 

For each $1\leq i \leq t$ and $f\in R$, there is 
a trim $G$-NFA $\cA_{i,f}=(Q_{i,f},\del_{i,f},I_{i,f},F_{i,f})$ of polynomial size in $\abs \cA \cdot [G:H]$ such that $Q_{i,f}=Q_i\times R$ and
 $L(\cA_{i,f})= L(\cA_{i})\cap Hf$. Note that we have
$|\del_{i,f}|\leq |\del_{i}|$ because for each $i\in \os{1\lds t}$ and
$(p,a,q)\in \del$ there is at most one \tra
$(p,r_p)\arc a (q,r_q)\in \del_{i,f}$, where $r_p$ and $r_q$ are the 
right-cosets given by any path from any state in $I_{i,f}$ to $p$ and $q$, respectively.
This can be shown by using the same idea as in the proof of \prref{thm:silva}.
Hence, $\sum_{1\leq i \leq t}|\del_{i,f}|\leq |\del|$. Moreover, 
we can construct $\cA_{i,f}$ in such a way that $|I_{i,f}|\leq |I_{i}|$ and $|F_{i,f}|\leq |F_{i}|$.

If $M$ has an \enu and the membership problems for $H$ and $G$ are decidable, then the construction of 
each $\cA_{i,f}$ is effective: By exhaustive search we can find right-coset representatives for pairwise different cosets until $[G:H]$ of them are found. 

Introduce a new final state  $p_{i,f}$, and for each 
 $p\in F_{i,f}$  a new \tra $p \arc{\oi f} p_{i,f}$. This leads to a new $G$-NFA $\cA'_{i,f}=(Q'_{i,f},\del'_{i,f},I_{i,f},\os{p_{i,f}})$ such that
 $L(\cA'_{i,f})= L(\cA_{i,f})\oi f \sse H$.
 Since $L(\cA'_{i,f})\in \RAT(G)$, we may apply \prref{thm:silva}. After renaming, we obtain an $H$-NFA $\cB_{i,f}
 =(Q'_{i,f},\del''_{i,f},I_{i,f},\os{p_{i,f}})$ such that 
$L(\cB_{i,f})=L(\cA'_{i,f})$.

To finish the construction of $\cB$, consider a disjoint union of NFAs
\begin{equation}\label{eq:cBB}
\cB= \os{q_{\text{in}},q_{\text{fin}}}\cup \bigcup_{1\leq i\leq {t}, f\in R} \cB_{i,f}
\end{equation}
Thus, $q_{\text{in}}$ and $q_{\text{fin}}$ are reintroduced for the same purpose: $q_{\text{in}}$ becomes the unique initial state and 
$q_{\text{fin}}$ becomes the unique final state. 

For all $f\in F$, we let  $Q_{0,f}=\os{q_{\text{in}}}$  and $Q_{t+1,f}=\os{q_{\text{fin}}}$. 
One after another, consider all pairs 
$(i,j)$ where $0\leq i,j\leq {t+1}$ and $i\neq j$. Then introduce 
for every \tra $p_i\arc{m_{i,j}}q_j\in \del$ with $p_i\in Q_i$ and 
$q_j\in Q_j$ and every $f\in R$, a new \tra 
$p_{i,f}\arc{f \,m_{i,j}}q_{j,f}$ in $\cB$ for every $q_{j,f}\in I_{j,f}$, where 
$p_{i,f}$ is the unique final state in $\cB_{i,f}$. 
This completes the construction of~$\cB$.
\end{proof}

\begin{corollary}\label{cor:Pred}
We have $\FRAT \MQ \GLZ=\FRAT \MQ H$ for every finite index subgroup $H$ of $\GLZ$. 
Moreover, there is  
a polynomial time reduction of the membership problem for $\FRAT \MQ \GLZ$ to the membership problem for $\FRAT \MQ H$. For the reduction we assume that matrices in $\MQ$ are encoded as $4$-tuples of rational numbers written as quotients of binary integers.

More precisely, 
there is a polynomial $p(n)$ such that the following task can be computed 
in $\DTIME(p(n))$:
the input is a $\MQ$-NFA $\cA$, which is flat over $\GLZ$.  The input size is $\Abbin{\cA}$, and the output is a  
$\MQ$-NFA~$\cB$ with  $\Abbin{\cB}\leq p(\Abbin{\cA})$ which is flat over $H$ and satisfies \(L(\cB)=L(\cA)
\). \end{corollary}

\begin{proof}Again, it is enough to show that $\FRAT \MQ  \GLZ\sse\FRAT \MQ H$. It is also obvious that all effectiveness assumptions stated in \prref{thm:cAm} are satisfied for $\MQ$, $G=\GL(2,\Z)$, and $H \leq \GL(2,\Z)$ because $H\in \Rec(\GL(2,\Z))$. 
Since $H$ is not part of the input, we assume that the index $[\GLZ:H]$ and a set $R$ of right-coset representatives is given to us in advance\footnote{In case when $H$ is given by a finite set of generators in $\GLZ$, we can compute in a preprocessing phase
the index $[\GLZ:H]$ and a set of right-coset representatives.},
and we can write 
$\GL(2,\Z)$ as a disjoint union over right-cosets $\GLZ=\bigcup_{r\in R} Hr$.
Following the proof of \prref{thm:cAm} step by step, we see that the algorithm runs in polynomial time because addition, multiplication, and division of binary integers is possible 
in polynomial time. 
Thus, the proof of the corollary is the same as that of \prref{thm:cAm}
by plugging in concrete complexities. \end{proof}

In the special case of $H=\SLZ$, we have $\GLZ=H \cup \vdmatrix{1}00{-1}H$. So, we can replace the application of \prref{thm:silva} inside the proof of \prref{cor:Pred} by the simpler construction of 
\prref{prop:GLSL} (that was illustrating a special case for the use of \prref{thm:silva}).

\section{The membership problem for $\FRAT{\GLQ}{S}$ with $\GLZ\sse S$}\label{sec:membfrat}

The aim of \prref{sec:membfrat} is to prove Theorems~\ref{thm:red}
and~\ref{thm:nsmflat}. With respect to decidability \prref{thm:nsmflat} is stronger than \prref{thm:red} but the known upper bounds on the complexities are different. 
\begin{theorem}\label{thm:red}
On input
$g\in \GLQ$ and a $\GLQ$-NFA 
 $\cA$ that is flat over $\GLZ$, where the  input size is $n=\Abbin g + \Abbin \cA$, it is decidable whether $g\in L(\cA)$
in singly exponential time 
$\DEXPTIME=\DTIME(2^{n^{\Oh(1)}})$.\end{theorem}

\begin{theorem}\label{thm:nsmflat}
On input $g\in \GLQ$ and a $\GLQ$-NFA 
 $\cA$ that is flat over the monoid $\GLZ\cup \set{h \in \GL(2,\Q)}{|\det (h)|> 1}$, where the  input size is $n=\Abbin g + \Abbin \cA$, it is decidable whether $g\in L(\cA)$
in doubly exponential time $\DTIME(2^{2^{n^{\Oh(1)}}})$.
\end{theorem}

The proof of \prref{thm:red} is given in \prref{sec:memfrat} and 
the proof of \prref{thm:nsmflat} is in \prref{sec:memDET}, which is a reduction to the assertion in \prref{thm:red}.
The main difficulty is to show decidability of the membership problem for $\FRAT{\GLQ}{\GLZ}$. 
The complexity follows by a careful, but straightforward, analysis of the decidability proof.

\subsection{The membership problem for $\RAT(\GLZ)$}\label{sec:memslz}
In this subsection we consider a special instance of \prref{thm:red}, where the input is an NFA~$\cA$ such that all labels of \tras are in $\GLZ$, and the problem is to decide whether~$1\in L(\cA)$. 
A special case of this problem was studied in \cite{BellHP2023} by Bell et al. Their main result states that the membership problem for subsemigroups of $\GLZ$ is $\NP$-complete. The proof in~\cite{BellHP2023} is technically demanding and quite elaborate.

In \prref{thm:ratslz}, we show a pseudo-polynomial time complexity\footnote{The complexity of a problem involving integers is called \emph{pseudo-polynomial} if it is polynomial time when integers are given in unary representation.} for deciding whether $1\in L(\cA)$. Our proof is rather simple and avoids compression techniques from~\cite{BellHP2023}. It also keeps the paper self-contained at this point.
We are mainly interested in $\DTIME$-complexities, and  $\NP$-completeness means that there is little hope to find a sub-exponential deterministic 
decision algorithm.
Note that another instance of this problem, the subgroup membership problem in $\GLZ$, was shown to be decidable in polynomial time by Lohrey in~\cite{Lohrey23Tocs}.

\begin{theorem}\label{thm:ratslz}
The following problem can be decided in
 $\DTIME(\Abm{\cA}^{\;\;\Oh(1)})$. 

\noindent{} INPUT: A $\GLZ$-NFA $\cA$ whose unary input size is $\Abm {\cA}$.

\noindent{} QUESTION: $1\in L(\cA)$?
\end{theorem}

\begin{proof}
The commutator subgroup of $\SLZ$ is 
a free subgroup of rank~$2$, and it has index~$24$ in $\GLZ$ by~\cite{Newman62} as we discussed in \prref{rem:glzvf}. 
By \prref{cor:Pred} we can reduce in polynomial time the
problem of deciding $1\in L(\cA)$ to the special instance where
all matrices are in the free subgroup~$F\leq \SLZ\leq \GLZ$. The ambient group 
$\SLZ$ is generated by the matrices $S=\vdmatrix{0}{1}{-1}{0}$ of order~$4$
and $R=\vdmatrix{0}{-1}{1}{1}$ of order~$6$. This is a well known 
classical result, see, for example, \cite[Ch.~8.12]{edam16}.
The free subgroup~$F$ has a finite (and symmetric) generating set 
$\Sig=\oi \Sig\sse \SLZ$ (of size at most~$48$) such that each generator in $\Sig$ can be written as a product over the matrices~$S$ and~$R$ of constant length. 

The inclusion $\Sig\sse F$ induces a canonical 
\hom $\psi$ of $\Sig^*$ onto~$F$. 
In another  polynomial time reduction with respect to the unary input size $\Abm{\cA}$, we replace matrices in~$F$ by words over $\Sig$. Using the ideas of Gurevich and Schupp in~\cite{GurevichS07} for the 
projective linear group $\PSL(2,\Z)$, it is possible to replace a matrix in~$F$ of unary size $\Abm{F}$ by a
word over $\Sig$ of length $\Oh(\Abm{F})$. This is also explained, for example, in \cite[Ch.~8.12]{edam16}.
Next, using more \tras, we can assume that each \tra is labeled with a letter in $\Sig$.  The number of additional \tra{s} is in $\Oh(\Abm{\cA})$. It is this step which would exponentially blow-up the NFA if we used binary representation of integers.  For example, we have $\Abbin{\vdmatrix 1n01}\in \Oh(\log n)$, but  $\vdmatrix 1n01$  written as a word $w_n\in \Sig^*$ has $\abs{w_n}\in \Omega(n)$.

Formally, we obtain $\Sig$-NFA~$\cB$ with  $\psi(L(\cB))= L(\cA)$.
More details are in  
\cite[Prop.~15.4]{DiekertElder2020ijac}. Having constructed~$\cB$,
it remains to decide  whether $1\in \psi(L(\cB))$. For that we use a construction of Benois in~\cite{ben69}. Her aim was
to show that $\Rat(F)$ is closed under comp\-le\-men\-tation. For that 
she transforms first an $\Sig$-NFA~$\cB$ into another
$\Sig$-NFA $\cB'$ such that first, $L(\cB)$ only accepts freely reduced words (these are words without any factor $a\oi a$ for $a\in \Sig$) and second,   
$\psi(L(\cB'))= \psi(L(\cB))$. 
Let us explain why  her transformation of~$\cB$ into $\cB'$ can be performed in polynomial time in $\abs{\cB}$. It uses a so-called ``flooding algorithm'' where it is temporarily allowed to use labels in $\Sig\cup \os 1$. 
For $\cB=(Q,\del,I,F)$ and $p,q\in Q$, we let $\cB[p,q]= (Q,\del,\os p ,\os q)$.
As long there is a letter $a\in \Sig$ with $a\oi a\in  L(\cB[p,q])$, we 
introduce an $\eps$-\tra $p\arc 1 q$ into~$\cB$, unless $1\in L(\cB[p,q])$. Next, we remove all $\eps$-\tras by standard methods. 
So, each time~$\cB$ is changed the number of pairs $(p,q)$ with 
$1\in L(\cB[p,q])$ increases. Therefore the flooding stops after at most $|Q|^2$ rounds. Once it is finished, we see that if $ua\oi a v$ is accepted, then $uv$ is accepted, too. Since $\Sig$ is fixed, 
the time complexity of the entire transformation is polynomial in 
$|Q|+|\del|$. The construction begins and ends with NFAs without $\eps$-\tras. Since $L(\cB')= L(\cB)$ and 
$1\in L(\cB')$ \IFF at least one initial state is final, we are done.\footnote{The algorithm of Benois works in a more general setting, for example, see \cite[Sec.~8.9]{edam16}.}
\end{proof}

\begin{remark}\label{rem:lohrey}
If we started with an input where matrices are written in binary, then the proof of \prref{thm:ratslz}
shows decidability in  $\DEXPTIME$. 
\hspace*{\fill}$\diamond$\end{remark}

\subsection{Proof of \prref{thm:red}}\label{sec:memfrat}
The proof of \prref{thm:red} begins with an input matrix $g\in \GLQ$ and a $\GLQ$-NFA  $\cA$  which is flat over $\GLZ$.  
Since the input $g$ is nonsingular, we can assume that $g=\vdmatrix 1001$.
By \prref{cor:Pred}, we transform the NFA in polynomial time to a $\GLQ$-NFA which is flat over $\SLZ$. Thus, without restriction, the input $\GLQ$-NFA   $\cA$ is flat over $\SLZ$. The problem 
is to decide whether the identity matrix is accepted by $\cA$.

If $1\in L(\cA)$, then there is an accepting path such that the \tra{s} outside $\SLZ$ are used  $t$ times, where $t$ is less than the number of strongly connected components of $\cA$. 
(Otherwise, the NFA $\cA$ were not flat over $\SLZ$.) Since the contribution of every such \tra to $\Abbin{\cA}$ is at least $2$, we have $2t < 1 + \abs Q + 2t \leq \Abbin{\cA}$ and hence $t<\Abbin{\cA}/2$.
Thus, we can nondeterministically guess in polynomial time an initial state~$q_0$ and a sequence of $t$ \tra{s} 
$q_{j-1} \arc {g_j} q_j$ for $1\leq j \leq t$ such that all other \tra{s} (which are used on that path) are labeled with matrices from $\SLZ$. We may assume that $t\geq 1$ because for $t=1$ at least one initial state is final,  and then we have a proof for $1\in L(\cA)$.

Using the above guess, we compute in polynomial
time~$t$ subautomata
 $\cA_j$ of $\cA$ for $1\leq j \leq t$ such that
\[ 1\in L(\cA) \iff 1 \in g_1L(\cA_1)g_2L(\cA_2) \cdots g_tL(\cA_t).
\] 
Note that $\sum_j \Abbin{g_j}
+\Abbin{\cA_j}$ is bounded by a polynomial in $\Abbin{g}+\Abbin{\cA}$.
Next, we write each matrix $g_j$ in its \SNF as $g_j=r_je_j\vdmatrix 100{q_j} f_j$, where $0<r_j\in \Q$, $e_j,f_j\in \SLZ$, and $0\neq q_j\in \Z$.
Let $r= \prod_{j=1}^{t}r_j$ and $q= \prod_{j=1}^{t}q_j$, then 
\[
1 \in g_1L(\cA_1)g_2L(\cA_2) \cdots g_tL(\cA_t)
\]
implies $r^2q=1$.  
Thus, $0<1/r\in \N$ and $0<q\in \N$. For $m=1/r$ we obtain:
\begin{equation}\label{eq:rdivm}
g\in L(\cA)\! \iff \! \vdmatrix {m}00{m} \in e_1\vdmatrix 100{q_1}f_1L(\cA_1) e_2 \vdmatrix 100{q_2}f_2 L(\cA_2)\cdots  e_t \vdmatrix 100{q_t}f_t L(\cA_t).
\end{equation}
\begin{definition}\label{def:UqLq}
Let $0\neq q\in \N$. Then we define two subgroups: 
\begin{align*}
H_{L,q} &=\set{\vdmatrix abcd\in \SL(2,\Z)}{b\equiv 0 \bmod q}\quad \text{and}\\
H_{U,q} &=\set{\vdmatrix abcd\in \SL(2,\Z)}{c\equiv 0 \bmod q}.
\end{align*}
\end{definition}
The images of $H_{L,q}$ and $H_{U,q}$ 
${}\bmod q$
are the subgroups of \emph{lower} and \emph{upper} triangular matrices in $\SL(2,\Z/q\Z)$, which explains the choice of letters $L$ and $U$.
\begin{lemma}\label{lem:Heq}
The subgroups $H_{L,q}$ and $H_{U,q}$ of $\SLZ$ are conjugate in $\GLQ$: 
\begin{equation}\label{eq:cqbq}
\oi {\vdmatrix 100q} H_{U,q}  \vdmatrix 100q= H_{L,q}.
\end{equation}
Moreover their indices in $\SLZ$ are in $\Oh(q \log q)$. \Ip they are of finite index and therefore recognizable subsets in $\SLZ$ and $\GLZ$. 
\end{lemma}
\begin{proof}
\prref{eq:cqbq} is straightforward since
${\vdmatrix 100{1/q}} \vdmatrix abcd  \vdmatrix 100q= \vdmatrix a{bq}{c/q}d$. \prref{prop:indexH} shows the estimation of the index.
\end{proof}
\begin{lemma}\label{lem:nix}
Let $0\neq q\in \Z$ and
$\gcd(b,d)=1$. Then there are integers $1\leq x,y < \abs q$ such that $\gcd(x,y)=1$ and $xb +yd\equiv 0\bmod q$. 
\end{lemma}

\begin{proof}
For $\abs q= 1$, the numbers $x=y=1$ are coprime; and they satisfy $xb +yd\equiv 0\bmod q$ because all integers are congruent modulo $1$. 
Hence, we may assume $2 \leq \abs q$.
 
Let $P_1$ be the set of primes~$p$ such that $\gcd(p,d)=1$ and 
$P_2$ be the set of primes~$p$ such that $p \mid d$. 
Write $q=q_1\cdot q_2$ such that $q_i$ uses primes from $P_i$, only. For every prime~$p$ we have
\begin{align}\label{eq:hugo1}
p\in P_1 \implies \gcd(p,d)= 1& \\ \label{eq:hugo2}
p\in P_2 \implies \gcd(p,b)= 1& \text{, because }\gcd(b,d)=1.
\end{align}
Hence, $d$ is invertible in $\Z/q_1\Z$
and $b$ is invertible in $\Z/q_2\Z$. Therefore we can solve the following congruences.  
\begin{align}\label{eq:con1}
x_1\equiv 1 \bmod q_1 \text{ and }& y_1\equiv -b\oi d  \bmod q_1 \\ \label{eq:con2}
y_2\equiv 1 \bmod q_2 \text{ and }& x_2\equiv -d\oi b \bmod q_2 
\end{align}
Since $\gcd(q_1,q_2)=1$ we obtain by the Chinese remainder theorem $x,y$ with $1\leq x,y<|q|$ such that
\begin{align}\label{eq:conn1}
x\equiv 1 \bmod q_1 \text{ and }& x\equiv -d\oi b  \bmod q_2 \\ \label{eq:conn2}
y\equiv 1 \bmod q_2 \text{ and }& y \equiv -b\oi d  \bmod q_1 
\end{align}
The congruences in (\ref{eq:conn1}) and (\ref{eq:conn2}) tell us that these $x,y$ with $1\leq x,y<|q|$ satisfy 
\begin{align}\label{eq:sunny}
xb +yd \equiv 0 \bmod q.
\end{align}
Indeed, the congruence in (\ref{eq:sunny}) holds ${}\bmod q_1$ and 
${}\bmod q_2$, hence it holds ${}\bmod q$.

We claim that $\gcd(x,y,q)=1$. To see this, let $p\mid q$.
Then $p\mid q_i$ for exactly one $i\in \os{1,2}$. Say $i=1$, then $x\equiv 1 \bmod q_1$ implies $x\equiv 1 \bmod p$ because $p\mid q_1$. Hence, $\gcd(x,q_1)=1$. For $i=2$ we obtain $\gcd(y,q_2)=1$ and therefore $\gcd(x,y,q)=1$ since $q=q_1q_2$, which shows the claim.

It is still possible that there is a prime~$p$ such that $p\mid x$ and $p\mid y$. However, since
$\gcd(x,y,q)=1$ such a prime~$p$ is invertible in $\Z/q\Z$. Thus,
\begin{align}\label{eq:sunn}
\frac{x}{p}b +\frac{y}{p}d \equiv 0 \bmod q.
\end{align}
The property $\gcd(\frac{x}{p},\frac{y}{p},q)=1$ is inherited. 
So we can make $x$ and $y$ smaller. 
Repeating this process a finite number of times, we obtain desired $x$ and $y$ such that $\gcd(x,y)=1$.
\end{proof} 

We use the following well-known fact based on the extended Euclidian algorithm. 
\begin{lemma}\label{lem:gcd}
	Given two $n$-bits integers $a$ and $b$, we can compute in deterministic polynomial time in $n$ 	integers $x$ and $y$ such that $ax+by=\gcd(a,b)$ with $|x|, |y|\leq \max\{|a|,|b|\}$.
\end{lemma}
Actually, using the fact that multiplication and division of $n$-bits integers is possible in soft-linear time, we can give a soft-quadratic time bound
for \prref{lem:gcd}.

\begin{lemma}\label{lem:guessxy}
Let $q\in \Z$ and  $g= \vdmatrix abcd\in \SL(2,\Z)$ be given in binary encoding. Then for $T\in \os{L,U}$ there is a matrix $M=\vdmatrix xyzw\in \SL(2,\Z)$, where $\Abm M\leq q$, such that $g\in M \cdot H_{T,q}$. \Ip since $\Abbin M\leq \log(q)$, we can guess the matrix $M$ nondeterministically and verify in polynomial time that $\oi M \vdmatrix abcd \in H_{T,q}$.
\end{lemma}

\begin{proof}
Since $\vdmatrix abcd\in \SL(2,\Z)$, the entries $b$ and $d$ are coprime. 
By \prref{lem:nix}  there are coprime $x$ and $y$ such that firstly, 
$xb +yd\equiv 0 \bmod q$ and secondly, $1\leq x,y \leq \abs q$.
We can guess these $x,y$. Next, we apply \prref{lem:gcd} 
to obtain $w,z$ in polynomial time such that 
firstly, 
$xw - yz = 1$ and secondly, $\abs w, \abs z \leq \max\os{x,y} \leq \abs q$. We obtain $\vdmatrix xyzw \vdmatrix abcd \in H_{L,q}$. 
This shows the result for $T=L$. The result for $T=U$ is symmetric. 
\end{proof}

Recall that we have reduced via an $\NP$-reduction the problem
of deciding $g\in L(\cA)$ to the problem of deciding in the notation of (\ref{eq:rdivm}) the following membership problem:
\begin{align*}\vdmatrix {m}00{m} \in e_1\vdmatrix 100{q_1}f_1L(\cA_1) e_2 \vdmatrix 100{q_2}f_2 L(\cA_2)\cdots  e_t \vdmatrix 100{q_t}f_t L(\cA_t).
\end{align*}
Firstly, we will do the following preprocessing steps. We conjugate the above equation with~$e_1$ to move it to the end of the expression.
By making, if necessary, $t$ larger and adding dummy NFAs $\cA_i$ of constant size with $L(\cA_i)=\os 1$ we can assume that $t\leq \Abbin{\cA}$ with  $2\leq t\in 2^\N$.  
In the new notation, we let
$m_t=m$ and construct (in polynomial time) NFAs $\cA_{j,t}$, for $1\leq j\leq t$, such that $L(\cA_{j,t})= f_jL(\cA_{j})e_{j+1}$ for $j<t$, and $L(\cA_{t,t})= f_tL(\cA_{t})e_1$.
For convenience, we assume without restriction that 
each $\cA_{j,t}$ is trim with a single initial state 
$p_{j,t}$ without incoming \tra and a single outgoing \tra
$p_{j,t}\arc{f_j}p'_{j,t}$.

The last step finishes the preprocessing phase,  
and the problem becomes to decide whether
\begin{equation}\label{eq:fprepro}
\vdmatrix {m_t}{0}{0}{m_t}\in \vdmatrix 100{q_{1,t}}L(\cA_{1,t})\cdots  \vdmatrix 100{q_{t,t}}L(\cA_{t,t}).
\end{equation}
We now perform at most $\log_2 t \leq \log_2\Abbin{\cA}$ rounds. 
In the~$k$-th round we will have $s=t/2^{k-1}$. 
Each round starts with the problem:
\begin{equation}\label{eq:rond}
 \vdmatrix {m_s}00{m_s}\in \vdmatrix 100{q_{1,s}}R_{1,s} \cdots \vdmatrix 100{q_{s,s}}R_{s,s}
\end{equation}
where 
$0<m_s\in \N$ and $R_{i,s}= L(\cA_{i,s})$ such that for all $(i,s)$ we have $0\neq q_{i,s}\in \Z$ and $\cA_{i,s}$ is an $\SL(2,\Z)$-NFA. In first round, we have $s=t$ and we start with the problem in (\ref{eq:fprepro}). Each round will halve the number 
$s$ until either $s$ becomes $1$ or we know that $g\notin L(\cA)$, for example because $m_s^2\neq \prod_{i=1}^{s} q_{i,s}$.  In such a case, we stop. 
In the~$k$-th round, we perform the following steps {}from \textbf{1} to~\textbf{10}. 
\begin{enumerate}
\item For the sake of simplifying the notation in (\ref{eq:rond}), we rename $m_s$, $R_{i,s}$, and $L(\cA_{i,s})$ as $m$, $R_{i}$, and $L(\cA_{i})$, respectively. Thus, the problem in the~$k$-th round 
becomes to decide whether the following holds 
\begin{equation}\label{eq:newQ}
\vdmatrix m00{m}\in \vdmatrix100{q_1} R_{1} \cdots  \vdmatrix100{q_s}R_{s}.
\end{equation}
Without restriction we have $m^2 = \prod_{i=1}^{s} q_{i,s}$ because 
$R_i\sse \SLZ$ for all $1\leq i \leq s$. 
\item  
Assuming that the assertion in (\ref{eq:newQ}) holds,  
we guess for all even $2\leq i \leq s$  matrices $M_i\in\SLZ$ such that $\Abm{M_i}\leq q_i$ and $R_{i-1}\cap M_iH_{U,q_i}\neq \es$.
This is possible because if (\ref{eq:newQ}) holds, then such $M_i$'s \emph{exist} and \prref{lem:guessxy} gives an upper bound on $\Abm{M_i}$.
\item For all odd $i$ between  $1$ and $s$, we rename in  $\cA_{i}$ the label $f_i$ of the unique outgoing
\tra from the initial state $p_i$ to some $p'_i$ with $M_{i+1}^{-1}f_i$. 
We obtain an $\SLZ$-NFA $\wt \cA_{i}$ such that 
$L(\wt \cA_{i})= M^{-1}_{i+1} L(\cA_{i})$ for all odd $i$.  We do not touch the $\cA_{i}$'s for even $i$.  
\item By \prref{lem:Heq}, we know that the index of $H_{U,q_i}$ in $\SLZ$ is in $\Oh(\abs{q_i}\log \abs{q_i})$. \Ip 
$H_{U,q_i}$ is a recognizable subset of $\SLZ$. This implies that
$\oi{M_i} R_{i-1} \cap H_{U,q_i}$  is rational in $\SLZ$.
More precisely, for all even~$i$, 
using \prref{cor:Pred} we construct in polynomial time an NFA $\cB_i$
such that firstly, the NFA $\cB_i$ accepts 
$\oi{M_i} R_{i-1} \cap H_{U,q_i}$ and secondly, all labels of the \tra{s} are in $H_{U,q_i}$. It is also easy to see that 
the construction keeps the invariant that  $\cB_i$  has a unique initial state with a single outgoing \tra but no incoming \tra. 
\item  For every even $i$, we write
\begin{align*}(R_{i-1}\cap M_iH_{U,q_i})\vdmatrix 100{q_i} &=  M_i(\oi{M_i} R_{i-1} \cap H_{U,q_i})\vdmatrix 100{q_i}\\ 
&= 
M_i\vdmatrix 100{q_i} \big(\vdmatrix 100{1/q_i}(\oi{M_i} R_{i-1} \cap H_{U,q_i})\vdmatrix 100{q_i}\big)\\
&=M_i\vdmatrix 100{q_i} \big(\vdmatrix 100{1/q_i}L(\cB_{i})\vdmatrix 100{q_i}\big).
\end{align*}
\item Define $K_{i}= \vdmatrix 100{1/q_i} L(\cB_{i}) \vdmatrix 100{q_i}$. The NFA for accepting $K_{i}$ is the NFA $\cB_{i}$ where every label $h$ of a \tra is replaced 
by $\vdmatrix 100{1/q_i} h \vdmatrix 100{q_i}$. Since $h\in H_{U,q_i}$, the new labels belong to the subgroup $
H_{L,q_i}$ of $\SLZ$.
\item Define $R_{i}'= K_{i}\cdot R_{i}$ and let $g_i'=\vdmatrix 100{q_{i-1}}M_i \vdmatrix 100{q_i}$ for all even $i$.
For each $g_i'$, compute its \SNF $g_i'= r_i'e'_i \vdmatrix 100{q_i'}f_{i}'$. Thanks to \prref{lem:KB},
it is possible to do in time polynomial in $n= \Abbin g +\Abbin{\cA}$ from inputs $q_{i-1}$, $q_i$ and $M_i$
since $\Abbin{q_{i-1}}$, $\Abbin{q_{i}}$, and $\Abbin{M_i}$ are all bounded by a polynomial in $n$.
\item  Similar to the preprocessing phase which led to (\ref{eq:fprepro}), we push the positive rationals $r_i'$, for each even $i$, to the left by multiplying both sides in (\ref{eq:newQ}) with $1/r_i'$. This yields a new positive natural number $m_{s/2}$ on the left side. Otherwise we have $g\notin R$. 
Since each  $1/r_i'$ is pushed to the left, the new
$g_i'$ is equal to  $e'_i \vdmatrix 100{q_i'}f_{i}'$. 
\item We conjugate~(\ref{eq:newQ}) with $e'_2$ to move it to the end of the expression. 
For even $i$, define $R_i''= f_{i}'R_i'e'_{i+2}$, where $e'_{{s}+2}=e'_2$. 
Overall, we have to verify:
\[
\vdmatrix {m_{s/2}}00{m_{s/2}} \in  \vdmatrix 100{q_2'}R''_2 \cdots  \vdmatrix 100{q'_{s}}R''_{s}.
\]
Note that the concatenation uses only even indices. 
We must have $m_{s/2}^2= \prod_i q'_{2i}$ since, otherwise, we have $g\notin L(\cA)$.
Finally, we let $q_{i,s/2} = q'_{2i}$ and $\cA_{i,s/2}$ be the NFA for $R''_{2i}$.
This finishes one round of the reduction. 
\item If $s/2 = 1$, then we must have $q_{1,1}=m_{1}^2$. Otherwise, we have $g\notin R$. If $s/2 >1$, then we go back to step \textbf{1} 
with the new problem, where the new value of $s$ becomes $s/2$.   
\end{enumerate}
If the procedure above terminates with $s/2=1$, then we end up with the problem of deciding
$\vdmatrix {m_{1}}00{m_{1}} \in  \vdmatrix 100{m_1^2} R_{1,1}$. Due to uniqueness of the \SNF, we must have $m_1=1$ and hence the problem reduces to deciding whether $\vdmatrix 1001 \in R_{1,1}$, which can be done using \prref{thm:ratslz}.

It remains to analyze the time complexity of the procedure
for the input size $n= \Abbin g +\Abbin{\cA}$.
Note that the $\NP$-reduction in the preprocessing step can be replaced by a $\DTIME(2^{n^{\Oh(1)}})$-reduction.
The main procedure stops after at most $\log_2 t \leq \log_2 n$ rounds.
After each round the largest 
value $|q_i|$ is bounded by $m^2$ using the inequality in (\ref{eq:newQ}), and hence  $|q_i|\in 2^{\Oh(n)}$. 
At every stage, in step \textbf{4}, we rely on  product automata construction with an automaton of size $\Oh(|q_i|\log |q_i|)$. This requires $|q_i|^{\Oh(n)}n^{\Oh(1)}$ time.
We also need to compute Smith normal forms in step \textbf{7}, which can be done in time polynomial in $n$.
In step~\textbf{2}, we used a nondeterministic guesses. In a deterministic simulation, we need to run through $\Oh(q_{max}^{4n})$ possibilities, where $q_{max} = \max \{\abs {q_i}\}$.
Hence, the reduction runs in $\DTIME(2^{n^{\Oh(1)}})=\DEXPTIME$ time.

Finally, if we reach $s=1$, then we apply \prref{thm:ratslz}
which eventually decides whether $g\in L(\cA)$ by checking 
in time $\DTIME(2^{n^{\Oh(1)}})=\DEXPTIME$ whether
$\vdmatrix 1001\in L(\cA_{1,1})$.
This concludes the proof. 
\subsection{Proof of \prref{thm:nsmflat}}\label{sec:memDET}
Recall that the statement of \prref{thm:nsmflat} says that on input $g\in \GLQ$ and a $\GLQ$-NFA 
 $\cA$ that is flat over the monoid $S= \GLZ\cup \set{h \in \GL(2,\Q)}{|\det (h)|> 1}$, it is decidable whether $g\in L(\cA)$ in time $\DTIME(2^{2^{n^{\Oh(1)}}})$,
where input size $n$ is defined as $n=\Abbin g + \Abbin \cA$. 
Clearly, $g\in L(\cA)\iff 1\in \oi g L(\cA)$.
Actually, since $\cA$ is flat over~$S$, we can construct 
in polynomial time $S$-NFAs $\cA_i$ and matrices $f_i\in \GLQ\sm S$ for $1\leq i\leq \ell \in \Oh(n)$ such that 
\begin{align}\label{eq:polyred0}
 \vdmatrix 1001\in L(\cA_0)f_1L(\cA_1) \cdots f_\ell L(\cA_\ell)
 &\iff g\in L(\cA).
\end{align}
Let $1\leq m\in \N$ be the greatest common divisor of the denominators of entries in $f_i$ for all $1\leq i\leq \ell$,
which can be computed in polynomial time.
Multiplying both side in (\ref{eq:polyred0}) with~$m$, we obtain:
\begin{align}\label{eq:polyred00}
 \vdmatrix m00m\in L(\cA_0)g_1L(\cA_1) \cdots g_\ell L(\cA_\ell)
 &\iff g\in L(\cA),
\end{align}
where all the $g_i$'s have integer entries. \Ip $g_i\in S$ for all $1\leq i\leq \ell$.
Thus, in polynomial time we find an~$S$-NFA $\cA'$ having the property
\begin{align}\label{eq:polyred000}
 \vdmatrix m00m\in L(\cA')
 &\iff g\in L(\cA).
\end{align}
 Since we can replace the input size $n$ by any polynomial in $n$, we assume for simplicity and without restriction that 
 $\Abbin m\leq n$ and $\Abbin{h}\leq n$ whenever $h$ appears as a label of a \tra in $\cA'$. This implies $m\leq 2^n$ and
 $|\det(h)|\geq 1+2^{-2n}$ whenever $|\det(h)|>1$. 
Assume $\vdmatrix m00m \in L(\cA')$ and let $t$ be the maximal number of times a \tra is used on an accepting path which is labeled by $h$, where~$|\det(h)|>1$. 
 Since $(1+2^{-2n})^{k}> 1+k2^{-2n}$ for all~$k$, we obtain
$t\leq 2^{2n}(m^2-1)\leq 2^{4n}$ because $m\leq 2^n$.
Next, we nondeterministically guess $t\leq 2^{4n}$ transitions
labeled by $h_i$ with~$|\det(h_i)|>1$ and $\GLZ$-subautomata $\cA'_i$ of $\cA'$ for $1\leq i \leq t$ such that
\begin{equation}\label{eq:guess}
	g\in L(\cA) \iff \vdmatrix m00m \in L(\cA')\iff \vdmatrix m00m \in L(\cA'_0)h_1L(\cA'_1) \cdots h_tL(\cA'_t).
\end{equation}
Let $L= L(\cA'_0)h_1L(\cA'_1) \cdots h_tL(\cA'_t)$. Then we have $L\in\FRAT\GLQ \GLZ$ and the language $L$ can be represented 
by some NFA~$\cB$, flat over $\GLZ$, which can be constructed in deterministic time $2^{{\Abbin{\cA}}^{\Oh(1)}}$.
By \prref{thm:red}, we can decide $\vdmatrix m00m \in L(\cB)$ in deterministic time $2^{{\Abbin{\cB}}^{\Oh(1)}}$. 
Altogether, we obtain a deterministic doubly exponential time algorithm to decide $g\in L(\cA)$ as stated in \prref{thm:nsmflat}.
\section{Singular target matrices}\label{sec:ginger}
The aim of this section is to prove the following two theorems. 
\begin{theorem}[The mortality problem]\label{thm:mort}
Given as input a $\MQ$-NFA 
 $\cA$ which is flat over the monoid generated by $\GLZ\cup \Q \cup \os{s_0}$, it is decidable whether $0\in L(\cA)$
in singly exponential time $\DTIME(2^{n^{\Oh(1)}})$ with respect to the  input size $n=\Abbin \cA$.
\end{theorem}
\begin{theorem}\label{thm:sing}
Given as inputs a matrix $g\in \MQ$ and a $\MQ$-NFA 
 $\cA$ which is flat over the monoid generated by $\GLZ\cup \set{r\in \Q}{r>1} \cup \os{0,s_0}$, 
it is decidable whether $g\in L(\cA)$
in doubly exponential time $\DTIME(2^{2^{n^{\Oh(1)}}})$ with respect to the  input size $n=\Abbin g + \Abbin \cA$.
\end{theorem}
The proofs of these theorems are given in \prref{sec:mort} and 
\prref{sec:sing}, respectively.
\subsection{Preliminary calculations}\label{sec:beforeflood}
In this section, we will use the following definitions.

\begin{definition}\label{def:Mija}
For $a\in \Z$ we define
\[
M_{i,j}(a) = \set{\vdmatrix{g_{11}}{g_{12}}{g_{21}}{g_{22}}\in \GLQ\cap \MZ}{g_{ij} = a}.
\]
For $0\neq a\in \Z$ we define
\[
M(a,0) = \set{\vdmatrix{g_{11}}{g_{12}}{g_{21}}{g_{22}}\in \GLQ\cap \MZ}{g_{11} = a \text{ and } g_{21} = 0}.
\]
\end{definition}
In other words, $M_{i,j}(a)$ is the set of nonsingular $2\times 2$ integer matrices where the entry $i,j$ is equal to $a$; and  $M(a,0)$ is the subset of upper triangular matrices in  $M_{1,1}(a)$.
\begin{problem}\label{prob:m11}\
\noindent{} INPUT: An integer $a\in \Z$ and a $(\GLQ\cap \MZ)$-NFA~$\cB$ which is flat over $\GLZ$ and where the  input size is $\Abbin a + \Abbin \cB$.

\noindent{} QUESTION: $M_{i,j}(a)\cap L(\cB)\neq \es$?
\end{problem}
\begin{problem}\label{prob:ma0}\
\noindent{} INPUT: An integer $0\neq a\in \Z$ and a $(\GLQ\cap \MZ)$-NFA~$\cB$ which is flat over $\GLZ$ and where the  input size is $\Abbin a + \Abbin \cB$.

\noindent{} QUESTION: $M(a,0)\cap L(\cB)\neq \es$?
\end{problem}
\begin{problem}\label{prob:glgl}\
\noindent INPUT: $g\in \GLQ$ and a $\GLQ$-NFA 
$\cB$ that is flat over $\GLZ$, where the  input size is $\Abbin g + \Abbin \cA$.

\noindent{} QUESTION: $g\in L(\cB)$?
\end{problem}
Recall that \prref{prob:glgl} is decidable in singly exponential time 
$\DTIME(2^{N^{\Oh(1)}})$ for $N=\Abbin g + \Abbin \cA$ by \prref{thm:red}. The reason to use the letter $N$ here instead of~$n$ is that we will apply \prref{thm:red} later for singular matrices where $N$ is exponential is another parameter~$n$.
\begin{lemma}\label{lem:freddy}
There are $\NP$-reductions of Problems~\ref{prob:m11} and \ref{prob:ma0} to \prref{prob:glgl}.
\Ip we can solve both problems in $\DEXPTIME$ by  \prref{thm:red}.
\end{lemma}

\begin{proof}
We begin with \prref{prob:m11}.
Let $N=\Abbin a + \Abbin {\cB}$ be the input size, and $L(\cB) = g_1L(\cB_1)g_2L(\cB_2) \cdots g_tL(\cB_t)$, where $L(\cB_j)\sse \GLZ$.

Note that $\GLZ$ contains the matrix $\vdmatrix 0110$ such that multiplying any matrix $m\in \MQ$ with $\vdmatrix 0110$ on the left (resp., on the right) swaps the rows (resp., columns) of~$m$. Hence, without restriction, we can assume that $i=j=1$, and the problem is to decide whether $M_{1,1}(a)\cap L(\cB)\neq \es$.

Since~$\cB$ is flat over $\GLZ$, it follows that $\Abbin D$
is bounded by some 
polynomial in~$N$ for every $\vdmatrix abcd\in L(\cB)$, where $N$ is the input size.
If $a=0$, then we have $D=-bc$,
and so we can guess $b$ and $c$. Note that
\[
\vdmatrix 0bcd \vdmatrix 1x01 = \vdmatrix 0bc{d+cx}.
\]
Hence we can guess $0\leq d' \leq \abs c$ such that
\begin{equation*}
\vdmatrix 0bcd \in  L(\cB) \iff \vdmatrix 0bc{d'}\in  L(\cB)\vdmatrix 1101^{\Z}, \end{equation*}
where the question ``$\vdmatrix 0bc{d'}\in  L(\cB)\vdmatrix 1101^{\Z}$\,?'' is an instance of \prref{prob:glgl}.
Here, and in the following, $\vdmatrix 1101^{\Z}$ is a shortcut for $\vdmatrix 1101^{\Z} = \vdmatrix 1101^{*}\cup \vdmatrix 1{-1}01^{*}$.

Thus, we assume $a\neq 0$ for the rest of the proof.
For all $x,y\in \Z$, a straightforward calculation shows: 
\[
\vdmatrix 10y1 \vdmatrix abcd \vdmatrix 1x01 = \vdmatrix a{b+ax}{c+ay}{d+cx+by+axy}.
\]
As a consequence,  there are 
integers ${b'},{c'},{d'}$ with $0\leq \abs{b'},\abs{c'}\leq \abs{a}$
such that 
\begin{equation}\label{eq:leon}
\vdmatrix abcd \in  L(\cB) \iff \vdmatrix a{b'}{c'}{d'}\in  \vdmatrix 1011^{\Z}L(\cB)\vdmatrix 1101^{\Z}. 
\end{equation}
Since $a\neq 0$ and $ad'=(D +{b'}{c'})$, the binary sizes of the integers $b'$, $c'$, and $d'$ are polynomially bounded in $n$. Thus, we can guess the integers ${b'},{c'}$ among exponentially many candidates and compute $d'$. The right-hand side in (\ref{eq:leon})
is again an instance of \prref{prob:glgl}, and we are done with \prref{prob:m11}.

It remains to show an $\NP$-reduction for \prref{prob:ma0}. 
Recall that the problem is to decide whether there exist $b,d\in \Z$ such that $\vdmatrix ab0d \in L(\cB)$. 

Again, since~$\cB$ is flat over $\GLZ$, it follows that $\Abbin D$
is bounded by some 
polynomial in~$N$. Since $D=ad$ and $|D|= |g_1\cdots g_t|$,
where the $g_i$'s are the nonsingular integer matrices defined above, we know that
$d=\pm g_1\cdots g_t/a\in \Z$. So there are only two options for $d$, and we can compute both possibilities in polynomial time if $D$ and all $g_i$'s are written in binary. 
Note that
\[
\vdmatrix ab0d \vdmatrix 1x01 = \vdmatrix a{b+ax}0d.
\]
Hence we can guess $0\leq b' \leq \abs a$ such that
\[
\vdmatrix ab0d \in  L(\cB) \iff \vdmatrix a{b'}0d\in  L(\cB)\vdmatrix 1101^{\Z},
\]
where the question ``$\vdmatrix a{b'}0d\in  L(\cB)\vdmatrix 1101^{\Z}$\,?'' is again an instance of \prref{prob:glgl}.
\end{proof}
\subsection{The flooding procedure}\label{sec:inflood}
Recall that a zero-\tra is a \tra whose label is the zero matrix,
and $s_0$ denotes the matrix $\vdmatrix 1000$.
In the following, a \sztra means a \tra with label $a\cdot s_0= \vdmatrix a000$ where $0\neq a\in \Z$. (The notation is justified in our context since every $\MZ$-matrix of rank  one is in $\SLZ \vdmatrix a000 \SLZ$ with $0\neq a\in \Z$.)
 
The aim of this section we prove \prref{lem:final}, which will be used to show Theorems \ref{thm:mort} and~\ref{thm:sing}.
The key ingredient of \prref{lem:final} is the procedure $\textsc{Flooding}(m,\cA)$; and we begin with an informal description. It has two parameters: 
a natural number $m\in \N$ and a $\MZ$-NFA $\cA$ of input size $\Abbin{\cA}=n$ which is flat over $\GLZ\cup \Z s_0$.
We rewrite in $\DTIME\big(n^{\Oh(1)}\big)$ every label in its \SNF. This makes it possible to assume that the procedure is called only if $\cA$ is a $\MZ$-NFA where each label of a nonzero \tra is either in $\GLQ \cap \MZ$ or a rank-$1$ matrix. 

The idea
 of the ``flooding''  is to introduce more \sztras that can be used as shortcuts for accepting paths
without changing the accepted language $L(\cA)$.\footnote{A similar idea was also used in \prref{sec:memslz} and, as mentioned there, goes back to~\cite{ben69}.} Actually, there are only three cases in the proof of \prref{lem:final}.
Firstly, for $m=0$, the procedure stops as soon as a \tra with the
zero-matrix as a label appears. This is the witness that the zero-matrix is accepted, and hence we stop. 
For $m\neq 0$, we first remove all zero-\tras and we never introduce any 
zero-\tra. 
In the remaining two cases, the procedure either exits with the correct output that $ms_0\notin L(\cA)$ or,
in the third cases, it transforms $\cA$ into an NFA~$\cB$  with $L(\cB) = L(\cA)$ such that if $ms_0\in L(\cB)$, then
it is accepted by a path in $\cB$ which uses a \sztra exactly once. Clearly, the third case is impossible for $m=0$. 
The formal description is in \prref{fig:flo}.

\begin{figure}[ht]
\begin{algorithmic}[1]\small
\Procedure{Flooding}{$m,\cA$}
\State {Run the trimming procedure and denote by~$\cB$ its output.}
\State {If $\cB$ contains a zero-\tra and $m=0$, then output $\vdmatrix 0000 \in L(\cA)$ and exit.}
\State {Otherwise, remove all zero-\tras in $\cB$ and trim it again.}
\Repeat\Comment{\textit{In the beginning of each round, there are no zero-\tras and~$\cB$ is trim.}}

\If{there is no \tra with label $a\cdot s_0$ such that $a\mid m$}
\State {\textrm{Output} $\vdmatrix m000 \notin L(\cA)$ \textrm{and} exit the procedure}.
\EndIf \State  Create a list $\cL$ containing all
triples $(t,t',a)$, 
where $0\neq a\in \Z$ and $t,t'$ are

\hskip\algorithmicindent \sztras $t=(q\Arc{\!a'\cdot s_0} p)$ with $t'=(p'\Arc{a''s_0} q')$ 
such that

\hskip\algorithmicindent 
the product $a'aa''$ divides~$m$.
 \Comment{\textit{Otherwise, $a'aa''$ cannot be used as a label.}}

\ForAll{$(t,t',a)\in \cL$}
\State Let $\cB_{[p,p']}$ be a sub-automaton of $\cB$ containing all
\tras  with labels from 

\hskip\algorithmicindent\ \ $\GLQ \cap \MZ$ in which $p$ is the unique initial and $p'$ is the unique final state.

\Comment{\textit{The procedure behaves differently for $m=0$ and $m\neq0$.}}
\If {$m=0$ and $M_{1,1}(0)\cap L(\cB_{[p,p']})\neq \es$}

\State \textrm{Output} $\vdmatrix 0000 \in L(\cB)$  \textrm{and} exit the procedure.

\Comment{\textit{We are done because~$\cB$ is trim.}}
\ElsIf {$m\neq 0$  and $M_{1,1}(a)\cap L(\cB_{[p,p']})\neq \es$}
\State Introduce a \tra $q\Arc {\!\!a'\!aa''\cdot {s_0}} q'$
(unless it is already present in~$\cB$).
\EndIf \Comment{\textit{See \prref{fig:flood} for an illustration.}}
\EndFor
\Until{the NFA~$\cB$ stabilizes during the body of the repeat-loop in lines 3:--15:}
\EndProcedure
\end{algorithmic}
\caption{The code of the flooding procedure.}
\label{fig:flo}
\end{figure}
\begin{figure}[ht]
\begin{center}
	\begin{tikzpicture}[xscale=2,yscale=1.5]
	\draw (-1,0) node (pi) {$q$};
	\draw (3,0) node (pf) {$q'$};
	\draw (0,0) node (p0) {$p$};
		\draw (2,0) node (p2) {$p'$};
\draw (0.9,1.15) node (pk) {$a'aa''\!\cdot\! s_0$};
\draw[->,>=latex] (pi) to node [above] {$a'\!\cdot\! s_0$} (p0);
\draw[->,>=latex] (p2) to node [above] {$a''\!\cdot\! s_0$} (pf);
\draw[dotted,->,>=latex] (p0) to node [above] {$g_1\cdots g_\ell \in M_{1,1}(a)$} (p2);
\draw[->,>=latex,bend left=45] (pi) to node [left] {} (pf);		
	\end{tikzpicture}
\end{center}
\caption{The flooding procedure introduces a \sztra with label $a'aa''\!\cdot\! s_0$}
\label{fig:flood}
\end{figure}

\begin{lemma}\label{lem:final}
Let $m\in \N$ and $\cA$ be a $\MZ$-NFA  which is flat over 
$\GLZ\cup \Z s_0$. 
Then, for $m=0$ we have $\vdmatrix 0000 \in L(\cA)$ \IFF the procedure $\text{\sc Flooding}(0,\cA)$ in \prref{fig:flo} stops with that output.

For  $m\neq 0$, the procedure works as follows. It either stops and correctly outputs that $ms_0\notin L(\cA)$ or it 
terminates with a trim $\MZ$-NFA which is flat over
$\GLZ\cup \Z s_0$ and has the following two properties:
\begin{enumerate}
\item We have $L(\cB) = L(\cA)$.
\item If $ms_0\in L(\cA)$, then $ms_0$ is accepted by some path where a \sztra is used exactly once. 
\end{enumerate}

Moreover, the flooding procedure can be implemented in $\DTIME(2^{N^{\Oh(1)}})$, where $N = \Abbin{m} + \Abbin{\cA}$.
\end{lemma}
\begin{proof}
Using \prref{lem:silva}, it is easy to see that $L(\cB)=L(\cA)$ is an invariant throughout the procedure. 
If we see a zero-\tra
after the initial trimming then we give the correct answer for $m=0$ in $\DTIME(N^{\Oh(1)})$ and we are done in this case. Thus, we may assume that 
we enter the repeat-loop at least once with a trim $\MZ$-NFA $\cB$ without
zero-\tras. If all labels of \tras in $\cB$ are invertible, we cannot accept any singular matrix. Thus, if $ms_0\in L(\cA)$, then on every iteration of the loop, $\cB$ must contain 
a \tra with label $as_0$ such that $a\mid m$ because $m$ and all labels are integer matrices. Thus, the procedure gives the correct answer 
$\vdmatrix m000 \notin L(\cA)$
whenever it exits in the body of the outer repeat-loop with a negative answer. 
Inside the inner for-loop, the procedure can stop and exit with another correct answer 
$\vdmatrix m000 \in L(\cA)$ for $m=0$. 

The considerations above handle all possible exits, and each time the answer is correct. 
It remains to deal with the case when there are no such exits at all. 
The termination is clear because the flooding must stop eventually.
We claim that every iteration of the repeat-loop shortens every accepting path for $ms_0\in L(\cB)$.

Suppose $m\neq0$ and $ms_0\in L(\cB)$.  Let $\ell$ be the minimal number of \sztras 
used on an accepting path for $m$. For $\ell=1$ there is nothing to do. Hence we may assume
$\ell\geq 2$. Let $q \arc {a's_0} p$ be the first and $p' \arc {a''s_0} q'$ be the second  \sztra on that path. Then $L(B_{[p,p']})$ contains some 
nonsingular matrix $\vdmatrix abcd$, and therefore we have 
$a'aa''s_0\in L(B_{[q,q']})$. We must have $a\neq 0$ and $a\mid m$ because $m\neq0$. After that the flooding procedure can proceed by adding a new \tra $q \larc {a'aa''s_0} q'$, which does not change $L(B)$ because it is just a short cut of an existing path with label $\vdmatrix {a'aa''}000$.
 It is indeed new because $\ell$ was chosen to be minimal, and with $q \larc {a'aa''s_0} q'$ we can find another path which uses less \sztras to accept $ms_0$ than before.
Therefore, for $m\neq 0$, the flooding leads to an accepting path which uses a \sztra exactly once.
Note that this argument also shows that for $m=0$ the procedure must eventually find a witness 
for $\vdmatrix 0000\in L(\cB)$ and exit because the zero-matrix cannot be accepted by a path that uses exactly one \sztra.
This shows the correctness of 
the algorithm in \prref{fig:flo}.

In order to finish the proof of \prref{lem:final}, it remains to analyze its complexity. This  is done as follows. Firstly, the procedure trims $\cA$ andproduces an output $\cB$. Trimming does not change 
the accepted language and does not increase the number of states or \tras.  After that the procedure does not change the state set of~$\cB$ anymore. The number of states is less than~$N$. The set of pairs $(q,q')$ in~$\cB$ with an outgoing (resp.,~incoming) \sztra is not changed, and hence their number is less than $N^2$.
 The number of divisors $a$ of $m$ is at most $\log (m)\leq \log(2^N)$. 
 Hence, it is polynomial in~$N$. Therefore, the number of repeat loops is bounded by a polynomial in~$N$. Constructing the list 
 $\cL$ can be performed in $\DTIME(N^{\Oh(1)})$. Thus, it remains to show that the inner for-all-loop can by implemented to run in
 $\DTIME(2^{N^{\Oh(1)}})$. Within each loop we have to solve an instance of \prref{prob:m11}. We can answer \prref{prob:m11} in $\DTIME(2^{N^{\Oh(1)}})$ by 
 \prref{lem:freddy}. Therefore,
 \prref{lem:final} is proved.
\end{proof}

\subsection{Deciding the mortality problem: proof of \prref{thm:mort}}\label{sec:mort}
Recall that \prref{thm:mort} says that, given as input a
$\MQ$-NFA~$\cA$ which is flat over the monoid generated by $\GLZ\cup \Q \cup \os {s_0}$ of size 
$n=\Abbin{\cA}$, it is decidable in singly exponential time $\DTIME(2^{n^{\Oh(1)}})$  
whether $\vdmatrix 0000 \in L(\cA)$. Without restriction, we assume that
$\cA$ is trim and does not have zero-\tras.

In a preprocessing phase, we compute in polynomial time for every \tra $p\arc{h}p'$
the \SNF of its label as $h=e\vdmatrix r00{rq}f$, where $e,f\in \GLZ$, $q\in \Z$, and $0<r\in \Q$.  After that we replace the label~$h$ by $e\vdmatrix 100{q}f$ which does not change the property whether~$\vdmatrix 0000$ is accepted. Splitting each \tra into at most three \tras
we obtain an NFA~$\cA'$ of size~$N$ which is polynomial in~$n$ such that every \tra has its label in $\GLZ \cup \set{s_q}{q\in \Z}$.
The NFA $\cA'$ is a $\MZ$-NFA which is flat over the set $\GLZ \cup\os{s_0}$.
Since $N=\Abbin{\cA'}$ is polynomial in~$n$, we rename $\cA'$ as~$\cA$ and assume that $n=N$.

After this preprocessing, we run the procedure 
{\sc Flooding}{($m,\cA$)} which, assuming~$n=N$, 
stops in time $2^{n^{\Oh(1)}}$. Recall that if the procedure did not exit 
with the answer $\vdmatrix 0000\in L(\cA)$, then we must have $\vdmatrix 0000\notin L(\cA)$ because otherwise
 it would accept 
$\vdmatrix 0000$ using a path where a rank-$1$ matrix appears at most once and all other labels are invertible matrices, which is impossible.
Therefore, \prref{thm:mort} is shown. \qed

\subsection{Proof of \prref{thm:sing}}\label{sec:sing}
\prref{thm:sing} states that given as inputs a matrix $g\in \MQ$ and a $\MQ$-NFA 
 $\cA$ which is flat over the monoid generated by $\GLZ\cup \set{r\in \Q}{r>1} \cup \os{0,s_0}$, 
it is decidable whether $g\in L(\cA)$
in $\DTIME(2^{2^{n^{\Oh(1)}}})$ with respect to $n=\Abbin g + \Abbin \cA$. 
Thanks to \prref{thm:nsmflat} and \prref{thm:mort}, it is enough 
to prove \prref{thm:sing} when the input 
$g$ is singular but not zero. As usual, we may assume that~$\cA$ is a trim~$\MQ$-NFA without any zero-\tra and which is flat over the monoid generated by 
$\GLZ\cup \set{r\in \Q}{r>1}\cup \os {s_0}$. Note that the assertion of
\prref{thm:sing} does not change if we replace~$n$ by some $n'\in n^{\Oh(1)}$. This allows us to rename $n'$ as $n$ whenever convenient.

As in the proof of \prref{thm:mort}, we start with a preprocessing phase. 
We begin by computing in polynomial time the \SNF of the target matrix $g=e_g\vdmatrix {r_g}000 f_g$ with $0< r_g\in \Q$.  
Multiplying~$g$ by the denominator of $r_g$ and changing $\cA$ by adding to it new initial and final \tras with labels $\oi e_g$ and $\oi f_g$, respectively, we assume without restriction that $g=\vdmatrix {m_g}000={m_g}s_0$
with $1<{m_g}\in \N$ and that the modified NFA is still called $\cA$ with $n=\Abbin{\cA}$.
Next, 
we compute for each \tra $p\arc h p'$ the \SNF of~$h$ as $h=er\vdmatrix 100q f=ers_q f$ with~$e,f\in \GLZ$, $q\in \Z$, and $1<r\in \Q$. We also split the \tra $p\larc {srs_q f}p'$ into at most 3 \tras such that all labels are either in $\GLZ$ or of the form $rs_q$ with $0<r\in \Q$ and $q\in \Z$. Since the \SNF was computed in polynomial time, we can write $r$ as a fraction $r=n_r/m_r$ where~$n_r$ and~$m_r$
are positive natural numbers in $2^{n^{\Oh(1)}}$.

Again, we assume that
the NFA is still called $\cA$ with $n=\Abbin{\cA}$. Since $\cA$ is flat over the monoid generated by $\GLZ\cup \set{r\in \Q}{r>1}\cup \os {s_0}$, there are at most $n$ \tras with a label
$rs_q\notin \GLZ$ where $0< r\leq 1$.
Multiplying $g$ and the labels of these \tras 
with appropriate positive integers in $2^{n^{\Oh(1)}}$, we may assume that $2\leq r\in \N$ for all these \tras and the target matrix is changed to $g'=K{m_g}s_0$ with $K\in 2^{n^{\Oh(1)}}$. To simplify the notation, we rename $g'$ as $g$ and assume that the new automaton is called $\cA$.

This finishes the first phase in the preprocessing.
At this point we have the following situation: the target matrix $g$ is of the form $m s_0$ with $1\leq m\in \N$. The labels of $\cA$ are in either in $\GLZ \cup \os{s_0}$ or of the form  $rs_q$ with $q\in \Z$ and $1+2^{-n} \leq r\in \Q$, where $n=\Abbin{\cA}$.

For the second phase of the preprocessing, we define
a subset $\cT$ of \tras:
\begin{equation}\label{eq:Ttra}
\cT=\set{p\arc{h}p'}
{\exists k,\ell\in \N\,\exists q\in \Z:\;h = (k/\ell) s_q\text{ and }k/\ell>1}.
\end{equation}
Since we have $n=\Abbin{\cA}$, the label $h$ of every \tra in $\cA$ satisfies $\Abbin{h}<n$. Hence, if the label is $rs_q=\vdmatrix {k/\ell}{0}{0}{kq/\ell}$ with $\gcd(k,\ell)=1$, then $\Abbin{k}+\Abbin{\ell} < n$. Thus, $k< 2^n$ and $\ell<2^n$. Moreover, we also have $q<2^{n}$ according to the definitions in \prref{sec:inputsize}.

Suppose that $ms_0\in L(\cA)$. Then there is an accepting path using~$t$ transitions $\tau_j\in \cT$ such that all other \tras on that path are labeled by nonzero integer matrices. Recall that every $\tau_j$ has a label $r_js_{q_j}$ with $r_j \geq 1+2^{-n}$. Since all other matrices on the chosen accepting path have integer entries and $r_j$'s commute with all matrices, we obtain that $(1+ 2^{-n})^t\leq m < 2^n$. Since $1+t2^{-n}\leq (1+ 2^{-n})^t$, we obtain
$t2^{-n}\leq 2^n$, which means that $t\leq 2^{2n}$.

Next, we perform the following $\NTIME\big(2^{\Oh(n)}\big)$-reduction 
which defines a $(\MQ\sm\os{0})$-NFA $\cA'$ 
by guessing a sequence of~$t$ \tras $\tau_j\in \cT$ with label $r_js_{q_j}$ 
and $t+1$ subautomata $\cA_j$ of $\cA$ where all labels of \tras in $\cA_j$ belong to 
$\MZ$
such that:
\begin{align}\label{eq:tear}
g\in L(\cA)&\iff  ms_0\in L(\cA')= L(\cA_0)r_1s_{q_1}L(\cA_1) \cdots r_ts_{q_t}L(\cA_t).
\end{align}
Since each $\cA_i$ is a subautomaton of $\cA$ which does not have \tra from $\cT$, it must be flat over $\GLZ\cup \os{s_0}$. We also have $\Abbin{\cA_i}\leq n$. Recall that we have calculated each $r_j$ as a fraction $r_j=k_j/\ell_j$ where $k_j$, $\ell_j$ are nonzero natural numbers with $k_j,\ell_j<2^n$. 
Thus, in $\DTIME\big(2^{\Oh(n)}\big)$ we can construct a $\MZ$-NFA $\cA''$ such that 
(\ref{eq:tear}) becomes 
\begin{align}\label{eq:tears}
g\in L(\cA)&\iff  m(\prod_{j=1}^t \ell_j)s_0\in L(\cA'')= L(\cA_0)k_1s_{q_1}L(\cA_1) \cdots k_ts_{q_t}L(\cA_t).
\end{align}
Note that we have $\prod_{j=1}^t \ell_j\leq 2^{n2^{2n}}$. Thus, 
$\prod_{j=1}^t \ell_j$ can be very large number which needs $2^{n^{\Oh(1)}}$ bits in binary notation.
We conclude 
that we have  $\Abbin{\cA''}\leq N$ where $N\in \N$ is some computable number in $2^{\Oh(n)}$. The automaton $\cA''$ is large, but it is a $\MZ$-NFA which is flat over 
$\GLZ\cup \os{s_0}$. Hence we can call the procedure
{\sc Flooding}($\ell,\cA''$) according to \prref{fig:flo} where
$\ell =m \prod_{j=1}^t \ell_j\in 2^{N^{\Oh(1)}}$.

Since we assumed that $ms_0\in L(\cA)$, we can guess the automaton $\cA'$ in (\ref{eq:tear}) correctly
and assume that  $\ell s_0\in L(\cA'')$. The output 
of {\sc Flooding}($\ell,\cA''$) is a
$\MZ$-NFA $\cB$ which is flat over
$\GLZ\cup \Z s_0$
such that $\ell s_0\in L(\cB)$ \IFF $\ell s_0$ is accepted by some path which uses
a \sztra $\tau$ exactly once.

We guess $\tau=as_0$ and remove all other
\sztra{s} from $\cB$, which yields a sub-automaton $\cB'$ of $\cB$. Note that if $\ell s_0\in L(\cB')$, then $a$ must divide $\ell$. Hence we can assume without restriction that $a=1$. Next, 
we guess two sub-automata  $\cC_1$ and $\cC_2$ of  $\cB'$ such that 
$\cC_1$ and $\cC_2$ are both $\MZ\cap \GLQ$-NFA which are flat over 
$\GLZ$ and we have
\begin{equation}\label{eq:nzerosing}
g\in L(\cA) \iff \ell s_0 \in L(\cC_1)s_0L(\cC_2) = L(\cC_1)s_0\cdot s_0 L(\cC_2).
\end{equation}
Clearly, the assertion in (\ref{eq:nzerosing}) holds \IFF for $j\in \os{1,2}$ 
there are invertible $\MZ$ matrices $\vdmatrix {a_j}{b_j}{c_j}{d_j}\in L(\cC_j)$ with 
$\vdmatrix \ell000= \vdmatrix {a_1}{0}{c_1}{0} \vdmatrix {a_2}{b_2}{0}{0}$. 
The last equality holds \IFF $a_1a_2=\ell$, $c_1a_2=0$, and $b_2a_1=0$.
Since~$\ell\neq 0$, we conclude~$c_1=0$ and~$b_2=0$. 
There are only~$\log(\ell) \in N^{\Oh(1)}$ possibilities to write 
$a_1a_2=\ell$ in nonzero integers~$a_1$ and~$a_2$. Hence we guess them 
and the assertion in (\ref{eq:nzerosing}) is equivalent to the conjunction of the
following two assertions: 
\begin{align}
\label{eq:onesing}
\exists {b_1},{d_1}\in \Z: 
\vdmatrix {a_1}{b_1}{0}{d_1} \in L(\cC_1)\\
\label{eq:twosing}
\exists {c_2},{d_2}\in \Z:
\vdmatrix {a_2}{0}{c_2}{d_2}\in L(\cC_2)
\end{align}
Using transpositions of matrices, the assertion in (\ref{eq:twosing}) is equivalent 
to $\exists {c_2},{d_2}\in \Z:
\vdmatrix {a_2}{c_2}{0}{d_2}\in L(\cC_2^T)$, where $\cC_2^T$ is obtained 
by reversing the direction of all \tras, interchanging initial and final states, 
and by replacing every label $\vdmatrix {a}{b}{c}{d}$
by its transposition $\vdmatrix {a}{b}{c}{d}^T = \vdmatrix {a}{c}{b}{d}$.
Thus, after this observation, we only need to decide 
the assertion in (\ref{eq:onesing}). This is an instance of 
\prref{prob:ma0} which can be decided in $\DTIME(2^{N^{\Oh(1)}})\sse \DTIME(2^{2{^{n^{\Oh(1)}}}})$ by \prref{lem:freddy}.
This concludes the proof of \prref{thm:sing}.

\section{Conclusion and open problems}\label{sec:conclusion} 
The decidability of membership problems in group theory has a long history going back to the work of Dehn (and others) at the beginning of the 20th century. Of particular interest are the membership problems for 
$\GL(n,\Z)$ and $\GL(n,\Q)$ but as soon as~$n\geq 3$ various natural decision problems become undecidable, whereas the corresponding problems remain open for~$\GL(2,\Q)$.

The contributions of the paper are as follows. On a conceptual level, 
we draw the attention to the family of flat rational sets $\FRAT M S$ of a semigroup $M$ with respect to a subsemigroup~$S$. By definition, $\FRAT M S$ contains  $\RAT(S)$, and it is a subfamily of $\RAT(M)$. For us, the most interesting case is when $S=H$ is a group.\footnote{Recall that $\FRAT M S$ is polynomial closure of $\RAT(S)$ in the terminology of \cite{sch76}. However, we are not aware if his concept was used for decision problems in group theory elsewhere.} In this case $\FRAT M H$ has an inductive definition without reference to a particular presentation of~$M$ or~$H$, see \prref{thm:genfratmon}. This is a rather strong result. It has a remote analogue for finite semigroups when \Schuetz \cite{sch76} characterized
aperiodic semigroups by allowing the star over certain prefix
codes of bounded synchronization delay.

Another main contribution is the 
dichotomy stated in 
\prref{thm:nofinext}. It shows that if a subgroup~$G$ of $\GL(2,\Q)$ contains $\GL(2,\Z)$ and, in addition, a diagonal but not central matrix like 
$\vdmatrix a00d$ with~$\abs a\neq  \abs d$, then~$G$ contains a Baumslag-Solitar group $\BS(1,q)$ with $q\geq 2$ which has infinite index in~$G$. As a consequence, there is no hyperbolic subgroup in $\GL(2,\Q)$ which has $\GL(2,\Z)$ as a proper subgroup. \Ip \wrt inclusion, $\GL(2,\Z)$ is a maximal virtually free group and also a maximal hyperbolic group in $\GL(2,\Q)$. 

We have the following natural 
hierarchy of decision problems in terms of their increasing complexity: 
\begin{itemize}
\item The membership problem for \fg~subgroups.
\item The membership problem for \fg~subsemigroups.
\item The membership problem for rational subsets.
\item Inclusion of rational subsets.
\end{itemize}
For $\GL(2,\Z)$, the inclusion and hence the equality of rational subsets is decidable
because  the family $\Rat(\GL(2,\Z))$ is an effective Boolean algebra. The dichotomy implies that for any subgroup~$G$ in $\GLQ$, which is larger than
$\GL(2,\Z)$, either membership for rational subsets is decidable but 
equality of rational subsets is undecidable or, in the other case, we do not know (when the paper is written) whether membership for \fg~subgroups of~$G$ is decidable. 
These facts were the main motivation to define the notion of a flat rational sets. It pushes the positive decidability results for $\GL(2,\Z)$ further to the relative Boolean algebra 
$\FRAT{\GL(2,\Q)}{\GL(2,\Z)}$ (and beyond if we include nonsingular matrices). Using several structural results for flat rational sets, we proved our main positive decidability results in \prref{thm:nsmflat} for nonsingular matrices and in \prref{thm:mort} and \prref{thm:sing} for singular matrices.

\subsubsection*{Open problems}Potential directions for future research include the following items. 
\begin{itemize}
\item Find other applications of flat rational sets to natural membership problems. For example, when considering $\GL(2,k)$ 
where~$k$ is either an algebraic field over~$\Q$ or a function field in one variable over a finite field.
\item 
Let~$G$ be the subgroup of $\GLQ$ which is generated by $\GLZ$ and $\vdmatrix 100p$ where~$p$ is prime. Is the subgroup membership problem for~$G$ decidable?
\item Several statements of our paper contain complexity bounds but we do not know whether they are sharp. 
For example, \prref{prob:glgl} is~$\NP$-hard, but a proof for~$\NP$-completeness is still missing to date, although a recent work \cite{BellHP2023} might suggest a positive answer.
\item Is the mortality problem decidable for rational subsets of~$\MQ$? 
This problem is equivalent to the following question: given a 
$\GLQ\cap \MZ$-NFA~$\cA$, do there exist $b,c,d\in \Z$ such that
$\vdmatrix 0bcd\in L(\cA)$.
\end{itemize}

\bibliographystyle{siamplain}
\bibliography{refs}
\end{document}